\title{Category of SET orders}
\author{Tian Lan,}
\author{Gen Yue,}
\author{and Longye Wang}
\affiliation{Department of Physics, The Chinese University of Hong Kong,\\ Shatin, New Territories, Hong Kong, China}
\emailAdd{tlan@cuhk.edu.hk}
\abstract{We propose the representation principle to study physical systems with a given symmetry. In the context of symmetry enriched topological orders, we give the appropriate representation category, the category of SET orders, \GY{ which include SPT orders and symmetry breaking orders as special cases}. For fusion $n$-category symmetries, we show that the category of SET orders encodes almost all information about the interplay
between symmetry and topological orders, in a natural and canonical way. These information include defects and boundaries of SET orders, symmetry charges, explicit and spontaneous symmetry breaking, stacking of SET orders, gauging of
generalized symmetry, as well as quantum currents (SymTFT or symmetry TO). We also provide a detailed categorical algorithm to compute the generalized gauging. In particular, we proved that gauging is always reversible, as a special type of Morita-equivalence. The explicit data for ungauging, the inverse to gauging, is given.}
\begin{document}
\maketitle
\flushbottom
\section{Introduction}

We would like to propose a simple principle to study physical systems with a given symmetry

\begin{center}
\framebox{
    \parbox{.8\columnwidth}{\textbf{Representation Principle} ---\\
    Physical systems with a given symmetry form representation categories.}
}
\end{center}

This principle is widely accepted when studying the symmetry of mathematical structures, however, many physical applications seemed unsuccessful and obtained only partial results. Indeed, we need to carefully reinterpret ``symmetry'' as well as ``representation category''. The key point is to represent the symmetry in a category sufficiently complete that includes all relevant physical observables.  Simple models such as the category of Hilbert spaces, are not strong enough for such job. The unsuccessful applications of the principle are most likely due to the incompleteness of the representation category. Variants of higher categories are required, and the notion of symmetry also needs to be generalized correspondingly, to match the nature of the higher category of physical observables.

In this paper, for concreteness and mathematical rigor, we will restrict ourselves to ``topological ordered phase''~\cite{KW1405.5858,Wen89,Wen90,Kit0506438,Wen1506.05768,LKW1704.04221,LW1801.08530,Joh2003.06663} with ``finite generalized symmetry''~\cite{GKSW1412.5148,JW1905.13279,JW1912.13492,KLW+2005.14178,JW2106.02069,TW1912.02817,TW2106.12577,CW2203.03596,CW2205.06244,FMT2209.07471,ABE+2112.02092,BS2304.02660,BS2305.17159,Sha2308.00747} and elaborate on the appropriate representation category. We will show that, equipped with the power of modern mathematics, the representation category encodes almost all information about the interplay between symmetry and topological orders, i.e., symmetry enriched topological (SET) orders~\cite{CGLW1106.4772,BBCW1410.4540,MR1212.0835,CBVF1403.6491,HBFL1606.07816,CGJQ1606.08482,LKW1507.04673,LKW1602.05946,LKW1602.05936,BJLP1811.00434,KLW+2003.08898}. These information include defects and boundaries of SET orders (in particular symmetry charges), symmetry breaking, stacking of SET orders, gauging of generalized symmetry, as well as quantum currents~\cite{LZ2305.12917} (i.e., the background category in the enriched category description of quantum liquids~\cite{KZ1705.01087,CJK+1903.12334,KZ1905.04924,KZ1912.01760,LY2208.01572}, also called categorical symmetry~\cite{JW1905.13279,JW1912.13492,KLW+2005.14178,JW2106.02069}, SymTFT~\cite{ABE+2112.02092,BS2304.02660,BS2305.17159} or symmetry TO~\cite{CW2203.03596,CW2205.06244}).
We thus have a nice story in the context of SET orders, that everything can be derived from a single formula (see Definition~\ref{def.set}),
\begin{center}
\framebox{
\parbox{.3\columnwidth}{\centering $\Fun(\Sigma\cT,\cX).$}
}
\end{center}
\GY{In Newtonian mechanics, when a force $F$ acts on a point mass $m$, the point mass will obtain an acceleration $a=F/m$ according to Newtonian 2nd law. In analogue to this, when a symmetry $\cT$ acts on the underlying topological orders whose anomaly is characterized by $\cX$, all the SET orders (including SPT orders and symmetry breaking orders) form the category $\Fun(\Sigma\cT,\cX)$.} (See Table~\ref{tab.ana}.)

Although the principle is simple, the relevant mathematics is by no means simple. For the mechanics and electromagnetism case, we need the mathematics of (multi-variable) calculus and differential geometry to simplify the physics to a single formula. For the case of SET orders, the necessary mathematics is that about ``finite dimensional'' higher vector spaces, i.e., Karoubi complete (or condensation complete) higher linear categories which are dualizable~\cite{GJ1905.09566,Joh2003.06663,KZ2011.02859,KZ2107.03858}. Here Karoubi completion or condensation completion is the correct notion for the completion with respect to all topological observables. As these higher categories are the categorified version of vector spaces, we call the corresponding mathematicas ``higher linear algebra''. We give a concise introduction to higher linear algebra in Section~\ref{sec.hl}, taking a practical perspective and wishing to provide a toolbox for those interested in the applications more than the foundations of these mathematics.
% \begin{widetext}
\begin{center}
\begin{table}[ht]
\centering
    \begin{tabular}{|c|c| c|}
    \hline
         &Newtonian mechanics  &SET orders \\
       \hline
     Force/Symmetry & $F$  & $\cT$ \\
     \hline
     Matter & $m$  & $\cX$ \\
     \hline
    Acceleration/Category of SET Orders & $a=F/m$  &$\Fun(\Sigma \cT, \cX)$ \\
     \hline
     Required mathematics & Calculus & Higher linear algebra\\
     \hline
    \end{tabular}
    \caption{Analogy between Newtonian mechanics and SET orders.}
    \label{tab.ana}
\end{table}    
\end{center}
% \end{widetext}
% \begin{center}
% \begin{table}[ht]
% \centering
%     \begin{tabular}{|c|c| c| c|}
%     \hline
%          &Newtonian mechanics & Electromagnetism &SET orders \\
%        \hline
%      Fundamental formula & $F=ma$ & $d \star d A= J$ &$\Fun(\Sigma \cT, \cX)$ \\
%      \hline
%      Force/Field/Symmetry & $F$ & $A$ & $\cT$ \\
%      \hline
%      Matter & $m$ &  $J$ & $\cX$ \\
%      \hline
%      Required mathematics & Calculus & Differential geometry& Higher linear algebra\\
%      \hline
%     \end{tabular}
%     \caption{Analogy between Newtonian mechanics, electromagnetism and SET orders.}
%     \label{tab.ana}
% \end{table}    
% \end{center}

Let's begin with a finite global unitary symmetry on a quantum system with Hilbert space $V$. Such a symmetry requires us to pick global (codimension 1 in spacetime) unitary operators $\{U_g \in \End(V), g\in G\}$ whose multiplication rules are controlled by an abstract finite group $G$, 
\[ U_g U_h=U_{gh},\] 
such that the dynamics (Hamiltonian, Lagrangian, etc.) of the system is invariant with respect to $U_g$. At the (most rough) linear algebra level, we get a usual group representation $G\to\End(V),\ g\mapsto U_g$, equivalently a functor in $\Fun(BG,\ve)=:\Rep G$. However, this usual group representation neglects too much information of the system and fails to be a complete description.

For a gapped topological system, we should at least collect all the topological operators (defects, excitations, etc.). The topological operators are graded by their spacetime dimensions, and if we technically assume that the types of operators are finite (in appropriate sense), all the topological operators, in $n$+1D, of codimension $1,2,\dots,n+1$ form a (multi-)fusion $n$-category. We will denote this (multi)-fusion $n$-category of operators in a \emph{bare} (i.e., symmetry is not yet added) system, by $\cB$.

Now even if we consider the same finite group $G$ as the global symmetry, in different total spacetime dimensions, $G$ should be regarded as different symmetries. The reason is that the corresponding operators $U_g$ must be codimension 1 in spacetime, which depends on the total dimension sensitively. In order to match the spacetime dimension, we should promote $G$ to a monoidal $n$-category and consider a monoidal $n$-functor $G\to \cB$.

Next we want to upgrade $G\to \cB$ to a ``representation", in other words, we want a category $\cX$ and $\cB=\Hom_\cX(V,V)$ for some object $V\in \cX$, then $G\to \cB=\Hom_\cX(V,V)$ qualifies as a representation of $G$ on $V\in \cX$. Clearly, a minimal choice would be $\cX=\Sigma\cB=\mathrm{Kar}(B\cB)$. The corresponding representation category is then
\[ \Fun(BG, \Sigma\cB)\cong \Fun(\Sigma \ve[n]_G, \Sigma\cB).\]
Because $\Sigma\cB$ is Karoubi complete, we can replace $BG$ for the Karoubi completion of its linearized version $\Sigma\ve[n]_G$. The monoidal $n$-functor $G\rightarrow \cB$ corresponds to the representation $G\rightarrow \Hom_{\Sigma\cB}(\bullet,\bullet)=\cB$.

It is straightforward to generalize from $G$ to generalized (finite non-invertible) symmetry, a fusion $n$-category $\cT$, which abstractly encodes the multiplication rules of symmetry \emph{transformations}.
\begin{definition}\label{def.set}
  The $( n+1 )$-category of topological orders (with \emph{anomaly} $\cX\in \ve[n+2]$) enriched by symmetry (fusion $n$-category) $\cT$ is
  \[ \Fun(\Sigma\cT,\cX)=\Hom_{\ve[n+2]}(\Sigma\cT,\cX).\]
  For short, we call $\Fun(\Sigma\cT,\cX)$ the category of SET orders. It is the appropriate representation category in the context of SET order.
\end{definition}
\begin{remark}
When the symmetry $\cT$ (S in SET) and the (anomaly of) topological orders $\cX$ (T in SET) are specified, there is a unique representation category. For simplicity, we will not carry over the choice of $\cT$ and $\cX$ everywhere in our terminology. When we say ``\emph{the} category of SET orders'', we assume that both $\cT$ and $\cX$ are at least implicitly specified.
\end{remark}

The target category $\cX$ ought to be a category of physical systems. Based on the general discussion in Section~\ref{sec.hl}, we know for each object $x\in \cX$, $(\cX,x)$ is a potentially anomalous $n$+1D topological order whose defects form the (multi-)fusion $n$-category $\cB:=\Omega(\cX,x)$. For indecomposable $\cX$, the center, physically the bulk, of $\cB$ is $Z_1(\cB)=\Omega\Fun(\cX,\cX)$ and this is why we say $\cX$ characterizes the anomaly (bulk of $(\cX,x)$). The anomaly-free bare theories $\cB$ with $Z_1(\cB)=\ve[n]$ would be of particular interest, which is equivalent to picking \emph{invertible} objects $\cX\in\ve[n+2]$, i.e. $Z_0(\cX)=\Fun(\cX,\cX)=\cX\bt\cX^\op=\ve[n+1]$. When $\cX$ is invertible, it must be $\Sigma^2\cC$ for some non-degenerate braided fusion $( n-1 )$-category $\cC$, and $\cX=\Sigma^2\cC$ characterizes the Witt-class (a weaker version of anomaly) of $\cC$: for any simple $x\in \Sigma^2\cC$,  $\Omega^2(\Sigma^2\cC,x)$ is Witt-equivalent to $\cC$ and all Witt-equivalents ones are obtained this way. See Remark~\ref{rem.anomaly} for more details.

As an example, consider the 2+1D SET orders which have been extensively studied in the literature~\cite{KLW+2003.08898,BBCW1410.4540,BJLP1811.00434,CGJQ1606.08482,LKW1507.04673,LKW1602.05946,LKW1602.05936,HBFL1606.07816,CBVF1403.6491,MR1212.0835}. Since 2+1D topological orders are believed to be described by modular tensor categories (MTC), it is tempting to guess that SET orders should be described by a group $G$ represented in the 2-category $\mathbf{MTC}$ of modular tensor categories, braided functors and natural transformations, i.e., a representation corresponds to a monoidal functor $\rho: G\to \Aut^\text{br}(\cC)$ for some MTC $\cC$. Although this characterization already encodes rich information including the \emph{symmetry fractionalization}, it turns out to be another misuse of the representation principle due to incompleteness, since $\rho:G\to \Aut^\text{br}(\cC)$ can possess t'Hooft anomaly (or gauge anomaly)~\cite{BBCW1410.4540,CBVF1403.6491}. By~\cite{ENO0909.3140,BBCW1410.4540}, $\rho:G\to \Aut^\text{br}(\cC)$ can be extended to $\phi:\ve[2]_G^{O(\rho)}\to \Sigma\cC$ where $O(\rho)$ is an obstruction class depending on $\rho$ valued in $H^4(G,U(1))$. Only when $O(\rho)=1\in H^4(G,U(1))$, $\rho$ is free from 't Hooft anomaly. In this case $\Sigma\phi$ is an object in $\Fun(\Sigma\ve[2]_G,\Sigma^2\cC)$ which agrees with our general Definition~\ref{def.set}. We will also recover the case with 't Hooft anomaly, $\ve[2]_G^{\omega_4}\to \Sigma\cC$, as a defect between \GY{the $3$+1D trivial $G$-SPT, and the 3+1D $G$-SPT order labelled by $\omega_4$} (see Example~\ref{eg.3dset}). There are also alternative descriptions of 2+1D SET orders in terms of gauging, or minimal modular extensions~\cite{LKW1507.04673,LKW1602.05936,LKW1602.05946}. Our framework also covers these approaches and provides an explanation why the description in terms of gauged theories contains the same information as the original one.

A major part of this paper is devoted to the discussion of generalized gauging. 
Gauging is a somewhat mysterious procedure that turns a system with global symmetry into a gauge theory. It can be applied to both field theories and lattice systems. With the understanding that the gauged theory admits a generalized (non-invertible) symmetry exhibited by the fusion of gauge charges, one can gauge such generalized symmetry and go back to the original theory, which is called ungauging. However, the ungauging procedure is known only for some special cases (see~\cite{Vaf89,GKSW1412.5148,BT1704.02330} and references therein; see also~\cite{HBFL1606.07816} for a nice construction on lattice models). One of the main points of this paper is to set up a general framework for gauging so that it is always reversible.

Technically, to gauge a global symmetry, one first introduces a spacetime dependent gauge field that couples to the symmetry charges (promoting global symmetry to gauge symmetry) and then introduce proper interactions to make the gauge field fluctuate. Such interactions energetically prefer the states with less gauge charges and fluxes. From a macroscopic view, introducing gauge field is proliferating symmetry defects and the interactions prefer the symmetry defects to condense. Therefore, we can roughly say that gauging is the condensation of symmetry defects. This point of view is more manifest in quantum field theory, where the partition function of the gauged theory is exactly obtained by inserting symmetry defects in the original theory and summing over all symmetry defect configurations~\cite{GKSW1412.5148,RSS2204.02407}.

On the other hand, there is a parallel theory of condensation of anyons and more generally, higher dimensional topological defects, developed based on the algebra and representation theory in tensor category and higher category theories~\cite{Kon1307.8244,GJ1905.09566,KZ2011.02859,KZ2107.03858,KZZZ2403.07813}. 
From the point of view of category of SET orders $\Fun(\Sigma\cT,\cX)$, given $F\in \Fun(\Sigma\cT,\cX)$, its restriction to the distinguished object $\bullet$ gives the representation (symmetry assignment) $\Omega_\bullet F: \cT\to \Omega(\cX,F(\bullet))=:\cB$. Gauging then simply means restricting $F$ to a different object $A\in \Sigma\cT$, $\Omega_A F: \bmd A A \cT \to  \bmd{F(A)}{F(A)}\cB$, i.e., symmetry defects labeled by $A$ are condensed in the bare theory $\cB$ and one get a new bare theory $ \bmd{F(A)}{F(A)}\cB$ equipped with a new gauge symmetry $\bmd A A \cT$. Mathematically, $\bmd A A \cT$ denotes the $A$-$A$-bimodules in $\cT$, and similar for $\bmd{F(A)}{F(A)}\cB$. We will expand these details and clarify the categorical data needed to perform a gauging:
\begin{enumerate}
    \item the symmetry assignment of the original theory to be gauged and
    \item the symmetry defects to be condensed (the precise way to perform gauging).
\end{enumerate}
The first point tends to be overlooked, or implicitly assumed in microscopic constructions. We emphasize its role and show that different symmetry assignments indeed lead to different gauged theories. 
The second point is a natural generalization.
\GY{With the research on generalized symmetry in recent years, instead of just ordinary gauging or condensing all the symmetry defects, people have realized the importance of gauging generalized symmetris in different ways. For example,  non-invertible topological defects can be obtained from higher gauging of higher-form symmetry~\cite{RSS2204.02407}; SET orders and theories with generalized Kramers-Wannier dualities can be obtained by gauging a sub-group symmetry~\cite{Li2023PartialGauging,delcamp2024}; it was also realized that the Kennedy-Tasaki transformation can be viewed as a twisted gauging~\cite{LOZ2301.07899,SS2404.01369,LSY2405.14939}.
We can incorporate the condensation theory in higher categories and consider more exotic ways of gauging, including ordinary gauging, twisted gauging, and partial gauging (see Remark~\ref{rem.gtype} for a clarification on these terminologies). Note that some partial gauging can be obtained by assigning a smaller subsymmetry and completely gauging the subsymmetry, but in general partial gauging may not always arise from a subsymmetry (see Remark~\ref{rem.gsub} and Example~\ref{eg.pgnonsub}).
}
%  Typically, people discuss condensing all the symmetry defects. However, we can incorporate the condensation theory in higher categories and consider more exotic ways of gauging.
% We call the gauging where only part of the symmetry defects are condensed as a partial gauging. Note that some partial gauging can be obtained by assigning a smaller subsymmetry and fully gauging the subsymmetry, but in general partial gauging may not always arise from a subsymmetry. 
Moreover, we clarify the data for the inverse of a gauging procedure and unravel the fact that the original theory and the gauged theory are Morita equivalent. We also give a series of examples covering known variants of gauging.

\section{Higher linear algebra}\label{sec.hl}

In this section, we outline the necessary techniques used in this paper. Most mathematical results are due to~\cite{GJ1905.09566,KZ2011.02859,KZ2107.03858}; we reorganize them for a better and more concise exposition. As in~\cite{GJ1905.09566}, we use $n$-category to mean a weak $n$-category, and assume that the basic categorical formulations such as functors, (higher) natural transformations, Yoneda lemma, limits and colimits, etc., are ready for $n$-categories. We will pack all the information categorically into hom-categories or functor categories without explicit referring to the higher structures and coherence conditions. When a concrete model is present ($n=2,3$, i.e., bicategory and tricategory), we can work explicitly with examples.
Given an $n$-category $\cC$ and two $k$-morphisms $A,B$ in $\cC$, we use $\Hom_\cC(A,B)$ to denote the $(n-k-1)$-category of morphisms from $A$ to $B$. $\cC^{\op k}$ denotes the $n$-category obtained from $\cC$ by reversing all the $k$-morphisms, and in particular $\cC^\op$ means $\cC^{\op1}$.

Unless otherwise specified, all $n$-categories are assumed to be Karoubi complete $\C$-linear~\cite{GJ1905.09566,KZ2011.02859} $n$-categories, and functors are linear higher functors. 
We take the perspective that everything happens in the higher vector spaces, and we are simply doing higher linear algebra.

    Let $\mathrm{KarCat_n}$ denote the $(n+1)$-category of Karoubi-complete $\C$-linear $n$-categories. $\mathrm{KarCat_n}$ admits a symmetric monoidal structure, denoted by $\bt$, which can be characterized by the following universal property: For $\cC,\cD\in\mathrm{KarCat_n}$, $\cC\bt\cD$ is the object representing the functor $\Hom_{\mathrm{KarCat_n}}(\cC,\Hom_{\mathrm{KarCat_n}}(\cD,-)),$
    \begin{align*}
    \forall \cX\in\mathrm{KarCat_n}, \quad &\Hom_{\mathrm{KarCat_n}}(\cC\bt\cD,\cX)
    % \\ &
    \cong \Hom_{\mathrm{KarCat_n}}(\cC,\Hom_{\mathrm{KarCat_n}}(\cD,\cX)).
    \end{align*}
\begin{definition}
    An $(n+1)$-vector space, or separable $n$-category~\cite{KZ2011.02859}, is a Karoubi complete $\C$-linear $n$-category which is dualizable with respect to $\bt$. The full subcategory of $\mathrm{KarCat_n}$ consisting of dualizable objects is denoted by $\ve[n+1]$.
    %\footnote{As we will deal with $n$-vector spaces for arbitrary $n$, we choose the compact notation $\ve[n+1]$ over $(n+1)\mathbf{Vec}$.}
     For two $(n+1)$-vector spaces $\cC,\cD$, we denote by $\Fun(\cC,\cD):=\Hom_{\ve[n+1]}(\cC,\cD)$ the $n$-category of linear functors.
\end{definition}
\begin{remark}
    A $1$-vector space is a usual finite dimensional vector space over $\C$. A $2$-vector space is a finite semi-simple $\C$-linear 1-category. 
\end{remark}
Many properties of $1$-vector spaces generalize to higher cases~\cite{KZ2011.02859}:
\begin{proposition}
Let $\cC,\cD$ be $(n+1)$-vector spaces:
    \begin{enumerate}
        \item $\Fun(\cC,\ve[n])$ is dual to $\cC$, i.e., the pairing $\Fun(\cC,\ve[n])\bt \cC\to \ve[n]$, $F\bt V\mapsto F(V)$ has a copairing $\ve[n]\to\cC\bt\Fun(\cC,\ve[n])$ satisfying zig-zag identities (and higher coherence conditions).
        \item $\Fun(\ve[n],\cC)\cong \cC$.
        \item $\cC^\op \cong \Fun(\cC,\ve[n]),$ where the equivalence is exactly the Yoneda embedding.
        \item $\Fun(\cC,\cD)\cong \cD\bt\cC^\op$.
        \item $\ve[n]$ is the unit with respect to $\bt$, $\ve[n]\bt\cC\cong \cC\bt \ve[n]\cong \cC$.
    \end{enumerate}
\end{proposition}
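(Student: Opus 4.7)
The plan is to prove the five items in an order that exploits their dependencies: (2) and (5) nearly tautologically, (1) from the intrinsic dualizability of $\cC$, (4) from (1) and (3), with (3) as the main substantive content.

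For (2), this expresses the universal property of $\ve[n]$ as (essentially) the free Karoubi-complete $\C$-linear $n$-category on one object; evaluation at the unit gives $\Fun(\ve[n], \cC) \to \cC$, $F \mapsto F(\one)$, whose inverse assigns to $X\in\cC$ the essentially unique linear extension of $\one \mapsto X$. For (5), the universal property of $\bt$ together with (2) yields
\[ \Hom(\ve[n]\bt\cC, \cX) \cong \Hom(\ve[n], \Hom(\cC, \cX)) \cong \Hom(\cC, \cX), \]
naturally in $\cX\in\ve[n+1]$, so Yoneda in $\ve[n+1]$ gives $\ve[n]\bt\cC \cong \cC$.

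For (1), dualizability of $\cC$ supplies a dual $\cC^\vee$ characterized by $\Hom(\cX\bt\cC, \cY) \cong \Hom(\cX, \cY\bt\cC^\vee)$. Setting $\cY = \ve[n]$ and applying (5) gives $\Hom(\cX\bt\cC, \ve[n]) \cong \Hom(\cX, \cC^\vee)$, while the universal property of $\bt$ independently rewrites the same object as $\Hom(\cX, \Fun(\cC, \ve[n]))$. Another Yoneda application identifies $\cC^\vee \cong \Fun(\cC, \ve[n])$; under this identification the abstract evaluation becomes the tautological pairing $F\bt V\mapsto F(V)$, the copairing is the transported coevaluation, and the zig-zag identities descend automatically from the dualizability data. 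For (4) the parallel computation reads $\Hom(\cX, \Fun(\cC, \cD)) \cong \Hom(\cX\bt\cC, \cD) \cong \Hom(\cX, \cD\bt\cC^\vee)$, so $\Fun(\cC, \cD) \cong \cD\bt\cC^\vee$; invoking (3) replaces $\cC^\vee$ by $\cC^\op$.

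The main obstacle is (3). The candidate equivalence is the Yoneda embedding $y\colon \cC^\op \to \Fun(\cC, \ve[n])$, $V\mapsto \Hom_\cC(V, -)$, whose full faithfulness follows from the higher Yoneda lemma in considerable generality. The substantive content is essential surjectivity, i.e.\ that every linear functor $F\colon \cC\to \ve[n]$ is representable; this is exactly where the separability (equivalently $\bt$-dualizability) of $\cC$ enters, ruling out the pathological non-representable functors that exist for general Karoubi-complete $\C$-linear $n$-categories. I would proceed by induction on $n$: the base case $n = 1$ is the classical fact that a linear functor from a finite semisimple $\C$-linear category to $\ve$ is representable by the sum of its values on simple objects. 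For the inductive step, Karoubi completeness lets one reconstruct $F$ as a condensation of its restrictions to simple objects of $\cC$; reducing representability on each simple to a representability statement one categorical level lower, on the relevant hom-$(n-1)$-categories, closes the induction. Making this reduction coherent at all higher levels is the delicate step, and I would lean on the machinery developed in~\cite{GJ1905.09566,KZ2011.02859}.
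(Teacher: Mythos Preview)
The paper does not prove this proposition: it states the result, attributes it to \cite{KZ2011.02859}, and remarks only that the $n=0$ case reduces to familiar facts about finite-dimensional vector spaces. Your proposal is therefore already more detailed than anything the paper offers.

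Your argument is structurally sound. Items (2), (5), (1), and (4) follow cleanly from the universal property of $\bt$ and the definition of $\ve[n+1]$ as the dualizable objects, exactly as you write. For (3), your diagnosis is correct: full faithfulness is the higher Yoneda lemma, and essential surjectivity is where separability does the work. Your inductive sketch points in the right direction, but as you acknowledge, the coherent reduction to one categorical level lower is precisely the nontrivial content developed in \cite{GJ1905.09566,KZ2011.02859}; in the latter reference the equivalence $\cC^\op\cong\Fun(\cC,\ve[n])$ is essentially built into the characterization of separable $n$-categories rather than proved from a more primitive definition. So ``leaning on the machinery'' is not a gap in your argument so much as an honest acknowledgment of where the proof actually lives---which is also what the paper does by citing the result rather than proving it.
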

It is understood that $\ve[0]:=\C$. The above for $n=0$ are well known and useful results of finite dimensional vector spaces.

    We give one physical interpretation\footnote{It is not necessarily the only interpretation.} to the $(n+1)$-category $\ve[n+1]$ of $(n+1)$-vector spaces. Before going in to algebraic details, we can think the objects and (higher) morphisms just as labels or names of topological (gapped) defects of various dimensions. There are objects (0-morphisms), 1-morphisms, till $( n+1 )$-morphisms, so the spacetime dimension should also be $n+1$. A $k$-morphism labels a codimension $k$ defect:
    \begin{itemize}
        \item  An object $\cX$ in $\ve[n+1]$ labels an $n$+1D ``defect" (actually an $n$+1D phase).
        \item A 1-morphism $f\in\Fun(\cX,\cY)=\Hom_{\ve[n+1]}(\cX,\cY)$ labels an $n$D defect between $\cX$ and $\cY$.
        \item A 2-morphism $\alpha\in \Hom_{\ve[n+1]}(f,g)$ labels an $n{-}1$D defect between $f$ and $g$;
        \item So on and so forth, an $(n+1)$-morphism labels a $0$D defect in spacetime, which is physically a (equivalent class of) local operator.
    \end{itemize}
     In particular $\ve[n]$ labels the trivial phase, or the ``vacuum". The tensor product $\bt$ is physically stacking of defects. By inductive construction, $\Hom(A\bt B, X\bt Y)=\Hom(A,X)\bt\Hom(B,Y).$ Due to the property $\Fun(\ve[n],\cC)\cong \cC$, we also have physical interpretations to the object and morphisms \emph{inside} $\cC$, i.e.:
     \begin{itemize}
         \item An object $x\in \cC$ is identified with a 1-morphism $x\in \Fun(\ve[n],\cC)$, and thus a defect between $\ve[n]$ and $\cC$, in other words, a topological boundary condition of $\cC$.
         \item Higher morphisms are similarly higher codimensional defects on the boundary of $\cC$.
     \end{itemize}
     Therefore, we can say that an $(n+1)$-vector space $\cC$ is an $n$-category of boundary conditions of $\cC$.
     It is inspiring to rethink $\ve:=\ve[1]$, the usual linear algebra, or finite dimensional quantum mechanics, in such picture. A finite dimensional Hilbert space $V$ is an object in $\ve$, and it really labels a $0$+1D quantum system. The quantum states in $V$ are identified with the operators from $\C$ to $V$, and thus the $0$+0D boundary conditions of $V$. Informally speaking, we already have boundary-bulk correspondence (see also Remark~\ref{rem.center}) built-in in quantum mechanics.

The main feature by which higher vector spaces contrast with $1$-vector space is that the shift of categorical level via looping and delooping, which is also closely related to the hierarchy of monoidal structure.
\begin{definition}
    [$E_k$-multi-fusion $n$-category]
    An $E_0$-multi-fusion $n$-category $(\cC,\bullet)$ is an $(n+1)$-vector space $\cC$ with a distinguished object $\bullet\in \cC$.
    An $E_0$-monoidal functor from $(\cC,\bullet)$ to $(\cD,\star)$ is a functor $F$ from $\cC$ to $\cD$ such that $F(\bullet)=\star$, and the $n$-category of $E_0$-monoidal functors is denoted by $\Fun^{E_0}((\cC,\bullet),(\cD,\star)).$ 
    
    Let $\id^0_\bullet=\bullet,$ and iteratively, $  \id^{k+1}_\bullet:=\id_{\id^k_\bullet}$. 
    For $k\geq1$, the $k$-th looping of $(\cC,\bullet)$ is defined to be $\Omega^k\cC:=\Hom_\cC(\id^{k-1}_\bullet,\id^{k-1}_\bullet).$ We take the convention that the $0$th looping $\Omega^0\cC:=\cC$.
    To ease the notation, in the following we use the same symbol $\bullet$ for the distinguished object or just omit it and write $\cC$ when no confusion arises.
    
    Denote by $\Omega^k\cC|_\bullet$ the (Karoubi complete) full subcategory of $\Omega^k\cC$ generated by the distinguished object $\id^k_\bullet$ via direct sum and condensation.
    For $k\geq 1$, an $E_k$-multi-fusion $n$-category is a pair $(\cB,\cP)_n^k$ where $\cB$ is an $(n+1)$-vector space and $\cP$ is an $E_0$-multi-fusion $(n+k)$-category satisfying $\cB=\Omega^k\cP, \Omega^l\cP=\Omega^l\cP|_\bullet,\ \forall 0\leq l<k$. The $E_k$-monoidal functors is defined to be $\Fun^{E_k}((\cB,\cP)_n^k,(\cB',\cP')_n^k):=\Fun^{E_0}(\cP,\cP')$. $(\cB,\cP)_n^k$ is called fusion when $\bullet$, or equivalently $\id_\bullet^l$ for all $0\leq l\leq n+k$, is simple. Again, by abusing notation, we may simply write $\cB$ instead of $(\cB,\cP)_n^k$. We may also drop the prefix $E_k$ when $k=1$.  $\id^k_\bullet$ is called the tensor unit of $\cB$ and may be denoted by $\one_\cB$. There is an obvious forgetful functor $(\cB,\cP)_n^k\mapsto (\cB,\Omega^{k-l}\cP)_n^l$ allowing us to view an $E_k$-multi-fusion $n$-category as an $E_l$-multi-fusion $n$-category for $0\leq l <k$. 
    
     For $k\geq1$, the delooping  (or suspension) of an $E_k$-multi-fusion $n$-category $(\cB,\cP)_n^k$, denoted by $\Sigma(\cB,\cP)_n^k$, or simply $\Sigma\cB$,
    is the $E_{k-1}$-multi-fusion $(n+1)$-category $(\Omega^{k-1}\cP,\cP)_{n+1}^{k-1}$. Its looping, denoted by $\Omega(\cB,\cP)_n^k$ or simply $\Omega\cB$, is the $E_{k+1}$-multi-fusion $(n-1)$-category $(\Omega^{k+1}\cP,\cP|_{\id^k_\bullet})_{n-1}^{k+1}$, where $\cP|_{\id^k_\bullet}$ denotes the subcategory of $\cP$ by dropping $k$-morphisms not generated by $\id^k_\bullet$ so that $\Omega^k (\cP|_{\id^k_\bullet})=\Omega^k \cP|_\bullet$. By definition, an $E_k$-multi-fusion $n$-category $(\cB,\cP)_n^k$ satisfies $\cP=\Sigma^k \cB$.
\end{definition}

\begin{remark}
    The above definition assumes that the higher vector spaces are already constructed. Alternatively, one can construct $\Sigma \cB$ via $\mathrm{Kar}(B\cB)$, i.e., take the Karoubi-completion of the one-point delooping of $\cB$ ($B\cB$ is a higher category with only one object $\bullet$ and $\Hom_{B\cB}(\bullet,\bullet)=\cB$). $\Sigma\cB = \mathrm{Kar}(B\cB)$ is also called the condensation completion of $\cB$.
\end{remark}

\begin{proposition} $\Omega$ is the left inverse of $\Sigma$: $\Omega\Sigma\cB=\cB$. $\Sigma$ is left adjoint to $\Omega$:
       $$\Fun^{E_{k-1}}(\Sigma\cB,\cC)=\Fun^{E_k}(\cB,\Omega\cC).$$
\end{proposition}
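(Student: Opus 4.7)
The plan is to unfold both statements against the bookkeeping $(\cB,\cP)^k_n$ with $\cP=\Sigma^k\cB$, and reduce everything to the single structural fact that $\cP|_{\id^{k-1}_\bullet}=\cP$. By definition $\Sigma\cB=(\Omega^{k-1}\cP,\cP)^{k-1}_{n+1}$, and applying the $\Omega$-formula to it (with $n\to n+1$, $k\to k-1$) yields $\Omega\Sigma\cB=(\Omega^k\cP,\,\cP|_{\id^{k-1}_\bullet})^k_n$; comparison with $\cB=(\Omega^k\cP,\cP)^k_n$ shows that $\Omega\Sigma\cB=\cB$ is equivalent to $\cP|_{\id^{k-1}_\bullet}=\cP$. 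For the adjunction, I would write $\cC=(\cC_0,\cQ)^{k-1}_{n+1}$ and use the rule $\Fun^{E_j}((\cA,\cR)^j,(\cA',\cR')^j)=\Fun^{E_0}(\cR,\cR')$ on both sides to rewrite
\[
\Fun^{E_{k-1}}(\Sigma\cB,\cC)=\Fun^{E_0}(\cP,\cQ),\qquad \Fun^{E_k}(\cB,\Omega\cC)=\Fun^{E_0}(\cP,\cQ|_{\id^{k-1}_\bullet}),
\]
using $\Omega\cC=(\Omega^k\cQ,\cQ|_{\id^{k-1}_\bullet})^k_n$ on the right. The inclusion $\cQ|_{\id^{k-1}_\bullet}\hookrightarrow\cQ$ gives one direction, and the adjunction follows if every $E_0$-functor $F\colon\cP\to\cQ$ factors through $\cQ|_{\id^{k-1}_\bullet}$.

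Both reductions are then completed by the same argument. An $E_0$-functor preserves the distinguished object and hence the entire identity tower, $F(\id^l_\bullet)=\id^l_\bullet$, and it preserves direct sums and condensations because it is linear between Karoubi complete higher categories. On the other hand, the $E_k$-multi-fusion conditions $\Omega^l\cP=\Omega^l\cP|_\bullet$ for $0\le l<k$ say that at every level below $k$ the endomorphism $(l+1)$-morphisms of the identity tower are Karoubi-generated from $\id^{l+1}_\bullet$. Iterating through this tower and using the fact that hom-categories between objects of a Karoubi closure are themselves Karoubi-generated from homs between generators, I would conclude that every $(k-1)$-morphism of $\cP$, not merely those between identity towers, is generated from $\id^{k-1}_\bullet$. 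Applied to $\cP$ itself this gives $\cP|_{\id^{k-1}_\bullet}=\cP$, and applied to the image of any $E_0$-functor $F$ it forces $F$ to land in $\cQ|_{\id^{k-1}_\bullet}$, producing the inverse to the inclusion-induced map.

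The main obstacle is precisely this propagation step: turning the pointwise identity-tower generation conditions $\Omega^l\cP=\Omega^l\cP|_\bullet$ into the global statement that all $(k-1)$-morphisms of $\cP$ between arbitrary pairs of objects are Karoubi-generated from $\id^{k-1}_\bullet$. This requires an induction walking up the categorical dimensions, carefully using Karoubi and condensation completion at each step, and is where one really leans on the higher-linear-algebra machinery of Section~\ref{sec.hl}. Once this lemma is in place, both halves of the proposition follow without further work, and the adjunction is manifestly natural in $\cC$ since every step is defined by universal constructions.
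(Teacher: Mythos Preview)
Your proposal is correct and follows essentially the same route as the paper. You unfold both sides to $\Fun^{E_0}(\cP,\cQ)$ versus $\Fun^{E_0}(\cP,\cQ|_{\id^{k-1}_\bullet})$ and argue that any $E_0$-functor out of $\cP$ lands in $\cQ|_{\id^{k-1}_\bullet}$ because $F$ preserves the identity tower and Karoubi generation; the paper does exactly this, writing $F(\Omega^l\cP)=F(\Omega^l\cP|_\bullet)\subset\Omega^l\cQ|_\bullet$ for $1\le l\le k-1$ and then replacing $\cQ$ by $\cQ|_{\id^{k-1}_\bullet}$.

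The one difference is emphasis. You isolate the ``propagation'' step---that the conditions $\Omega^l\cP=\Omega^l\cP|_\bullet$, a priori only about endomorphisms of the identity tower, force \emph{all} $(k-1)$-morphisms of $\cP$ to be Karoubi-generated from $\id^{k-1}_\bullet$---and flag it as the place where the higher-linear-algebra input is actually used. The paper treats this step as obvious (it declares $\Omega\Sigma\cB=\cB$ ``obvious'' and leaves the passage from $F(\Omega^l\cP)\subset\Omega^l\cQ|_\bullet$ to factoring through $\cQ|_{\id^{k-1}_\bullet}$ implicit). Your version is more honest about where the content lies; the paper's is terser. Neither introduces a different idea, and the inductive argument you sketch (walk up the categorical levels, using that homs between Karoubi-generated objects are themselves Karoubi-generated from homs between generators) is the standard way to make this rigorous.
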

\begin{proof}
     $\Omega\Sigma\cB=\cB$ is obvious. By definition $\Fun^{E_{k-1}}(\Sigma(\cB,\cP)_n^k, (\cC,\cQ)_{n+1}^{k-1})=\Fun^{E_0}(\cP,\cQ)$. An $E_0$-monoidal functor $F$ preserves the distinguished object, thus $F(\Omega^l\cP)=F(\Omega^l\cP|_\bullet)$ falls in $\Omega^l\cQ|_\bullet$ for $1\leq l\leq k-1$, and we have the same functor category after replacing $Q$ for the subcategory $\cQ|_{\id^{k-1}_\bullet}$:  
     \begin{align*} &\Fun^{E_{k-1}}(\Sigma(\cB,\cP)_n^k, (\cC,\cQ)_{n+1}^{k-1})\\
      &=\Fun^{E_0}(\cP,\cQ)=\Fun^{E_0}(\cP,\cQ|_{\id^{k-1}_\bullet})
     \\&=\Fun^{E_k}((\Omega^k\cP,\cP)_n^k,(\Omega^k\cQ,\cQ|_{\id^{k-1}_\bullet})_n^k)\\&
     =\Fun^{E_k}((\cB,\cP)_n^k,\Omega(\cC,\cQ)_{n+1}^{k-1}).
     \end{align*}
\end{proof}
\begin{remark}
  Given an $(n+1)$-vector space $\cX$, and an object $x\in \cX$, in general $\Sigma\Omega(\cX,x) \neq \cX$. It is a technical convenient assumption, that when $\cX$ is indecomposable (not the direct sum of two non-zero $(n+1)$-vector spaces), for any non-zero object $x\in \cX$ one has $\Sigma\Omega(\cX,x)=\cX$. For a generic $\cX$, suppose $\cX=\oplus_i \cX_i$ where each $\cX_i$ is indecomposable, then one can choose non-zero objects $x_i\in\cX_i$ for each $i$, and $\Sigma\Omega(\cX,\oplus_i x_i)=\cX$.
\end{remark}
Following~\cite{KZ2011.02859}, given an $E_k$-multi-fusion $n$-category $(\cB,\cP)_n^k$ and $-k+1\leq l\leq 0$, we denote by $\cB^{\op l}$ the $E_k$-multi-fusion $n$-category $(\cB,\cP^{\op (k+l)})_n^k$. In particular $\cB^\rev:=\cB^{\op0}$, $\ov\cB:=\cB^{\op(-1)}$, meaning reversing the tensor product and braiding respectively. Using delooping, $\Sigma^k(\cB^{\op l})=(\Sigma \cB)^{\op (k+l)},$ in particular, $\Sigma (\cB^\rev)=(\Sigma\cB)^\op$, $\Sigma^2 \ov\cB= (\Sigma^2\cB)^\op=\Sigma((\Sigma\cB)^\rev).$
\begin{remark}
    $(\ve[n],(\ve[n+k],\bullet=\ve[ n+k-1 ]) )_n^k$ is an $E_k$-fusion $n$-category for arbitrary $k$. Indeed, $\ve[n]$ is an $E_\infty$-fusion (symmetric) $n$-category. We also have $\Sigma (\ve[n])=\ve[n+1]$, $\ve[n]=\Sigma^n\C$, $\ve[n]^{\op l}=\ve[n]$ and $\Omega(\cC\bt\cD)=\Omega\cC\bt\Omega\cD, \ \Sigma(\cC\bt\cD)=\Sigma\cC\bt\Sigma\cD$. Moreover, for any symmetric fusion category, such as $\Rep G$, or $\sve$ the category of super vector spaces, we can consider its iterated delooping, denoted by $\Rep[n]G:=\Sigma^{n-1}\Rep G$ or $\sve[n]:=\Sigma^{n-1}\sve$. 
\end{remark}
\begin{definition}
    [$E_k$-center]
    The $E_0$-center of an $E_0$-multi-fusion $n$-category $(\cC,\bullet)$ is defined to be $Z_0(\cC):=\Fun(\cC,\cC)=\Omega( \ve[n+1],\cC )$, which is a multi-fusion $n$-category. The $E_k$-center of an $E_k$-multi-fusion $n$-category $(\cB,\cP)_n^k$ is the $E_{k+1}$-multi-fusion $n$-category $Z_k(\cB):=\Omega^k Z_0(\cP)=\Omega^k Z_0(\Sigma^k\cB)=\Omega^k \Fun(\Sigma^k\cB,\Sigma^k\cB)$.
\end{definition}
\begin{remark}\label{rem.center}
    One physical interpretation of center is to compute the bulk. More precisely, an object $\cX\in \ve[n+1]$ is a dualizable Karoubi complete $\ve[n]$-module. Without losing generality, we may suppose that $\cX$ is indecomposible.
    Physically, $\cX$ labels an anomaly-free $n$+1D topological order, and every object $x\in \cX\cong \Fun(\ve[n],\cX)$ labels a gapped boundary condition of $\cX$.
    The boundary-bulk relation is automatically satisfied: For any object (boundary condition) $x\in \cX$, we have
    \begin{align}
    &\Omega Z_0(\cX)=\Omega Z_0(\Sigma\Hom_\cX(x,x))\nonumber\\
    &=Z_1(\Hom_\cX(x,x))=Z_1(\Omega(\cX,x)).
    \label{eq.bb}
    \end{align}
   where $Z_0(\cX)\equiv \Fun(\cX,\cX)$ is the $E_1$-multi-fusion $n$-category of codimension-1 and higher defects of the $n$+1D bulk.,  $\Omega Z_0(\cX)$ is the $E_2$-multi-fusion $(n-1)$-category of codimension-$2$ and higher defects in the $n$+1D bulk, and $\Hom_\cX(x,x)$ is $E_1$-multi-fusion $(n-1)$-category of codimension-1 defects on the $n$D boundary. We see that the $E_1$-center takes the defects on the boundary as input and computes the defects of the same dimension in the bulk as output. We can also verify that the bulk of bulk is trivial:  $Z_0(\cX)=\Fun(\cX,\cX)=\Omega(\ve[n+1],\cX)$,  $Z_1(Z_0(\cX)) = \Omega Z_0(\ve[n+1])=\ve[n]$.
    
    In short, an object in $\ve[n+1]$ describes a collection of potentially anomalous $n$D topological orders with the same bulk, or by boundary-bulk relation, an anomaly-free $n$+1D topological order with defects $\Fun(\cX,\cX)$ whose gappable boundary conditions are objects in $\cX$. 
    
    % More generally, an $E_k$-multi-fusion $n$-category $\cB$ describes codimension $k$ defects in $\Sigma^k\cB$, or $n$D defects on the boundary. By delooping of \eqref{eq.bb}, the $E_k$-center $Z_k(\cB)=\Omega^k Z_0(\Sigma^k\cB)$ computes the $n$D defects in the bulk. We can as well think $Z_0(\cX)$ as the $E_0$ version of computing codimension-1 defects (same dimension as the boundary $\cX$) in the bulk.

More generally, let $\cB$ be an $E_k$-multi-fusion $(n-k)$-category ($n\geq k\geq 1$),  $\Sigma^k\cB\in \ve[n+1]$ labels an $n$+1D bulk. Then $\cB = \Omega^k\Sigma^k\cB=\Omega^{k-1}\Hom_{\Sigma^k\cB}(\bullet,\bullet)$ is the category of codimension-$k$ defects on the canonical boundary $\bullet$. By delooping of \eqref{eq.bb}, the $E_k$-center $Z_k(\cB)=\Omega^k Z_0(\Sigma^k\cB)$ computes the codimension-$(k+1)$ defects in the $n$+1D bulk, which are of the same spacetime dimension ($n+1-(k+1)=n-k$) as the codimension-$k$ defects on the $n$D boundary. We can as well think $Z_0(\cX)$ as the $E_0$ version of computing codimension-1 defects in the bulk, which are of the same dimension as the boundary conditions (objects in $\cX$).
\end{remark}
\begin{remark}\label{rem.nonchiral}
Following~\cite{KZ2107.03858}, we call a {multi-}fusion $n$-category $\cB$ non-chiral when $\cB  =Z_0(\cC)$ for some $(n+1)$-vector space $\cC$ . In the higher linear algebra picture, $\cB$ is non-chiral if and only if $\Sigma\cB\cong \ve[n+1]$. For one direction, if $\cB$ is non-chiral, then $\Sigma\cB= \Sigma Z_0(\cC)=\Sigma \Omega({\ve[n+1]},\cC) = \ve[n+1]$. For the other direction, if $\Sigma\cB\cong \ve[n+1]$, let $\cC\in \ve[n+1]$ corresponding to $\bullet\in \Sigma\cB$, then $\cB = \Omega\Sigma\cB = \Hom_{\Sigma\cB}(\bullet,\bullet)\cong \Hom_{\ve[n+1]}(\cC,\cC)=Z_0(\cC)$. Similarly, an (braided) $E_2$-fusion $n$-category $\cM$ is called non-chiral when $\cM=Z_1(\cC)$ {for some fusion $n$-category $\cC$} (note that this condition is stronger than zero chiral central charge for $n=1$), which is equivalent to $\Sigma^2\cM\cong \ve[ n+2 ]$. {On the one hand, if $\cM = Z_1(\cC)$ for some fusion $n$-category $\cC$, then $\Sigma \cM=\Sigma Z_1(\cC) = Z_0(\Sigma \cC)$. The second equal sign is from the fact that the condensation completion of a fusion $n$-category is indecomposable. From the above discussion, $\Sigma^2\cM = \ve[n+2]$ since $\Sigma\cM$ is non-chiral. On the other hand, if $\Sigma^2\cM \cong \ve[n+2]$, let $\cX\in \ve[n+2]$ be the image of $\bullet\in \Sigma^2\cM$ under the equivalence. $\cX$ is indecomposable since $\cM$ is of fusion type. Then, $\Sigma \cM=\Fun(\cX,\cX)$   and $\cM = \Omega\Fun(\cX,\cX) = Z_1(\Hom_{\cX}(x,x))$. $\Hom_{\cX}(x,x)$ is of fusion type if $x\in\cX$ is simple.} 
\end{remark}

    \begin{example}
    To describe the gapped boundary of $2$+1D toric code, one can take $\Rep[2]\mathbb Z_2 \in \ve[3]$. Objects in $\Rep[2]\mathbb Z_2$ can be considered as separable algebras in $\Rep\mathbb Z_2=\{\one,e\}$, there are two different kinds of algebras $A_1=\one, \ A_2=\one\oplus e$ corresponding to two gapped boundaries of toric code.  We then have $\Omega \Fun(\Rep[2]\mathbb Z_2, \Rep[2] \mathbb Z_2)= Z_1(\Rep \mathbb Z_2)$  as the modular tensor $1$-category of particle-like excitations in the bulk toric code, and 
    $\Hom_{\Rep[2]\mathbb Z_2}(A_i,A_i)\cong{}_{A_i}({\Rep{\mathbb Z_2}})_{A_i}\cong \Rep \mathbb Z_2$ (see for example \cite{LZ2305.12917}) as the fusion $1$-category of particle like excitations on the gapped boundary. They satisfy the boundary-bulk relation automatically.

    \end{example}
    \begin{example}
        Consider a modular tensor category $\cM$ describing the particles of an anomaly-free, but potentially chiral, $2$+1D topological order. As $\cM$ may not have a gapped boundary, we need to go one higher dimension and consider the $2$+1D topological order as the gapped boundary of the trivial $3$+1D bulk. Then we can take $\cX=\Sigma^2 \cM \in \ve[4]$, and the boundary condition $x=\bullet\in \cX$. Since the topological order is anomaly-free, we have $Z_1(\Sigma \cM) = \ve[2]$, then $Z_0(\cX)=Z_0(\Sigma ^2  \cM)=\Sigma Z_1(\Sigma \cM) = \ve[3]$, which is the correct fusion 3-category of codimension-1 defect in the trivial $3$+1D bulk, and $\Omega(\cX,x)=\Hom_{\Sigma^2 \cM}(\bullet,\bullet) = \Sigma\cM$ is the correct fusion 2-category of codimension-1 defects of the $2$+1D anomaly-free topological order. 
    \end{example}

% \textcolor{blue}{For an anomaly free example, to describe the $2$+1D toric code, one takes $X=3\ve\in 4\ve$, and $x=\Rep[2] \mathbb Z_2\in 3\ve$. Then we have $\Omega\Fun(3\ve,3\ve)= 2\ve$ as the braided fusion 2-category of $3$+1D bulk vacuum, and $\Hom_{3\ve}(\Rep[2] Z_2, \Rep[2]_2)=\Sigma\Omega\Fun(\Rep[2] \mathbb Z_2, \Rep[2] \mathbb Z_2)= \Sigma Z_1(\Rep \mathbb Z_2)$ as the fusion $2$-category of string-like defects in $2$+1D toric code. They satisfy $Z_1(\Sigma Z_1(\Rep \Z_2))=2\ve$.}
\begin{definition}
    [Module category] Given a multi-fusion $n$-category $\cB$, we denote by $\lmd\cB{ \ve[n+1] }:=\Fun(\Sigma\cB,\ve[n+1])$ the $(n+1)$-category of (dualizable Karoubi complete) left $\cB$-module $n$-categories. Unpacking it, an object $F$ in $\lmd\cB{ \ve[n+1] }$ is an $(n+1)$-vector space $\cX:=F(\bullet)$ together with a monoidal functor $\Omega_\bullet\cF:\cB\to \Fun(\cX,\cX)=Z_0(\cX)$ where $\Omega_\bullet F$ denotes the restriction of $F$ to $\Omega(\Sigma\cB,\bullet)=\cB$. The hom-category between $F$ and $G$, with $\cX=F(\bullet),\cY=G(\bullet)$, is denoted by $\Fun_\cB(\cX,\cY):=\Hom_{\lmd\cB{ \ve[n+1] }}(F,G)$, which is the $n$-category of (higher) natural transformations, whose objects are by definition functors from $\cX$ to $\cY$ that commutes\footnote{The way how they commute needs to be specified by higher structures and coherence conditions, which we suppress here and assume that they are automatically addressed by the theory of functors and natural transformations of $n$-category.} with the action of $\cB$.
    For right module and bimodule, we denote by $$\rmd{\cB}{\ve[n+1]}:=\Fun(\Sigma(\cB^\rev),\ve[n+1]),$$ 
    \begin{align*}
    &\bmd{\cB}{\cC}{ \ve[n+1] }:=\Fun(\Sigma(\cB\bt\cC^\rev),\ve[n+1])
    \\&=\Fun(\Sigma\cB,\Fun(\Sigma(\cC^\rev),\ve[n+1])),
    \end{align*}
    and $\Fun_{\cB^\rev}(\cX,\cY),\Fun_{\cB|\cC}(\cX,\cY)$ are similarly understood.
\end{definition}
\begin{remark}\label{rmk.ew}
By the properties of higher vector spaces, we know $$\rmd{\cB}{\ve[n+1]}={\Fun((\Sigma\cB)^\op,\ve[n+1])}=\Sigma\cB,$$ $$\lmd\cB{ \ve[n+1] }=(\Sigma\cB)^\op,$$ and $$\bmd{\cB}{\cC}{ \ve[n+1] }=\Fun(\Sigma\cB,\Sigma\cC)=\Sigma\cC\bt(\Sigma\cB)^\op.$$
Recall that the Yoneda embedding is an equivalence for higher vector spaces,
$$
\begin{array}{rcl}
  \Sigma\cB &\to & \Fun((\Sigma\cB)^\op,\ve[n+1])\\
  X &\mapsto &\Hom_{\Sigma\cB}(-,X)
\end{array}
$$
and by the general theory of condensation~\cite{GJ1905.09566}, objects $X,Y\in \Sigma\cB$ can be identified with algebras (condensation monads) in $\cB$ (with $\bullet$ identified with the tensor unit as the trivial algebra) and $\Hom_{\Sigma\cB}(Y,X)$ can be identified with the category of $X$-$Y$-bimodules in $\cB$, $\Hom_{\Sigma\cB}(Y,X)=\bmd{X}{Y}{ \cB }$. The first two equivalences are thus the higher categorical analogs of the Eilenberg-Watts theorem:
$$\begin{array}{rcl}
  \Sigma\cB &\cong& \rmd{\cB}{ \ve[n+1] }\\
  X&\mapsto & \Hom_{\Sigma\cB}(\bullet,X)=\lmd{X}{ \cB }\\
  \Hom_{\Sigma\cB}(Y,X)=\bmd{X}{Y}{ \cB }&\cong &\Fun_{\cB^\rev}(\lmd{ Y }{ \cB },\lmd{ X }{ \cB })
\end{array}
$$
$$\begin{array}{rcl}
  (\Sigma\cB)^\op &\cong& \lmd\cB{ \ve[n+1] }\\
  X&\mapsto & \Hom_{\Sigma\cB}(X,\bullet)=\rmd{X}{ \cB }\\
  \Hom_{\Sigma\cB}(Y,X)=\bmd{X}{Y}{ \cB }&\cong &\Fun_{\cB}(\rmd{ X }\cB,\rmd Y\cB)
\end{array}
$$
\end{remark}
\begin{remark}
The $E_0$-center of $\Sigma\cB$ is 
\begin{align*}
&\Fun(\Sigma\cB,\Sigma\cB)=\Fun(\Sigma\cB\bt(\Sigma\cB)^\op,\ve[n+1])
\\&={\bmd{\cB}{\cB}{ \ve[n+1] }^\rev},
\end{align*}
 with the distinguished object $\id_{\Sigma\cB}\in \Fun(\Sigma\cB,\Sigma\cB)$, or $\cB=\Hom_{\Sigma\cB}(\bullet,\bullet)$ as the regular $\cB$-$\cB$-bimodule. Therefore, $Z_1(\cB)=\Omega Z_0(\Sigma\cB)=\Fun_{\cB|\cB}(\cB,\cB).$
\end{remark}

\GY{
Let $\cB$ be a multi-fusion $n$-category, $\cM$ a right $\cB$-module and $\cN$ a left $\cB$-module. We can then define the relative tensor product of $\cM$ and $\cN$, denoted by $\cM\btr[\cB]\cN$. In Appendix~\ref{sec.tensor}, we review the rigorous definition of relative tensor product for $n=1$, and sketch the generalization for arbitrary $n$.
}

\begin{proposition}
    Relative tensor product of module categories is realized by the duality pairing in higher vector spaces.
\end{proposition}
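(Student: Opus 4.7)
The plan is to reduce the proposition to the higher Eilenberg--Watts equivalences of Remark~\ref{rmk.ew}. Since $\Sigma\cB\in\ve[n+2]$ is dualizable with dual $(\Sigma\cB)^\op$, there is a canonical evaluation pairing $\ev_\cB:\Sigma\cB\bt(\Sigma\cB)^\op\to\ve[n+1]$. Remark~\ref{rmk.ew} supplies equivalences $\rmd\cB{\ve[n+1]}\cong\Sigma\cB$ and $\lmd\cB{\ve[n+1]}\cong(\Sigma\cB)^\op$, sending a right module $\cM$ to the algebra $X\in\Sigma\cB$ with $\cM=\lmd{X}{\cB}$ and a left module $\cN$ to the algebra $Y$ with $\cN=\rmd{Y}{\cB}$. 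Transporting $\ev_\cB$ along these equivalences yields a bilinear functor $\rmd\cB{\ve[n+1]}\bt\lmd\cB{\ve[n+1]}\to\ve[n+1]$, and the goal is to identify it with $(\cM,\cN)\mapsto\cM\btr[\cB]\cN$.

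The cleanest route I would take is via universal properties: both $\ev_\cB(\cM\bt\cN)$ and $\cM\btr[\cB]\cN$ should corepresent the same functor $\cX\mapsto\{\cB\text{-balanced functors }\cM\bt\cN\to\cX\}$ on $\ve[n+1]$. The evaluation pairing does so by the tensor--hom adjunction together with the definition of duality in $\ve[n+2]$: a $\cB$-balanced bifunctor $\cM\bt\cN\to\cX$ is the same datum as a left $\cB$-module functor $\cN\to\Fun(\cM,\cX)$, where $\Fun(\cM,\cX)$ carries the natural left $\cB$-action induced from the right $\cB$-action on $\cM$. The relative tensor product does so by its defining construction in the appendix. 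Uniqueness of representing objects then yields the desired natural equivalence.

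As a concrete sanity check, I would evaluate both sides on generic $\cM=\lmd{X}{\cB}$ and $\cN=\rmd{Y}{\cB}$ and verify that each returns the bimodule category $\bmd{X}{Y}{\cB}$. On the duality side, the Yoneda realization $(\Sigma\cB)^\op\cong\Fun(\Sigma\cB,\ve[n+1])$ expresses the evaluation as $\ev_\cB(X\bt Y)=\Hom_{\Sigma\cB}(Y,X)$, which Remark~\ref{rmk.ew} already identifies with $\bmd{X}{Y}{\cB}$. On the relative-tensor side, the higher-categorical analogue of the classical identity $\lmd{X}{\cB}\btr[\cB]\rmd{Y}{\cB}\cong\bmd{X}{Y}{\cB}$ yields the same answer; functoriality in $X$ and $Y$ together with $\bt$-bilinearity then propagates the equivalence to all pairs of modules.

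The main obstacle is bookkeeping the higher coherence data in the relative tensor product. The classical proof of $\lmd{X}{\cB}\otr[\cB]\rmd{Y}{\cB}\cong\bmd{X}{Y}{\cB}$ proceeds via a balanced coequalizer (or a simplicial bar construction), so in the $n$-categorical setting one must verify that all the relevant higher colimits exist and are computed correctly inside $\Sigma\cB$. Karoubi/condensation completeness of $\Sigma\cB$ is precisely what guarantees this, since $\Sigma\cB$ is built to absorb every condensation monad in $\cB$; once this is in place, the rest follows formally from Yoneda and the evaluation--coevaluation adjunction in $\ve[n+2]$.
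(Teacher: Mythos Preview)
Your proposal is correct, and in fact what you call the ``concrete sanity check'' is essentially the paper's entire proof. The paper simply observes that by the Yoneda equivalence every right module is $\lmd{X}{\cB}$ and every left module is $\rmd{Y}{\cB}$ for some $X,Y\in\Sigma\cB$, computes the duality pairing on such a pair as $\Hom_{\Sigma\cB}(Y,X)=\bmd{X}{Y}{\cB}$, and then invokes the identity $\lmd{X}{\cB}\btr[\cB]\rmd{Y}{\cB}=\bmd{X}{Y}{\cB}$ without further elaboration. It does not pass through the universal property of balanced functors at all.

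Your primary route via corepresentability of balanced functors is a genuinely different, more conceptual argument. It has the advantage of explaining \emph{why} the two constructions agree rather than merely checking agreement on generators, and it makes the naturality in $\cM,\cN$ manifest. The paper's approach, by contrast, is shorter and avoids having to articulate what a $\cB$-balanced functor means in the higher setting, at the cost of taking the bar-construction identity $\lmd{X}{\cB}\btr[\cB]\rmd{Y}{\cB}\cong\bmd{X}{Y}{\cB}$ as a black box. Your closing paragraph correctly identifies exactly the point where that black box would need to be opened.
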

\begin{proof}
     Note that $$\rmd{\cB}{ \ve[n+1] }=\Fun((\Sigma\cB)^\op,\ve[n+1])=\Sigma\cB,$$ and $$\lmd\cB{ \ve[n+1] }=\Fun(\Sigma\cB,\ve[n+1])=(\Sigma\cB)^\op$$ are dual to each other in $\ve[ n+2 ]$. As the Yoneda embeddings are equivalences, we can take two arbitrary objects $X,Y\in \Sigma\cB$ to represent two arbitrary functors $\Hom_{\Sigma\cB}(-,X)$ and $\Hom_{\Sigma\cB} (Y,-)$. Thus, the duality pairing is
     \begin{equation*}
         \begin{array}{rclcl}
             \Fun((\Sigma\cB)^\op,\ve[n+1])&\bt&\Fun(\Sigma\cB,\ve[n+1]) 
             &\cong& \Sigma\cB \bt \Fun(\Sigma\cB,\ve[n+1]) 
             \\
             \Hom_{\Sigma\cB}(-,X)&\bt&\Hom_{\Sigma\cB} (Y,-) & \mapsto& X\bt\Hom_{\Sigma\cB} (Y,-) \\
             &&&\xrightarrow{\text{pairing}}& \ve[n+1]\\
             &&&\mapsto& \Hom_{\Sigma\cB}(Y,X)
         \end{array}
     \end{equation*}
    By convention, the corresponding right $\cB$-module and left $\cB$-module are $\cX=\Hom_{\Sigma\cB}(\bullet,X)=\lmd X\cB$ and $\cY=\Hom_{\Sigma\cB}(Y,\bullet)=\rmd Y\cB$.
     Then, $$\Hom_{\Sigma\cB}(Y,X)=\bmd X Y \cB = \lmd X \cB \btr[\cB] \rmd Y \cB=\cX\btr[\cB]\cY.$$
\end{proof}
\begin{remark}
  Combining with Remark~\ref{rmk.ew}, we further have 
  \begin{align*}
  \Hom_{\Sigma\cB}(Y,X)&=\bmd X Y \cB = \lmd X \cB \btr[\cB] \rmd Y \cB
  \\&=\Fun_{\cB}(\rmd{ X }\cB,\rmd Y\cB)=\Fun_{\cB^\rev}(\lmd{ Y }\cB,\lmd X\cB).
  \end{align*}
  For a Morita invariant version (i.e., in terms of $\cB$-module categories instead of algebras inside $\cB$), let $\cN=\rmd X\cB=\Hom_{\Sigma\cB}(X,\bullet)$, $\cN'= \rmd Y\cB=\Hom_{\Sigma\cB}(Y,\bullet)$, we then have $\lmd X\cB=\Hom_{\Sigma\cB}(\bullet,X)=\Fun_{\cB}(\cN,\cB)$, and
  \[ \Fun_\cB(\cN,\cN')=\Fun_{\cB}(\cN,\cB)\btr[\cB] \cN'.\]
  Similarly for right $\cB$-modules $\cM,\cM'$,
  \[ \Fun_{\cB^\rev}(\cM,\cM')=\cM'\btr[\cB]\Fun_{\cB^\rev}(\cM,\cB).\]
\GY{For a right $\cB$-module $\cM$ and a left $\cB$-module $\cN$, the following formula can be used to explicitly characterize and compute the relative tensor product:
\begin{equation*}
%\label{RDT}
\begin{split}
\cM\btr[\cB]\cN & =\Fun_{\cB}(\Fun_{\cB^{\rev}}(\cM,\cB),\cN)\\
    m\btr[\cB]n &= (G\in \Fun_{\cB^{\rev}}(\cM,\cB) \mapsto G(m)\triangleright n \in \cN)
\end{split}
\end{equation*}
where we use $\triangleright$ to denote the left $\cB$-action on $\cN$.}
\end{remark}
% \GY{\begin{remark}
%     In particular,
%     let $X$ be an algebra in $\cB$, and $\cN$ be a left $\cB$-module category, then we have 
%     \begin{equation*}
%         \lmd X{\cB}\btr[\cB]\cN\cong \lmd X{\cN}
%     \end{equation*}
% where $\lmd X{\cN}$ is the category of left $X$-modules in $\cN$. To prove this, by eqn.~(\ref{RDT}), we have 
% \begin{equation*}
% \begin{split}
%       \lmd X{\cB}\btr[\cB]\cN&=\Fun_{\cB}(\Fun_{\cB^{\rev}}(\lmd X{\cB},\cB),\cN)\cong \Fun_{\cB}(\rmd X{\cB},\cN)\\
%       M\btr[\cB] n&\mapsto(G\mapsto G(M)\triangleright n)\mapsto(N\mapsto (N\otr[X]M)\triangleright n)
% \end{split} 
% \end{equation*}
% $\forall M\in \lmd X{\cB}, n\in \cN, G\in \Fun_{\cB^{\rev}}(\lmd X{\cB},\cB), N\in \rmd X{\cB}$.
% Also, \begin{equation*}
% \begin{split}
%      \Fun_{\cB}(\rmd X{\cB},\cN)&\cong \lmd X{\cN}\\
%      F&\mapsto F(X).\\
%      % -\otr[X]P &\mapsfrom P.
% \end{split}
% \end{equation*}
% Combining them together, we have 
% \begin{equation}\label{absorption}
% \begin{split}
%      \lmd X{\cB}\btr[\cB]\cN&\cong \lmd X{\cN}\\
%      M\btr[\cB]n&\mapsto M\triangleright n.
% \end{split}
% \end{equation}
% \end{remark} }

\begin{corollary}\label{cor.rt}
  More generally, we have the following commutative diagram:
  $$
  \begin{tikzcd}
    \Fun(\Sigma\cB,\Sigma\cC) \btr \Fun(\Sigma\cA,\Sigma\cB) \rar{\circ}\dar[equal] & \Fun(\Sigma\cA,\Sigma\cC)=\Fun(\Sigma\cB,\Sigma\cC) \btr[Z_0(\Sigma\cB)] \Fun(\Sigma\cA,\Sigma\cB)\dar[equal]\\
    \Sigma\cC\btr (\Sigma\cB)^\op\btr \Sigma\cB \btr (\Sigma\cA)^\op \rar{\text{pairing}}\dar[equal] & \Sigma\cC\btr (\Sigma\cA)^\op\dar[equal]\\
    \bmd\cA\cB{\ve[n+1]} \btr \bmd\cB\cC{\ve[n+1]} \rar{\btr[\cB]} & \bmd\cA\cC{\ve[n+1]}=\bmd\cA\cB{\ve[n+1]} \btr[\bmd\cB\cB{\ve[n+1]}] \bmd\cB\cC{\ve[n+1]} 
  \end{tikzcd}
  $$
  The first row is the composition of functors. These functors realize the relative tensor products on the right hand side.
\end{corollary}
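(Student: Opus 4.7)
The plan is to show that the three rows are, under the standard identifications among higher vector spaces, the same map. The vertical equalities are built in: using $\Fun(\cX,\cY)=\cY\btr\cX^\op$ together with $\Sigma(\cA\btr\cB^\rev)=\Sigma\cA\btr(\Sigma\cB)^\op$, one has
\[
\Fun(\Sigma\cA,\Sigma\cB) \;=\; \Sigma\cB\btr(\Sigma\cA)^\op \;=\; \Fun(\Sigma(\cA\btr\cB^\rev),\ve[n+1]) \;=\; \bmd\cA\cB{\ve[n+1]},
\]
and analogously for the other corners (with the symmetry of $\btr$ implicit when reordering factors). So the three rows refer to the same object at each corner, and the content of the corollary reduces to the identification of the three horizontal arrows.

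The second-to-third row identification is a direct extension of the preceding proposition. There we saw that the duality pairing $\Fun((\Sigma\cB)^\op,\ve[n+1])\btr\Fun(\Sigma\cB,\ve[n+1])\to\ve[n+1]$ realizes $\btr[\cB]$, sending Yoneda generators $\Hom_{\Sigma\cB}(-,X)\btr\Hom_{\Sigma\cB}(Y,-)$ to $\Hom_{\Sigma\cB}(Y,X)=\lmd X\cB\btr[\cB]\rmd Y\cB$. I would tensor this pairing on the outside by the spectator factors $\Sigma\cC$ and $(\Sigma\cA)^\op$, which pass through the pairing untouched. The middle-row arrow then becomes precisely $\btr[\cB]$ applied in the $\cB$-$\cB$ slot of $\bmd\cA\cB{\ve[n+1]}\btr\bmd\cB\cC{\ve[n+1]}$, which is the bottom-row arrow.

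For the first-to-second row identification, I would unpack composition of functors via Yoneda. On generators, $F\in\Fun(\Sigma\cB,\Sigma\cC)$ corresponds to $Z\btr\Hom_{\Sigma\cB}(-,Y)$ in $\Sigma\cC\btr(\Sigma\cB)^\op$, representing the functor $U\mapsto \Hom_{\Sigma\cB}(U,Y)\cdot Z$; similarly $G\in\Fun(\Sigma\cA,\Sigma\cB)$ corresponds to $Y'\btr\Hom_{\Sigma\cA}(-,X)$. The composite evaluates as
\[
(F\circ G)(U)=\Hom_{\Sigma\cB}(G(U),Y)\cdot Z=\Hom_{\Sigma\cA}(U,X)\cdot\Hom_{\Sigma\cB}(Y',Y)\cdot Z,
\]
which represents the generator $Z\btr\Hom_{\Sigma\cB}(Y',Y)\btr\Hom_{\Sigma\cA}(-,X)$ of $\Sigma\cC\btr(\Sigma\cA)^\op$. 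This is exactly the output of the duality pairing $(\Sigma\cB)^\op\btr\Sigma\cB\to\ve[n+1]$ applied to $Y$ and $Y'$ in the middle slot. The two arrows thus agree on generators and hence everywhere by the universal property of $\btr$. The main obstacle, were one to demand full rigor, is the higher coherence: every identification must respect all higher morphisms compatibly. This is however supplied automatically by the fact that the equivalences in use (Yoneda and duality in $\ve[n+2]$) come with their full coherence data within the higher-linear-algebra framework established earlier in this section.
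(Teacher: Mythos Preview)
Your argument is correct and is essentially the intended one: the paper states this result as an immediate corollary of the preceding proposition (that the duality pairing realizes $\btr[\cB]$) together with the standing identifications $\Fun(\cX,\cY)=\cY\bt\cX^\op$ and Remark~\ref{rmk.ew}, and does not spell out a separate proof. What you have written is exactly the unpacking one would supply if asked: tensor the pairing of the proposition by the spectator factors $\Sigma\cC$ and $(\Sigma\cA)^\op$ to get the middle-to-bottom identification, and check on Yoneda generators that functor composition matches the pairing for the top-to-middle identification. Your remark that the higher coherences are carried by the ambient equivalences in $\ve[n+2]$ is also in line with the paper's working assumptions about higher linear algebra.
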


\begin{definition}
    [Higher module]
    For an $E_{k+1}$-multi-fusion $n$-category $\cB$, a  (left) $E_{k}$-multi-fusion module over $\cB$ is an $E_{k}$-multi-fusion $n$-category $\cX$ equipped with an $E_{k+1}$-monoidal functor $\cB\to Z_{k}(\cX)$. In particular, for $k=1$, $E_1$-multi-fusion module is also referred to as multi-fusion module. A right $E_k$-multi-fusion module over $\cB$ is defined to be a left $E_k$-multi-fusion module over $\cB^{\op (-k)}$.
\end{definition}
\begin{remark}\label{rmk.hm}
    Since $\Sigma$ is left adjoint to $\Omega$, we have
    \begin{align*}
    \Fun^{E_{k+1}}(\cB,Z_{k}(\cX))&=\Fun^{E_{k+1}}(\cB,\Omega^{k} Z_{0}(\Sigma^{k}\cX))
    \\&=\Fun^{E_1}(\Sigma^{k}\cB,Z_{0}(\Sigma^k\cX)),
    \end{align*}
thus a left (right) $E_k$-multi-fusion module $\cX$ over $\cB$ is the same as a left (right) module $\Sigma^k\cX$ over $\Sigma^k \cB$, except that for $k=0$, an $E_0$-multi-fusion module $(\cX,\bullet)$ is a module $\cX$ with an additional distinguished object $\bullet$. The relative tensor product of higher modules is given by $\cX\btr[\cB]\cY=\Omega^k\Sigma^k(\cX\btr[\cB]\cY)=\Omega^k (\Sigma^k\cX\btr[\Sigma^k\cB]\Sigma^k\cY)$, and thus can also be realized via the duality pairing of the $(n+k+2)$-vector space $\Sigma^{k+1}\cB$.
\end{remark}
\begin{definition}[Relative center of bimodule, the center functor~\cite{KWZ1502.01690,KZ1507.00503,KZ2107.03858}]
  Let $\cA,\cB$ be multi-fusion $n$-categories and $\cX$ an $\cA$-$\cB$-bimodule. Note that the $E_1$-center of $\cB$ is given by $$Z_1(\cB)=\Fun_{\cB|\cB}(\cB,\cB)=\Omega(\bmd\cB\cB{\ve[n+1]}^\rev,\cB).$$ We consider more generally the looping $$Z_{\cA|\cB}(\cX):=\Fun_{\cA|\cB}(\cX,\cX)=\Omega(\bmd\cA\cB{\ve[n+1]},\cX),$$ called the relative center of $\cX$.
\end{definition}
\begin{remark}
  Supposing that $\Sigma\cA$ and $\Sigma\cB$ are indecomposable, thus $\bmd\cA\cB{\ve[n+1]}=(\Sigma\cA)^\op\bt\Sigma\cB$ is also indecomposable. We have $\Sigma Z_{\cA|\cB}(\cX)=(\Sigma\cA)^\op\bt\Sigma\cB$, thus $Z_0(\Sigma Z_{\cA|\cB}(\cX))=Z_0((\Sigma\cA)^\op)\bt Z_0(\Sigma\cB)=Z_0(\Sigma\cA)^\rev\bt Z_0(\Sigma\cB)$. Therefore, $Z_1(Z_{\cA|\cB}(\cX))=\ov{Z_1(\cA)}\bt {Z_1(\cB)}$. In other words, $Z_{\cA|\cB}(\cX)$ is a multi-fusion $Z_1(\cB)$-$Z_1(\cA)$-bimodule, which is moreover \emph{closed} in the sense of~\cite{KZ2107.03858}.
\end{remark}
\begin{theorem}[Functoriality of relative center]\label{thm.fun}
  Let $\cA,\cB,\cC$ be multi-fusion $n$-categories such that $\Sigma\cA,\Sigma\cB,\Sigma\cC$ are indecomposable. Let $\cX$ be an $\cA$-$\cB$-bimodule, and $\cY$ a $\cB$-$\cC$-bimodule. We have\footnote{Relative center (looping in bimodule categories) reverses the orientation of relative tensor product, which is essentially due to $\Fun_{\cB}(\cB,\cB)=\cB^\rev,$ or $\Sigma\cB=\Fun(\Sigma(\cB^\rev),\ve[n+1])$. To avoid further confusion, we always indicate such orientation-reversing by writing $\btr[\ov{Z_1(-)}]$, while keep other symbols in place.}
  \begin{align*}
  Z_{\cA|\cB}(\cX)\btr[\ov{Z_1(\cB)}]Z_{\cB|\cC}(\cY)=Z_{\cA|\cC}(\cX\btr[\cB]\cY).
  \end{align*}
\end{theorem}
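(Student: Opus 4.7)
My approach is to loop the composition identity of Corollary~\ref{cor.rt} at the object $\cX\btr[\cB]\cY$. Unpacking the definitions, $Z_{\cA|\cB}(\cX)=\Omega(\bmd\cA\cB{\ve[n+1]},\cX)$, and likewise for $Z_{\cB|\cC}(\cY)$ and $Z_{\cA|\cC}(\cX\btr[\cB]\cY)$. After the Eilenberg-Watts identifications of Remark~\ref{rmk.ew} view $\cX$ and $\cY$ as linear functors $\Sigma\cA\to\Sigma\cB$ and $\Sigma\cB\to\Sigma\cC$, Corollary~\ref{cor.rt} asserts
\[\bmd\cA\cC{\ve[n+1]}=\bmd\cA\cB{\ve[n+1]}\btr[\bmd\cB\cB{\ve[n+1]}]\bmd\cB\cC{\ve[n+1]},\]
under which $\cX\bt\cY$ is sent to $\cX\btr[\cB]\cY\simeq\cY\circ\cX$. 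The theorem therefore reduces to the claim that $\Omega$ at this object distributes across the relative tensor product on the right.

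The natural comparison functor is horizontal composition of bimodule natural transformations: for $\alpha\in Z_{\cA|\cB}(\cX)$ and $\beta\in Z_{\cB|\cC}(\cY)$, set $\Phi(\alpha,\beta):=\beta*\alpha\in Z_{\cA|\cC}(\cY\circ\cX)$, producing
\[\Phi\colon Z_{\cA|\cB}(\cX)\bt Z_{\cB|\cC}(\cY)\to Z_{\cA|\cC}(\cX\btr[\cB]\cY).\]
For any $\gamma\in Z_1(\cB)=\Omega(\Fun(\Sigma\cB,\Sigma\cB),\id_{\Sigma\cB})$, regarded as a $\cB$-$\cB$-bimodule natural transformation of the identity functor, the associativity of horizontal composition gives $\beta*(\gamma*\alpha)=(\beta*\gamma)*\alpha$, which expresses that $\Phi$ is balanced with respect to the two $Z_1(\cB)$-actions. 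The replacement of $Z_1(\cB)$ by $\ov{Z_1(\cB)}$ in the theorem reflects the footnote's observation together with the preceding remark: $Z_{\cA|\cB}(\cX)$ and $Z_{\cB|\cC}(\cY)$ naturally carry left and right $Z_1(\cB)$-module structures respectively, and identifying them as modules over a common side requires reversing the braiding, equivalently because the fusion structure on $\bmd\cB\cB{\ve[n+1]}$ is $\rev$ to that of $\Fun(\Sigma\cB,\Sigma\cB)$.

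The main obstacle is showing that the induced functor
\[\bar\Phi\colon Z_{\cA|\cB}(\cX)\btr[\ov{Z_1(\cB)}]Z_{\cB|\cC}(\cY)\to Z_{\cA|\cC}(\cX\btr[\cB]\cY)\]
is an equivalence, i.e.\ that $\Omega$ genuinely distributes across the relative tensor product of Corollary~\ref{cor.rt} at the object in question. I would approach this in two complementary ways. Abstractly, one transports the duality-pairing proof of Corollary~\ref{cor.rt} one categorical level down, observing that $Z_{\cA|\cB}(\cX)$ and $Z_{\cB|\cC}(\cY)$ carry mutually dual $\ov{Z_1(\cB)}$-module structures with evaluation pairing given precisely by horizontal composition. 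Concretely, one can invoke the explicit formula $\cM\btr[\cB]\cN=\Fun_\cB(\Fun_{\cB^\rev}(\cM,\cB),\cN)$ from the remarks above and unpack both sides as categories of $\cA$-$\cC$-bimodule natural transformations of $\cY\circ\cX$, matching them directly. The delicate bookkeeping lies in tracking the orientations and coherence data produced by $\Omega$; once aligned, the equivalence follows from the universal property of the relative tensor product.
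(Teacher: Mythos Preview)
Your overall strategy---loop the composition identity of Corollary~\ref{cor.rt} at the distinguished object $\cX\btr[\cB]\cY$---is exactly the paper's. But you stop at precisely the point where the paper's argument begins, and the two approaches you sketch for the ``main obstacle'' are left as outlines rather than proofs.

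The gap is that you never invoke the indecomposability hypothesis, and this is what makes the distributivity of $\Omega$ automatic rather than something to be checked by hand. The paper's mechanism is as follows. Since $\Sigma\cA,\Sigma\cB,\Sigma\cC$ are indecomposable, so are $\bmd\cA\cB{\ve[n+1]}$, $\bmd\cB\cB{\ve[n+1]}$, $\bmd\cB\cC{\ve[n+1]}$; hence $\Sigma\Omega(-,x)$ recovers each of them at any nonzero base object, and in particular
\[
\Sigma Z_{\cA|\cB}(\cX)=(\bmd\cA\cB{\ve[n+1]},\cX),\quad
\Sigma\ov{Z_1(\cB)}=(\bmd\cB\cB{\ve[n+1]},\cB),\quad
\Sigma Z_{\cB|\cC}(\cY)=(\bmd\cB\cC{\ve[n+1]},\cY).
\]
Now Remark~\ref{rmk.hm} is the missing ingredient you need: for multi-fusion (i.e.\ $E_1$-) modules, the relative tensor product is \emph{by definition} $\Omega$ of the relative tensor product of their deloopings. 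So
\[
Z_{\cA|\cB}(\cX)\btr[\ov{Z_1(\cB)}]Z_{\cB|\cC}(\cY)
=\Omega\Bigl(\Sigma Z_{\cA|\cB}(\cX)\btr[\Sigma\ov{Z_1(\cB)}]\Sigma Z_{\cB|\cC}(\cY)\Bigr),
\]
and the inside is computed by Corollary~\ref{cor.rt} to be $(\bmd\cA\cC{\ve[n+1]},\cX\btr[\cB]\cY)$. Taking $\Omega$ gives $Z_{\cA|\cC}(\cX\btr[\cB]\cY)$ on the nose.

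In short: you identified the right reduction but then tried to prove ``$\Omega$ distributes'' directly via a comparison functor $\bar\Phi$ and ad hoc balancing arguments. The paper sidesteps this entirely: Remark~\ref{rmk.hm} says the left-hand side is \emph{defined} as $\Omega(\Sigma\btr\Sigma)$, and indecomposability makes $\Sigma$ of each relative center equal to the full bimodule category (as a pointed object), so Corollary~\ref{cor.rt} applies verbatim. Your horizontal-composition map $\Phi$ is the right comparison, but without these two inputs you have no mechanism to show $\bar\Phi$ is an equivalence, and neither of your proposed routes is carried far enough to count as a proof.
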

\begin{proof}
  $\Sigma\cA,\Sigma\cB,\Sigma\cC$ being indecomposable implies that $\bmd\cA\cB{\ve[n+1]}$, $\bmd\cB\cB{\ve[n+1]},\ \bmd\cB\cC{\ve[n+1]}$ are all indecomposable, therefore,
  \begin{align*}
    \Sigma Z_{\cA|\cB}(\cX)&=(\bmd\cA\cB{\ve[n+1]},\cX),\\
    Z_0(\Sigma\cB)^\rev=\Sigma \ov{Z_1(\cB)}= \Sigma \ov{Z_{\cB|\cB}(\cB)}&=(\bmd\cB\cB{\ve[n+1]},\cB),\\
    \Sigma Z_{\cB|\cC}(\cY)&=(\bmd\cB\cC{\ve[n+1]},\cY).
  \end{align*}
  By Corollary~\ref{cor.rt}
  \begin{align*}
   &(\bmd\cA\cB{\ve[n+1]},\cX)\btr[{(\bmd\cB\cB{\ve[n+1]},\cB)}](\bmd\cB\cC{\ve[n+1]},\cY)
   \\&=(\bmd\cA\cC{\ve[n+1]},\cX\btr[\cB]\cY). 
  \end{align*}
  Then using Remark~\ref{rmk.hm}, we have the functoriality of the relative center   
  \begin{align*}
  &Z_{\cA|\cB}(\cX)\btr[\ov{Z_1(\cB)}]Z_{\cB|\cC}(\cY)=\Omega(\Sigma Z_{\cA|\cB}(\cX)\btr[\Sigma \ov{Z_1(\cB)}]\Sigma Z_{\cB|\cC}(\cY))\\
  &=\Omega\left(\bmd\cA\cC{\ve[n+1]}, \cX\btr[\cB]\cY\right)=Z_{\cA|\cC}(\cX\btr[\cB]\cY).
  \end{align*}
\end{proof}

\begin{definition}
  Two multi-fusion $n$-categories $\cA$ and $\cB$ are called Morita equivalent if $\Sigma\cA\cong \Sigma\cB$.
\end{definition}
\begin{remark}\label{rem.mor}
  Supposing that $\cA$ and $\cB$ are Morita equivalent, a direct consequence is that they have the same $E_1$-center $Z_1(\cA)=\Omega Z_0(\Sigma\cA)\cong \Omega Z_0(\Sigma\cB)=Z_1(\cB).$ Denote the images of distinguished objects by $X,Y$ under the equivalence $\Sigma\cA\cong\Sigma\cB$:
  $$\begin{array}{rcl}
    \Sigma\cA &\cong &\Sigma \cB, \\
    \bullet &\leftrightarrow& X,\\
    Y &\leftrightarrow& \bullet.
  \end{array}$$
  We then have
  \begin{align*}
  \cA&=\Hom_{\Sigma\cA}(\bullet,\bullet)\cong \Hom_{\Sigma\cB}(X,X),\\
  \cB&=\Hom_{\Sigma\cB}(\bullet,\bullet)\cong \Hom_{\Sigma\cA}(Y,Y).
\end{align*}
  In terms of the $\cA$-$\cB$-bimodule $\cX:=\Hom_{\Sigma\cB}(\bullet,X)\cong\Hom_{\Sigma\cA}(Y,\bullet)$, and $\cB$-$\cA$-bimodule $\cX^\vee:= \Fun_{\cB^\rev}(\cX,\cB)=\Hom_{\Sigma\cB}(X,\bullet)\cong \Hom_{\Sigma\cA}(\bullet,Y)=\Fun_{\cA}(\cX,\cA)$\footnote{$\Fun_{\cB^\rev}(\cX,\cB)$ is left dual to $\cX$ while $\Fun_\cA(\cX,\cA)$ is right dual to $\cX$. See Proposition 2.3 in~\cite{KZ2107.03858}. In general the left and right duals can be different. Here since $\cX$ is invertible, the left and right duals coincide and we use the same notation $\cX^\vee$.},
  the above means 
  \begin{align*}
    \cA&\cong \Fun_{\cB^\rev}(\cX,\cX)\cong \cX\btr[\cB]\cX^\vee,\\
    \cB&\cong \Fun_{\cA^\rev}(\cX^\vee,\cX^\vee)\cong \cX^\vee\btr[\cA]\cX.
  \end{align*} 
  Together we know the $\cA$-$\cB$-bimodule $\cX$ and $\cB$-$\cA$-bimodule $\cX^\vee$ are inverse to each other. 

  Suppose that $\cB$ is a fusion $n$-category, then $\Sigma\cB$ is indecomposable, and for any object $X\in\Sigma\cB$, $\Sigma\cB=\Sigma\Hom_{\Sigma\cB}(X,X)$. In terms of $\cX:=\Hom_{\Sigma\cB}(\bullet,X)=\lmd X \cB$, the dual multi-fusion $n$-category $\cB_\cX^\vee:=\Fun_{\cB^\rev}(\cX,\cX) =\bmd X X \cB$ is Morita equivalent to $\cB$ with $\cX=\lmd X\cB$ and $\cX^\vee=\rmd X \cB$ being two invertible bimodules. Moreover, $$\cB\cong \Fun_{(\cB_\cX^\vee)^\rev}(\cX^\vee,\cX^\vee)=(\cB_\cX^\vee)^\vee_{\cX^\vee}.$$ In particular, if $\cX$ is an indecomposable right $\cB$-module, $\cB_\cX^\vee$ is again a fusion $n$-category.
\end{remark}

\section{Category of SET orders}
In this section we discuss how to extract information from the category of SET orders $\Fun(\Sigma\cT,\cX)$. One can see that they are all natural or canonical procedures.
\medskip

Let's first clarify the physical meanings of $\cT$.
\begin{definition}
  A (finite generalized) symmetry in $n$+1D is a fusion $n$-category $\cT$. As such symmetry is not associated with any concrete physical system, and only an abstract multiplication rule, we may also say an \emph{abstract} symmetry $\cT$ for emphasis.
\end{definition}

\begin{definition}
  [Local fusion $n$-category] A local fusion $n$-category~\cite{KLW+2005.14178} is a fusion $n$-category $\cC$ equipped with a linear monoidal functor $\cC\to \ve[n]$. A symmetry $\cT$ is called local if $\cT$ is a local fusion $n$-category.
\end{definition}
\begin{definition}
  [$k$-symmetry and $k$-form symmetry] A $k$-symmetry in $n$+1D is a fusion $n$-category of the form $\Sigma^k \cC$ for some non-trivial $E_{k+1}$-fusion $(n-k)$-category $\cC$. Physically, this means that $k$-symmetry is generated by topological operators of codimension $k+1$ or higher. If $\Omega\cC=\ve[ n-k-1 ]$ and simple objects in $\cC$ are all invertible, $\Sigma^k\cC$ is a called a $k$-form symmetry. $k$-form symmetry is generated by invertible topological operators at exactly codimension $k+1$. 
\end{definition}
\begin{example}
Local 0-form symmetry is the ordinary invertible global symmetry and is described by the fusion $n$-category $\ve[n]_G$, the $G$-graded $n$-vector spaces, for some group $G$. 
  For a 1-symmetry $\Sigma\cC$ in 2+1D, $\cC$ is a braided fusion $1$-category. For a fusion 1-category we automatically have $\Omega\cC=\C=\ve[0]$. If all simple objects in $\cC$ are invertible, then $\cC$ must be a braided fusion $1$-category $\ve_G^{\omega_3}$ for some abelian group $G$, and abelian 3-cocycle $\omega_3$. So a general 1-form symmetry in $2$+1D is $\Sigma(\ve_G^{\omega_3})$. It is local if and only if $\omega_3=0$, and $\Sigma(\ve_G)=\Rep[2] G$.
  % $\ve_G$ and is equivalent to $\Rep G$ since $G$ abelian.
  This matches the general result of local $1$-symmetry in $2$+1D; see also Example~\ref{2+1D1-sym}.
  
\end{example}

Next, we clarify the physical meaning of $\cX$.
\begin{definition}
  A (potentially anomalous) bare theory in $n$+1D is a (multi-)fusion $n$-category $\cB$. It physically corresponds to all the topological operators in an $n$+1D (potentially anomalous) topological ordered phases, or more generally the topological skeleton of an $n$+1D (potentially anomalous) quantum liquid~\cite{KZ2011.02859}. The term \emph{bare} emphasizes that we do not consider symmetry of $\cB$ yet. $\cB$ is called anomaly-free if $Z_1(\cB)=\ve[n]$.
\end{definition}

\begin{remark}
In most cases we will assume that $\cB$ is fusion, so that the symmetry is not spontaneously broken; see Example~\ref{eg.ssb}.
\end{remark}

\begin{definition}
 A symmetry assignment is a monoidal linear $n$-functor $\phi:\cT\to\cB$ from an abstract symmetry $\cT$ to a bare theory $\cB$.
\end{definition}
\begin{remark}
  It is very important to separate the notions of abstract symmetry, the bare theory, and the symmetry assignment, as different symmetry assignments really correspond to different physical systems. One may consider a simple example, a two-dimensional lattice system of spin 1/2's, equipped with either $180^\circ$ spatial rotation or global spin flips. In both case the bare theory and the abstract symmetry ($\Z_2$) are the same; they can be distinguished only by the symmetry assignment.
\end{remark}

\begin{definition}
  An anomaly category (of bare theories in $n$+1D) is a separable $( n+1 )$-category $\cX$ (an $(n+2)$-vector space $\cX\in\ve[n+2]$). A bare theory (of fusion type) is obtained by choosing a (simple) object $x\in \cX$ and taking $\cB:=\Omega(\cX,x)$.
\end{definition}

\GY{
\begin{remark}
    The anomaly category $\cX$ only captures the anomaly of the bare theory, we need to differ it from the 't Hooft anomaly of the symmetry (see Remarks \ref{rmk.sym1} and \ref{rmk.sym2}). 
\end{remark}
}

\begin{remark}\label{rem.anomaly}
  It is meaningful to consider the following types of $\cX$:
  \begin{itemize}
    \item For $\cX=\ve[n+1]$, $\cB:=\Omega(\ve[n+1],\cY)=\Fun(\cY,\cY)$ correspond to anomaly-free $n$+1D topological orders admitting gapped boundaries, i.e., $\cB$ is non-chiral in the sense of Remark~\ref{rem.nonchiral}.
    \item For invertible $\cX$, i.e., $\Fun(\cX,\cX)=\cX\bt\cX^\op=\ve[n+1]$, by the results in~\cite{KZ2107.03858,Joh2003.06663}, we have the following assertions: \begin{enumerate}
      \item $\cX$ is indecomposable, and for any $x\in \cX$, $\cX=\Sigma\Omega(\cX,x)$. 
      \item Take a simple $x\in \cX$, then $\cB:=\Omega(\cX,x)$ is a fusion $n$-category with trivial center $Z_1(\cB)=\ve[n]$. Moreover $\cB=\Sigma\Omega\cB$ and $\Omega\cB$ is a braided fusion $( n-1 )$-category with trivial center $Z_2(\Omega\cB)=\ve[n-1]$.
      \item The above combined, for any simple object $x\in \cX$, $\cX=\Sigma^2\Omega^2(\cX,x)$.
      \item The Morita-class of fusion $n$-categories with trivial center, Witt-class of braided fusion $( n-1 )$-categories with trivial center, and equivalence class of invertible separable $( n+1 )$-categories, one-to-one correspond to each other via looping and delooping: $$\cB\overset{\text{Morita}}{\simeq}\cB' \Leftrightarrow \Omega\cB\overset{\text{Witt}}{\simeq}\Omega\cB'\Leftrightarrow \Sigma\cB\simeq\Sigma\cB'.$$
    \end{enumerate}
    Therefore, an invertible $\cX$ collects Morita-equivalent fusion $n$-categories with trivial center, or Witt-equivalent braided fusion $( n-1 )$-categories with trivial center. These bare theories are anomaly-free (in the world of higher vector spaces). They can still possess other anomalies, such as that witnessed by invertible topological phases, or non-zero chiral central charges (framing anomaly).
    \item For non-invertible indecomposable $\cX$, bare theories $\Omega(\cX,x)$ necessarily possess the gravitational anomaly witnessed by the nontrivial $n$+2D topological order $\Fun(\cX,\cX)$.
  \end{itemize}
\end{remark}

Now, we are ready to define SET orders:
\begin{definition}
  A $\cT$-SET order is a functor $F$ in $\Fun(\Sigma\cT,\cX)$.
  The corresponding bare theory of $F$ is $\Omega(\cX,F(\bullet))$.
  \begin{enumerate}
    \item $F$ is called anomaly-free if $\cX$ is invertible. 
    \item $F$ is called a symmetry protected topological (SPT) order if $\cX=\ve[n+1]$ and $F(\bullet)$ is invertible, which implies that the bare theory is trivial, i.e., $\Omega(\cX,F(\bullet))=\ve[n]$.
    \item The (higher) defects between SET orders are (higher) morphisms in $\Fun(\Sigma\cT,\cX)$.
    \item A symmetry assignment $\phi:\cT\to \cB$ upgrades to a functor $\Sigma\phi:\Sigma\cT\to\Sigma\cB$ and thus an anomaly-free $\cT$-SET order if $\cB$ is anomaly-free. 
    \item For a local symmetry $\cT$ , the local data $\cT\to\ve[n]$ defines a trivial $\cT$-SET (or $\cT$-SPT) phase. A boundary of a $\cT$-SET order $F$ is a defect between $F$ and the trivial $\cT$-SET order.% Such a boundary is called an \emph{anomalous} $\cT$-SET order whose anomaly is witnessed by $F$.
  \end{enumerate}
\end{definition}

\begin{remark}\label{rmk.sym1}
  In the literature a symmetry $\cT$ is called anomaly-free if it admits a trivial phase~\cite{TW1912.02817}. We see in our framework that this requirement exactly corresponds to that $\cT$ is local (admits a fiber functor to $\ve[n]$). Moreover, a symmetry $\cT$ being \emph{anomalous}, i.e., $\cT$ does not admit a monoidal functor $\cT\to \ve[n]$, is in fact a property of $\cT$. But a symmetry $\cT$ being local or anomaly-free, is a structure rather than a property, since the choice $\cT\to\ve[n]$ has to be made, which is not unique even for the usual group-like symmetry $\cT=\ve[n]_G$ (though in most cases we implicitly choose the forgetful functor $\ve[n]_G\to \ve[n]$ as the local structure). Because of these subtleties, when the fusion category $\cT$ is given, we avoid talking about its anomaly: $\cT$ is automatically anomalous if it does not admit any fiber functor; on the other hand, we will specify a local structure $\cT\to\ve[n]$ which means $\cT$ is already anomaly-free. By the anomaly of $\cT$-SET orders, we mainly mean the anomaly of the underlying topological orders. Only when $\cT$ is not explicitly given, for example when talking about $G$ symmetry instead of $\ve[n]_G$ symmetry, we may emphasize on ``anomalous $G$-action''; see Example~\ref{eg.3dset}.
\end{remark}
\GY{\begin{remark}\label{rmk.sym2}
    The anomaly of a symmetry $\cT$ is also called 't Hooft anomaly~\cite{hooft1980naturalness}. 't Hooft anomaly by definition is the obstruction to (ordinary) gauging. As we will discuss in the next section, gauging is to condense symmetry defects. The obstruction to gauging is that we cannot condense all possible symmetry defects. Mathematically it means that the symmetry $\cT$ does not have $\ve[n]$  as its module, or equivalently, the symmetry $\cT$ does not have a fiber functor to $\ve[n]$.
\end{remark}}
\begin{remark}
  It is not reasonable to consider defects between $F:\Sigma\cT\to \cX$ and $G:\Sigma\cT\to\cY$ for different $\cX$ and $\cY$. The anomalies $\cX$ and $\cY$ play a similar role as the symmetry $\cT$, as can be seen from the formula $\Fun(\Sigma\cT,\cX)=\cX\bt(\Sigma\cT)^\op$; they both constrain the (low-energy) dynamics of the physical system. To compare $F:\Sigma\cT\to \cX$ and $G:\Sigma\cT\to\cY$, probably one has to introduce explicit ``anomaly change'' $\cX\to\cY$ (Definition~\ref{def.eab}), in analogy to explicit symmetry breaking (Definition~\ref{def.esb}).
\end{remark}

It is also of practical use to consider ``anomaly-free'' relative to a fusion $n$-category $\cV$, when one thinks operator in $\cV$ as building blocks of the physical system and free from anomalies. We call such $\cV$ a \emph{background}. For example, when considering fermion systems, one should take $\cV=\sve[n]=\Sigma^{n-1}\sve$ as the background. In this paper we only give the following definition; related research may be done in future works.
\begin{definition}
  A symmetry $\cT$ is $\cV$-local if it is a $\cV$-local fusion category~\cite{KLW+2005.14178} i.e., equipped with a monoidal functor $\cT\to \cV$. An anomaly category $\cX$ is called anomaly-free relative to $\cV$ if $Z_0(\cX)=Z_0(\Sigma\cV)$. Equivalently, a bare theory $\cB$ (of fusion type) is called anomaly-free relative to $\cV$ if $Z_1(\cB)=Z_1(\cV)$.
\end{definition}

\begin{example}[0+1D $G$-SET]
  The only invertible object in $\ve[2]$ is $\ve$, and $\ve[0]_G=\C G$ the group algebra,
  \[  \Hom_{\ve[2]}(\Sigma(\C G),\ve)=\Fun(BG,\ve)=\Rep G.\]  
  A (0+1D) quantum system $(H,V)$ with symmetry means that the Hilbert space $V$ is an object in $\Rep G$, i.e., $V$ carries a group representation:
  \[ \rho: G\to \End{V}.\]
  and the Hamiltonian $H$ is a morphism in $\Rep G$, i.e., $H$ is a symmetric operator $\rho_g H=H\rho_g,\forall g\in G$. We can see here that $\Rep G$ ignores all the information related to the spatial dimensions; the quantum system is described as whole as if the spatial dimension is zero.
\end{example}

\begin{example}[1+1D $G$-SET]

  The only invertible object in $\ve[3]$ is $\ve[2]$
  \[ \Hom_{\ve[3]}(\Sigma \ve_ G,\ve[2])=\Rep[2] G.\]  
  Objects in $\Rep[2] G=\Sigma\Rep G$ are algebras in $\Rep G$, (whose Morita-classes are) parametrized by a subgroup $H\subset G$ together with a 2-cocycle $\omega_2\in H^2(H,U(1))$ (see, for example,~\cite{EGNO0301027,EO0301027,Ost0111139}), which agrees with the well-known classification of 1+1D phases with symmetry $G$ (unbroken symmetry $H$ which possibly hosts a SPT order $\omega_2$)~\cite{CGW1008.3745,CGW1103.3323,SPC1010.3732}. 

\end{example}

\begin{example}\label{eg.3dset}
  Consider the special case that $\cX=\ve[n+1]$. Note that
  \[ \Fun(\Sigma \ve[n]_G,\ve[n+1])\cong{}_{\ve[n]_G}\ve[n+1]\]
  can be identified with the $(n+1)$-category of separable left $\ve[n]_G$-module $n$-categories. For $F_1,F_2\in {}_{\ve[n]_G}\ve[n+1]$, defects between them are just $\ve[n]_G$-module functors in $\Fun_{\ve[n]_G}(F_1,F_2)$. 
  
  As a special case, we want to show that the so-called surface topological orders with anomalous $G$-action~\cite{CBVF1403.6491}, characterized by $\omega_4\in H^4(G,U(1))$ and monoidal functors $\ve[2]_G^{\omega_4}\to \Sigma\cC$, are just boundaries of 3+1D $G$-SPT orders.  Consider the category of 3+1D $G$-SET orders
  $$ \Fun(\Sigma \ve[3]_G,4\ve)\cong{}_{\ve[3]_G}4\ve.$$
  \begin{itemize}
    \item The trivial 3+1D $G$-SPT order is given by the forgetful functor $\ve[3]_G\to\ve[3]$, corresponding the left $\ve[3]_G$-module $(\ve[3]_G)_{\ve[2]_G}\cong \ve[3]$.
    \item Take the 3+1D $G$-SPT order, for some $\omega_4\in H^4(G,U(1))$ and MTC $\cC$, corresponding to the left $\ve[3]_G$-module $(\ve[3]_G)_{\ve[2]_G^{\omega_4}\bt\Sigma\ov\cC}\cong (\ve[3])_{\Sigma\ov\cC}=\Sigma^2\ov\cC=(\Sigma^2\cC)^\op$. 
      \item Defects between these two then form  $\Fun_{\ve[3]_G}((\ve[3]_G)_{\ve[2]_G^{\omega_4}\bt\Sigma\ov\cC},(\ve[3]_G)_{\ve[2]_G})\cong {}_{\ve[2]_G^{\omega_4}\bt\Sigma\ov\cC}(\ve[3]_G)_{\ve[2]_G}\cong{}_{\ve[2]_G^{\omega_4}\bt\Sigma\ov\cC}(\ve[3])\cong \Fun(\Sigma\ve[2]_G^{\omega_4}\bt(\Sigma^2\cC)^\op,\ve[3])\cong \Fun(\Sigma \ve[2]_G^{\omega_4}, \Sigma^2\cC)$. 
  \end{itemize}
  Therefore, a 2+1D topological order $\cC$ with anomalous $G$-action, characterized by a monoidal 2-functor $ \phi: \ve[2]_G^{\omega_4} \to \Sigma\cC,$
  corresponds to a boundary of the 3+1D $G$-SPT $(\ve[3]_G)_{\ve[2]_G^{\omega_4}\bt\Sigma\ov\cC}$, i.e., $\Sigma\phi\in\Fun(\Sigma \ve[2]_G^{\omega_4}, \Sigma^2\cC)$. Such anomalous $G$-action is a straightforward generalization to the anomaly-free $G$-actions $\ve[2]_G \to \Sigma\cC.$ In this example, one can also call $\Sigma\phi\in\Fun(\Sigma \ve[2]_G^{\omega_4}, \Sigma^2\cC)$ as an anomaly-free ($\Sigma^2\cC$ is invertible) $\ve[2]_G^{\omega_4}$-SET order, and clearly $\ve[2]_G^{\omega_4}$ is an anomalous symmetry for nontrivial $\omega_4$.
\end{example}

Next we discuss how to extract other information from the category of SET orders.
\begin{definition}
  Given a $\cT$-SET $F\in\Fun(\Sigma\cT,\cX)$, the corresponding \emph{charge category} is defined to be defects in $F$, i.e. $\Omega(\Fun(\Sigma\cT,\cX),F)$. In particular, if $F$ is obtained from a symmetry assignment $\phi:\cT\to\cB$, $F=\Sigma\phi: \Sigma\cT\to\Sigma\cB$, recall that $\Fun(\Sigma\cT,\Sigma\cB)=\bmd\cT\cB{\ve[n+2]}$ is identified with $\cT$-$\cB$-bimodules, and $\Sigma\phi$ is identified with the $\cT$-$\cB$-bimodule $\cB$ where the left action is induced by $\phi$. We denote this bimodule by $_\phi \cB$ for short (bimodules $_\phi \cB_\phi$ and $\cB_\phi$ are understood similarly.) The charge category of $\Sigma\phi$ is exactly the relative center of the $\cT$-$\cB$-bimodule $_\phi \cB$, denoted by $Z(\phi):=Z_{\cT|\cB}(_\phi\cB)=\Fun_{\cT|\cB}(_\phi\cB,_\phi\cB)$.
\end{definition}

\begin{remark}\label{rem.charge}
  Physically, the charge category consists of operators that are invariant under the symmetry transformations. To see this, note that forgetting the left module structure gives $\Fun_{\cT|\cB}(_\phi\cB,_\phi\cB)\to \Fun_{\cB^\rev}(\cB,\cB)\cong \cB$. Therefore, the bimodule functors in $Z(\phi)=\Fun_{\cT|\cB}(_\phi\cB,_\phi\cB)$ can be roughly thought as operators in $\cB$ that further commutes with the image of $\phi$:
  \[ Z(\phi)\sim \{U\in \cB|U\ot \phi(X) \sim \phi(X)\ot U, \forall X\in \cT\}.\]
\end{remark}

\begin{example}\label{eg.ssb}
    Let $\cT=\ve[n]_G$ and $\phi_1:\ve[n]_G\to \ve[n]$ the forgetful functor. The corresponding charge category is $Z(\phi_1)=\Rep[n] G=\Sigma^{n-1}\Rep G$, the iterative condensation completion of the usual representation category\footnote{This result means, in particular, that $\ve[n]_G$ is Morita-equivalent to $\Rep[n]G$, thus $\Rep[n+1]G=\Sigma \Rep[n]G=\Sigma\ve[n]_G$.}. In other words, the (higher-dimensional) charges are the usual symmetry charges and their higher dimensional condensates. In this case the symmetry $\cT$ is not broken. 
    Take another symmetry assignment $\phi_2:\ve[n]_G\to \Fun(\ve[n]_G,\ve[n]_G),\ g\mapsto g\ot -$. The corresponding charge category is $Z(\phi_2)=\ve[n]_G$. In this case the bare theory $ \Fun(\ve[n]_G,\ve[n]_G)$ has $|G|$ ground state sectors, and the charges are symmetry defects. Therefore it describes the spontaneous symmetry breaking phase.
  These examples show that from the charge category $Z(\phi)$ one can see whether the symmetry $\cT$ is spontaneously broken.
\end{example}

\begin{definition}
  \label{def.esb}
  Given two symmetries $\cH$ and $\cT$, if there is a monoidal functor $\eta:\cH\to \cT$, we can force an \emph{explicit symmetry change} by pulling back a $\cT$-SET to a $\cH$-SET, i.e., pre-composing $\Sigma\eta$, $F\mapsto F\circ\Sigma\eta$. In particular, when $\eta$ is an embedding, this is an explicit symmetry breaking from $\cT$ to $\cH$.
\end{definition}
\begin{definition}
  \label{def.eab}
  Given two anomaly categories $\cX$ and $\cY$, and functor $\mu:\cX\to \cY$, we can force an \emph{explicit anomaly change} by pushing forward a $\cT$-SET, i.e., post-composing $\mu$, $F\mapsto \mu\circ F$.
\end{definition}

Consider two category of SET orders $\Fun(\Sigma\cT,\cX)$ and $\Fun(\Sigma\cH,\cY)$, one can naively stack them
\[ \Fun(\Sigma\cT,\cX)\bt\Fun(\Sigma\cH,\cY)=\Fun(\Sigma\cT\bt\Sigma\cH,\cX\bt\cY).\]
However, both the symmetry and anomaly can change. In order to reduce the symmetry or anomaly to the original one, additional structures are required.
\begin{definition}
  A comultiplication of symmetry $\cT$ is a monoidal functor $\Delta: \cT\to \cT\bt\cT$ which is coassociative. Given two SET orders $F_1\in \Fun(\Sigma\cT,\cX),\ F_2\in \Fun(\Sigma\cT,\cY)$, the following defines their stacking which preserves the symmetry $\cT$
   \begin{gather*}
    (F_1\bt F_2)\circ \Sigma\Delta\in \Fun(\Sigma\cT,\cX\bt\cY).
  \end{gather*}
\end{definition}
\begin{remark}
  Preserving the anomaly is more tricky. One may tentatively use an associative functor $\cX\bt\cX\to\cX$ to preserve anomaly. However, it seems more practical to require $Z_0(\cX)$, $Z_0(\cY)$ and $Z_0(\cX\bt\cY)$ to be the same. Therefore, when $\cX$ and $\cY$ are both invertible, they are already good enough (for boson systems). For fermion systems ($\cV=\sve[n]$) or more generally any background $\cV$ which is an $E_k$-fusion $n$-category where $k\geq 2$, we know that the tensor functor of $\cV$, $\cV\bt \cV\xrightarrow{\ot}\cV$, is $E_{k-1}$-monoidal, and we can use the relative center $Z_{\cV|\cV\bt\cV}(\cV_\ot)$ to construct a stacking preserving anomaly: let $\cX$ and $\cY$ be both anomaly-free relative to $\cV$, $Z_0(\cX)=Z_0(\cY)=Z_0(\Sigma\cV)$, we define the staking preserving the background $\cV$ as the composition $\Sigma\cT_1\bt\Sigma\cT_2\xrightarrow{F_1\bt F_2}\cX\bt\cY\to \Sigma Z_{\cV|\cV\bt\cV}(\cV_\ot)\btr[\Sigma\ov{Z_1(\cV\bt \cV)}] (\cX\bt\cY)$.
\end{remark}
  
\begin{example}
  Group-like symmetries admits a natural comultiplication $\Delta(g)=(g,g)$. Therefore, in our framework there is a natural definition of the staking of    bosonic or fermionic SET orders with group-like symmetries.
\end{example}

\begin{remark}
  [Holography] Focus on the category $\Fun(\Sigma\cT,\cX)$. When we interpret $\cT$ as a symmetry and $\cX$ as a category of bare theories, it is the appropriate representation category of SET orders. However, as in Section~\ref{sec.hl}, there is yet another interpretation that $\Fun(\Sigma\cT,\cX)$ is the category of defects between $\Sigma\cT$ and $\cX$, with $\Sigma\cT$ and $\cX$ both viewed as labels of topological phases (objects in $\ve[n+2]$). Assume that $\cX$ is invertible, then $\Fun(\Sigma\cT,\cX)$ is just the category of boundaries of $Z_0(\Sigma\cT)=\Sigma Z_1(\cT)$. $Z_1(\cT)$ is nothing but the so-called categorical symmetry~\cite{JW1905.13279,JW1912.13492,KLW+2005.14178,JW2106.02069}, SymTFT~\cite{ABE+2112.02092,BS2304.02660,BS2305.17159}, symmetry TO~\cite{CW2203.03596,CW2205.06244} or quantum currents~\cite{LZ2305.12917} in the literature. (When $\cT=\ve[n]_G$, $Z_1(\ve[n]_G)$ exactly corresponds to the $G$-gauge theory in $n$+2D.) Moreover, when a SET order $F\in \Fun(\Sigma\cT,\cX)$ is fixed, the corresponding charge category provides another boundary of the SymTFT, i.e. $Z_1(\Omega(\Fun(\Sigma\cT,\cX),F))=Z_1(\cT)$, and $\cT, Z_1(\cT), \Omega(\Fun(\Sigma\cT,\cX),F)$ recovers the ``sandwich''~\cite{FMT2209.07471}. We see that the idea of SymTFT, symmetry/TO correspondence, or topological symmetry is a natural consequence of the representation principle, which appears to a coincidence that the same mathematical structure $\Fun(\Sigma\cT,\cX)$ admits two different physical interpretations. We will come back to this point again when discussing the calculation of gauging.
\end{remark}

The defects between SET orders or topological orders are closely related to the theory of phase transitions. It is proposed in Refs.~\cite{CW2203.03596,CW2205.06244}, that a symmetry $\cT$ is encoded in the symmetry TO $Z_1(\cT)$ and we can study the phase transitions via the boundaries of the symmetry TO. 
For two fusion $1$-categories, they are Morita equivalent if and only if their $E_1$ centers are braided equivalent~\cite{etingof2010fusion}. However this is not true for fusion $n$-category ($n\geq 2$)~\cite{Dec2208.08722,Dec2211.04917}, which means that when two symmetries $\cT$ and $\cH$ share the same symmetry TO, $Z_1(\cT)\cong Z_1(\cH)$, $\cT$ may not be Morita equivalent to $\cH$. Our following theorem further clarifies the relation between center and Morita classes:
\begin{theorem}
  Given two $(n+1)$-vector spaces $\cA$ and $\cB$, their $E_0$-centers are equivalent as multi-fusion $n$-categories, if and only if there exists an invertible object $\cX\in \ve[n+1]$ such that $\cA\bt\cX=\cB$.
\end{theorem}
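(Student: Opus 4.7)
The forward implication is a short higher-linear-algebra computation. Suppose $\cB=\cA\bt\cX$ with $\cX\in\ve[n+1]$ invertible, i.e.\ $\cX\bt\cX^{\op}=\ve[n]$. Using $\Fun(\cC,\cD)=\cD\bt\cC^{\op}$ and the symmetry of $\bt$,
\begin{align*}
Z_0(\cB)=\cB\bt\cB^{\op}=\cA\bt\cA^{\op}\bt\cX\bt\cX^{\op}=\cA\bt\cA^{\op}=Z_0(\cA).
\end{align*}
This equivalence intertwines composition with composition and sends units to units, so it is an equivalence of multi-fusion $n$-categories.

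For the reverse implication, the plan is to deloop a monoidal equivalence $\Phi:Z_0(\cA)\xrightarrow{\sim}Z_0(\cB)$ and reinterpret it as an autoequivalence of $\ve[n+1]$. Since $\Sigma$ is left adjoint to $\Omega$, the $E_1$-monoidal equivalence $\Phi$ lifts to an $E_0$-monoidal (that is, pointed) equivalence $\Sigma\Phi:(\Sigma Z_0(\cA),\bullet)\xrightarrow{\sim}(\Sigma Z_0(\cB),\bullet)$ of pointed $(n+2)$-vector spaces. By definition $Z_0(\cA)=\Omega(\ve[n+1],\cA)$, and $\ve[n+1]$ is indecomposable in $\ve[n+2]$ (it is the $\bt$-unit). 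Assuming $\cA\neq 0$ (the zero case is trivial: $Z_0(\cB)=0$ forces $\cB=0$ and one takes $\cX=\ve[n]$), the identity $\Sigma\Omega(\cY,y)=(\cY,y)$ for indecomposable $\cY$ and non-zero $y$ yields a pointed equivalence $\Sigma Z_0(\cA)\cong(\ve[n+1],\cA)$, and analogously $\Sigma Z_0(\cB)\cong(\ve[n+1],\cB)$. Composing the three pointed equivalences produces an autoequivalence $F:\ve[n+1]\to\ve[n+1]$ in $\ve[n+2]$ with $F(\cA)=\cB$.

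It remains to parametrize $F$ by an invertible object of $\ve[n+1]$. The property $\Fun(\ve[n],\cC)\cong\cC$ applied one categorical level up gives $\Fun(\ve[n+1],\ve[n+1])\cong\ve[n+1]$ via $G\mapsto G(\ve[n])$, with inverse $\cX\mapsto (-\bt\cX)$. Hence $F=-\bt\cX$ for $\cX:=F(\ve[n])$, and $F$ being an equivalence in $\ve[n+2]$ is equivalent to $\cX\bt\cX^{\op}=\ve[n]$, i.e.\ to $\cX$ being invertible in $\ve[n+1]$. Then $\cB=F(\cA)=\cA\bt\cX$. The main technical point is the pointed identification $\Sigma Z_0(\cA)\cong(\ve[n+1],\cA)$: the unpointed version is merely non-chirality of $Z_0(\cA)$ (Remark~\ref{rem.nonchiral}), but one must canonically fix the image of $\bullet$, which is achieved by writing $Z_0(\cA)=\Omega(\ve[n+1],\cA)$ at the distinguished object $\cA$ and invoking $\Sigma\Omega=\id$ for indecomposable targets at non-zero objects.
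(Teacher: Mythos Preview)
Your argument is correct and takes a genuinely different route from the paper's proof. For the reverse implication, the paper constructs $\cX$ explicitly as the relative tensor product $\cA^{\op}\btr[Z_0(\cB)]\cB$ (using that the monoidal equivalence $Z_0(\cA)\cong Z_0(\cB)$ makes $\cA^{\op}$ a right $Z_0(\cB)$-module) and then invokes an external result (Corollary~3.10 of \cite{KZ2107.03858}) to conclude $Z_0(\cX)=\ve[n]$, hence invertibility; the identity $\cA\bt\cX=\cB$ then drops out of $\cA\bt\cA^{\op}=Z_0(\cA)\cong Z_0(\cB)$. You instead deloop the monoidal equivalence to an autoequivalence of $\ve[n+1]$ carrying $\cA$ to $\cB$, and then identify autoequivalences of $\ve[n+1]$ with invertible objects via $\Fun(\ve[n+1],\ve[n+1])\cong\ve[n+1]$. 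The paper's approach is more constructive: it hands you a concrete formula for $\cX$ in terms of $\cA$, $\cB$, and the given equivalence of centers. Your approach is more conceptual and stays entirely within the higher-linear-algebra toolkit developed in Section~\ref{sec.hl}, avoiding the external citation; the cost is that you do not obtain an explicit description of $\cX$ (it is only characterized as $F(\ve[n])$ for a composite equivalence $F$), and you lean on the convention $\Sigma\Omega(\ve[n+1],\cA)=(\ve[n+1],\cA)$ in its pointed form. Your observation that the inverse to $G\mapsto G(\ve[n])$ is $\cX\mapsto(-\bt\cX)$, verified by composing with evaluation at $\ve[n]$, is the clean way to justify that step.
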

\begin{proof}
  One direction is clear: $Z_0(\cB)=Z_0(\cA\bt\cX)=Z_0(\cA)\bt Z_0(\cX)= Z_0(\cA).$ For the other direction, supposing that $Z_0(\cA)\cong Z_0(\cB)$ is a monoidal equivalence, then $\cA^\op$ is a right $Z_0(\cB)$-module. Let $\cX=\cA^\op \btr[Z_0(\cB)] \cB$, by Corollary 3.10 in Ref.~\cite{KZ2107.03858}, we know $Z_0(\cX)=\ve[n]$ and thus $\cX$ is invertible. Moreover, $\cA\bt \cX=\cA\bt\cA^\op \btr[Z_0(\cB)] \cB=Z_0(\cA)\btr[Z_0(\cB)] \cB=\cB$.
\end{proof}
\begin{corollary} The following assertions are equivalent
  \begin{enumerate}
    \item Two fusion $n$-categories $\cT$ and $\cH$ have the same $E_1$-center, $Z_1(\cT)\cong Z_1(\cH)$.
    \item $\cT$ and $\cH$ are Morita equivalent up to an invertible object in $\ve[n+2]$, i.e. there exists invertible $\cX\in \ve[n+2]$ such that $\Sigma\cT\cong \Sigma\cH\bt\cX$.
    \item There exists a braided fusion $(n-1)$-category $\cM$ with $Z_2(\cM)=\ve[n-1]$, such that $\cT$ is Morita equivalent to $\cH\bt \Sigma\cM.$
  \end{enumerate}
\end{corollary}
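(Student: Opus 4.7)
The plan is to prove (1)$\Leftrightarrow$(2) by applying the preceding theorem with $\cA = \Sigma\cT$ and $\cB = \Sigma\cH$, and then (2)$\Leftrightarrow$(3) by invoking the classification of invertible objects in $\ve[n+2]$ recalled in Remark~\ref{rem.anomaly}. Nothing particularly deep is needed; the main task is to keep track of loopings/deloopings carefully.

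For (1)$\Leftrightarrow$(2): Since $\cT$ and $\cH$ are fusion, $\Sigma\cT$ and $\Sigma\cH$ are indecomposable objects of $\ve[n+2]$, and consequently so are $Z_0(\Sigma\cT)$ and $Z_0(\Sigma\cH)$ (they contain the tensor unit $\id_{\Sigma\cT}$ which is simple by fusion, arguing as in Remark~\ref{rem.nonchiral}). For indecomposable higher vector spaces the identity $\Sigma\Omega = \id$ holds, so
\[
Z_0(\Sigma\cT) \;=\; \Sigma\Omega Z_0(\Sigma\cT) \;=\; \Sigma Z_1(\cT),
\]
and likewise for $\cH$. Applying $\Sigma$ and $\Omega$ respectively, one sees that $Z_1(\cT)\cong Z_1(\cH)$ as braided fusion $n$-categories is equivalent to $Z_0(\Sigma\cT)\cong Z_0(\Sigma\cH)$ as multi-fusion $(n+1)$-categories. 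Now invoke the preceding theorem with $\cA := \Sigma\cT$ and $\cB := \Sigma\cH$ in $\ve[n+2]$: the equivalence of $E_0$-centers is equivalent to the existence of an invertible $\cX\in\ve[n+2]$ with $\Sigma\cT \cong \Sigma\cH \bt \cX$, which is exactly (2).

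For (2)$\Leftrightarrow$(3): By Remark~\ref{rem.anomaly}, any invertible $\cX\in\ve[n+2]$ is of the form $\cX=\Sigma^2\cM$ for some non-degenerate braided fusion $(n-1)$-category $\cM$ (i.e.\ $Z_2(\cM)=\ve[n-1]$), and conversely every such $\cM$ produces an invertible $\Sigma^2\cM$. Using the compatibility $\Sigma(\cC\bt\cD)=\Sigma\cC\bt\Sigma\cD$ noted earlier, we compute
\[
\Sigma\cH \bt \Sigma^2\cM \;=\; \Sigma\cH \bt \Sigma(\Sigma\cM) \;=\; \Sigma\bigl(\cH\bt\Sigma\cM\bigr).
\]
Thus $\Sigma\cT\cong\Sigma\cH\bt\cX$ with $\cX=\Sigma^2\cM$ invertible is the same data as $\Sigma\cT\cong\Sigma(\cH\bt\Sigma\cM)$, i.e.\ as $\cT$ being Morita equivalent to $\cH\bt\Sigma\cM$ in the sense of the definition just above this corollary.

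The only subtle point is ensuring that the identifications used above—namely $\Sigma\Omega=\id$ on indecomposable higher vector spaces, and the classification of invertible objects of $\ve[n+2]$ via non-degenerate braided fusion $(n-1)$-categories—apply in the required generality; both are granted by Remarks~\ref{rem.center} and~\ref{rem.anomaly}. Once these are in hand, the corollary reduces to bookkeeping between the $E_0$-center of $\Sigma\cT$ and the $E_1$-center of $\cT$, and between invertibility in $\ve[n+2]$ and non-degeneracy of $\cM$.
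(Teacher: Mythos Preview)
Your proof is correct and is precisely the argument the paper intends: the corollary is stated without proof because it follows immediately from the preceding theorem applied to $\cA=\Sigma\cT$, $\cB=\Sigma\cH$ together with the classification of invertible objects of $\ve[n+2]$ from Remark~\ref{rem.anomaly}. You have simply made explicit the bookkeeping ($Z_0(\Sigma\cT)=\Sigma Z_1(\cT)$ via indecomposability, and $\cX=\Sigma^2\cM$ for non-degenerate $\cM$) that the paper leaves to the reader.
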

\begin{remark}
The above results are straightforward generalization of those for fusion 1-categories~\cite{etingof2010fusion} and fusion 2-categories~\cite{Dec2211.04917}. An equivalent result was obtained in the higher condensation theory; see Theorem\textsuperscript{ph} 3.2.13 in Ref.~\cite{KZZZ2403.07813}.
\end{remark}
Physically, we can see that the category of SET orders, $\Fun(\Sigma\cT,\cX)=(\Sigma\cT)^\op\bt\cX$, is a refined description of the SymTFT $Z_1(\cT)=\Omega Z_0(\Sigma\cT)$, as the possible ambiguity of invertible $\cX$ has been explicitly considered.
Based on Ref.~\cite{LZ2305.12917} and the discussion about gauging or condensation in this paper, we believe that, between two Morita equivalent symmetries, there always exist continuous phase transitions which can be viewed as (generalized) spontaneous symmetry breaking. For two symmetries which are not Morita equivalent, it is reasonable to expect that phase transitions between them are physically distinct from those between Morita equivalent ones, or there is just no continuous phase transition. Such feature should be captured by the invertible higher vector space $\cX$.

Finally we discuss how the information of gauging is naturally encoded in the category of SET orders.
Note that objects in $\Sigma \cT$ can be identified with algebras in $\cT$. Given a functor $F: \Sigma \cT\to \cX$, we denote its restriction to the endo-category on object $A\in \Sigma \cT$, by $\Omega_A F:\Hom_{\Sigma\cT}(A,A)=\bmd A A \cT\to \Hom_\cX(F(A),F(A))$ which is a monoidal functor. When $A=\one_\cT$ the trivial algebra corresponding to $\bullet\in \Sigma\cT$, we get the symmetry assignment $\Omega_\bullet F:\cT\to  \Omega(\cX,F(\bullet))$ with bare theory $\cB:= \Omega(\cX,F(\bullet))$. For a nontrivial $A$, abusing $F(A)$ to denote the algebra in $\cB$, we get a monoidal functor $\Omega_A F:\bmd A A \cT\to \bmd{F(A)}{F(A)}{\cB}$.
\begin{definition}[Gauging]\label{def.gaugingalg}
 Given a symmetry assignment $\phi:\cT\to \cB$, and an (indecomposable) algebra $A\in\cT$ ($A$ labels the symmetry defects to be condensed, a datum necessary for gauging), the gauged theory is $\Omega_A \Sigma\phi: \bmd A A \cT\to \bmd{\phi(A)}{\phi(A)}{\cB}$, i.e., a bare theory $\bmd{\phi(A)}{\phi(A)}{\cB}$ with the gauge symmetry $\bmd A A \cT$.
\end{definition}
In short, gauging is just shifting the base object in $\Sigma\cT$. Since $\cT$ is a fusion $n$-category, $\Sigma\cT$ is indecomposable, and $F=\Sigma \Omega_A F: \Sigma\cT\to \cX$ for any $A\in \Sigma \cT$. Therefore, any gauged theory contains the same information as $\Omega_\bullet F: \cT\to\Omega(\cX,F(\bullet))$. In particular, the charge category is the same $Z(\Omega_A F)=Z(\Omega_\bullet F)$ for any $A\in \Sigma\cT$. In the next section we will discuss more details about gauging: we will give a Morita-invariant story (in terms of modules over $\cT$ instead of algebras in $\cT$), and give the explicit algorithm to calculate the gauged theory. One will see that the algorithm requires essentially only the relative tensor product and relative center of bimodules over fusion $n$-categories. In particular we give the explicit data for ungauging, the inverse to gauging.

\section{Categorical algorithm for gauging}
In this section we discuss the algorithm for gauging. Our starting point is a symmetry assignment 
\[ \phi:\cT\to\cB.\]
Since we focus on fusion $n$-categories in this section, we drop the subscripts for their $E_1$-centers for simplicity, $Z(\cB):=Z_1(\cB)$.
    If one thinks fusion $n$-categories and their bimodules as the data describing potentially anomalous theories and defects between them, the physical meaning of $Z$ is just computing the anomaly, or computing the bulk.  The charge category $Z(\phi)=Z_{\cT|\cB}({}_\phi\cB)= \Fun_{\cT|\cB}({}_\phi\cB,{}_\phi\cB)\cong \Fun_{\cT|\cT}(\cT,{}_\phi\cB_\phi)$ admits a more explicit description, known as the relative center $Z_\cT(\cB)$ (see Appendix~\ref{sec.rc} for more details when $n=1$), which is, informally speaking, the centralizer or ``commutant" of $\phi(\cT)$ in $\cB$ (recall Remark~\ref{rem.charge}). Using the picture that center computes the bulk, we draw these data as Figure~\ref{fig.center}. This graphical representation is essential in our algorithm.
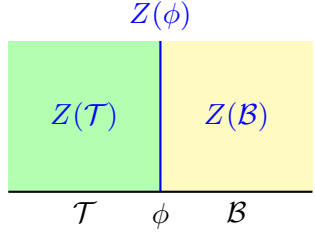
\begin{figure}[ht]
  \centering
\begin{tikzpicture}
  \fill[yellow!30!white] (0,0) rectangle (4,2);
  \fill[green!30!white] (0,0) rectangle (2,2);
  \draw[thick,blue] (2,0) --(2,2) node[above]{$Z(\phi)$};
  \draw[thick] (4,0)--node[below]{$\cB$} (2,0)node[below]{$\phi$} --node[below]{$\cT$} (0,0);
  \node[blue] at (3,1) {$Z(\cB)$};
  \node[blue] at (1,1) {$Z(\cT)$};
\end{tikzpicture}
\caption{Center depicted as bulk.}
\label{fig.center}
\end{figure}

The following two lemmas are useful for later calculation (see also Corollary 3.26 and Remark 3.27 in Ref.~\cite{KZ2107.03858}).
\begin{lemma}
  [Zipping]\label{lem.zip}
  $ \cB\cong \cT\btr[\ov{Z(\cT)}]Z(\phi) $ and under this equivalence the natural monoidal functor 
  \begin{align*}
    \eta_{\cT,Z(\phi)}: \cT &\to \cT\btr[\ov{Z(\cT)}]Z(\phi)\\
    X &\mapsto X \btr[\ov{Z(\cT)}] \one_{Z(\phi)}
  \end{align*}
  coincides with $\phi$.
\end{lemma}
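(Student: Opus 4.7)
The plan is to apply the functoriality of the relative center (Theorem~\ref{thm.fun}) to an appropriate decomposition, after first realizing $\cT$ itself as a relative center. The key observation is that, viewing $\cT$ as the right regular $\cT$-module (equivalently, as an $\ve[n]$-$\cT$-bimodule), the Eilenberg--Watts-style equivalence from Remark~\ref{rmk.ew} gives
\[ Z_{\ve[n]|\cT}(\cT) = \Fun_{\cT^{\rev}}(\cT,\cT) \cong \cT, \]
with the equivalence sending a right $\cT$-module endofunctor to its value on $\one_\cT$, i.e., to left multiplication by an object of $\cT$. This identifies $\cT$, as a multi-fusion $n$-category, with the relative center of its own regular right module.

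With this realization in hand, I would invoke Theorem~\ref{thm.fun} with $\cA = \ve[n]$, $\cB = \cT$ (the ``middle''), $\cC = \cB$, $\cX = \cT$ (the right regular $\cT$-module), and $\cY = {}_\phi\cB$ (the $\cT$-$\cB$-bimodule induced by $\phi$), obtaining
\begin{align*}
\cT \btr[\ov{Z(\cT)}] Z(\phi) &= Z_{\ve[n]|\cT}(\cT) \btr[\ov{Z(\cT)}] Z_{\cT|\cB}({}_\phi\cB)\\
&\cong Z_{\ve[n]|\cB}\bigl(\cT \btr[\cT] {}_\phi\cB\bigr).
\end{align*}
Next I would simplify the inner relative tensor product: the regular right action of $\cT$ on itself absorbs the left $\cT$-structure coming from $\phi$, so $\cT \btr[\cT] {}_\phi\cB \cong {}_\phi\cB \cong \cB$ as right $\cB$-modules. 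Hence the right-hand side becomes $Z_{\ve[n]|\cB}(\cB) = \Fun_{\cB^{\rev}}(\cB,\cB) \cong \cB$, establishing the desired equivalence.

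For the identification $\eta_{\cT,Z(\phi)} = \phi$, the plan is to trace the unit $\one_{Z(\phi)} \in Z(\phi) = Z_{\cT|\cB}({}_\phi\cB)$ through the equivalence. This unit is, by definition, the identity $\cT$-$\cT$-bimodule endofunctor of ${}_\phi\cB_\phi$. The functoriality of the relative center then sends $X \btr[\ov{Z(\cT)}] \one_{Z(\phi)}$ to the bimodule endofunctor of $\cT \btr[\cT] {}_\phi\cB$ acting by $1 \btr[\cT] b \mapsto X \btr[\cT] b = 1 \btr[\cT] \phi(X)\cdot b$, i.e., by left multiplication by $\phi(X)$. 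Under the Eilenberg--Watts equivalence $Z_{\ve[n]|\cB}(\cB)\cong \cB$, left multiplication by $\phi(X)$ corresponds to the object $\phi(X) \in \cB$ itself, giving $\eta_{\cT,Z(\phi)}(X) \cong \phi(X)$.

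The hard part will be the bookkeeping of bimodule structures and orientations through the chain of equivalences, in particular verifying that the \emph{monoidal} structures (not merely the underlying linear ones) match on both sides of $\eta = \phi$. The orientation-reversal on the middle piece $\ov{Z(\cT)}$, together with the need to track the monoidal unit and tensor product simultaneously through the two applications of Eilenberg--Watts and through the functoriality of the relative center, requires care, but should reduce to a routine diagram chase once the correspondence on simple objects is nailed down.
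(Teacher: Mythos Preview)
Your proposal is correct and follows essentially the same approach as the paper: identify $\cT \cong Z_{\ve[n]|\cT}(\cT)$ via Eilenberg--Watts, apply the functoriality of the relative center (Theorem~\ref{thm.fun}) to the pair $(\cT,{}_\phi\cB)$, and reduce $\cT\btr[\cT]{}_\phi\cB\cong\cB$. Your tracing of $X\btr[\ov{Z(\cT)}]\one_{Z(\phi)}$ to left multiplication by $\phi(X)$ also matches the paper's explicit computation; the paper handles the ``bookkeeping'' you flag by invoking the explicit functoriality map from \cite{KZ1507.00503}.
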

\begin{proof}
  Using the functoriality (Theorem~\ref{thm.fun}) and $\cT\cong \Fun_{\ve[n]|\cT}(\cT,\cT)=Z_{\ve[n]|\cT}(\cT)$:
  \begin{align*}
    \cT\btr[\ov{Z(\cT)}]Z(\phi)&\cong Z_{\ve[n]|\cT}(\cT)\btr[\ov{Z(\cT)}]Z_{\cT|\cB}(_\phi\cB)
    \\&\cong Z_{\ve[n]|\cB}(\cT\btr[\cT]{}_\phi\cB)
    \\&\cong \Fun_{\ve[n]|\cB}(\cB,\cB)\simeq \cB.
  \end{align*}
\GY{
To see how $\eta_{\cT,Z(\phi)}$ coincides with $\phi$, we begin with $s\btr[\ov{Z(\cT)}] F \in\cT\btr[\ov{Z(\cT)}]Z(\phi)$ and its image under the above equivalence is
\begin{align*}
   s\btr[\ov{Z(\cT)}] F \mapsto (s\ot -) \btr[\ov{Z(\cT)}] F \mapsto (s\ot -) \btr[\cT] F \mapsto \phi(s)\ot F(-)\mapsto  \phi(s)\ot F( \one_\cB)\in \cB,
\end{align*}
where we have used the equivalence $\cT\cong \Fun_{\ve[n]|\cT}(\cT,\cT)$, 
\begin{equation*}
    \begin{split}
        \cT&\cong \Fun_{\ve[n]|\cT}(\cT,\cT)\\
        t&\mapsto t\otimes -
        % \\
        % F(\one_\cT)&\mapsfrom F
    \end{split}
\end{equation*}
 the explicit functoriality map of the center functor given in \cite{KZ1507.00503}, and the equivalence 
 \begin{equation*}
     \begin{split}
         \cT\btr[\cT]{}_\phi \cB&\cong \cB\\
         t\btr[\cB]X&\mapsto \phi(t)\ot X.
     \end{split}
 \end{equation*}
% The first and last maps are from the equivalence $\cT\cong \Fun_{\ve[n]|\cT}(\cT,\cT)$ and  $\cB\cong \Fun_{\ve[n]|\cB}(\cB,\cB)$, the explicit equivalence is given in \ref{sec.rc}. The second map is the functoriality of the center functor, given in \cite{KZ1507.00503}. About the third map, since 
% \begin{align*}
%     \cT\btr[\cT]{}_\phi \cB = \Fun_{\cT}(\Fun_{\cT^{\rev}}(\cT,\cT),{}_\phi \cB) \cong \Fun_\cT(\cT,{}_\phi\cB)\cong \cB
% \end{align*}
% $t\btr[\cT] X\in  \cT\btr[\cT]{}_\phi \cB$ is a functor  $(\Fun_{\cT^{\rev}}(\cT,\cT) \rightarrow {}_\phi \cB)$ that maps  $t'\otimes - \in \Fun_{\cT^{\rev}}(\cT,\cT)$ to $\phi(t'\otimes t))\otimes X$ (see eqn.(\ref{RDT})). By the next two equivalence, the functor $t\btr[\cT] X$ is further mapped to a functor $(\cT\rightarrow {}_\phi \cB)$, that maps $t'$ to $\phi(t'\otimes t)\otimes X$, and then is mapped to $\phi(t)\otimes X \in \cB$ by taking $t'= \one_\cT$. So $t\btr[\cT] X \mapsto \phi(t)\otimes X$. 
% Since $(s\ot -) \btr[\cT] F$ maps $t\btr[\cT] X$ to $(s\otimes t)\btr[\cT]F(X)$, it then corresponds to the functor $\phi(s)\otimes F(-) \in \Fun_{\ve[n]|\cB}(\cB,\cB)$ that maps $\phi(t)\otimes X$ to $\phi(s)\otimes F(\phi(t)\otimes X) \cong \phi(s\otimes t)\otimes F(X)$ because $F\in Z_{\cT|\cB}(_\phi\cB)$.
So \begin{equation*}
\begin{split}
    \cT\xrightarrow{\eta_{\cT,Z(\phi)}} &\cT\btr[\ov{Z(\cT)}]Z(\phi) \cong \cB \\
    s\mapsto &s\btr[\ov{Z(\cT)}] \one_{Z(\phi)}\mapsto \phi(s)
\end{split}
\end{equation*} 
 coincides with $\phi$.}
\end{proof}

\begin{lemma}
  [Unzipping]\label{lem.unzip}
  Suppose that $\cC$ is a right multi-fusion $Z(\cT)$-module and $\cB\cong \cT\btr[\ov{Z(\cT)}]\cC$. Let $\phi$ be the composition $ \cT\xrightarrow{\eta_{\cT,\cC}} \cT\btr[\ov{Z(\cT)}]\cC \cong \cB$, we have $\cC\cong Z(\phi).$
\end{lemma}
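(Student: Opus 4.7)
The strategy is to compute $Z(\phi)$ directly from its definition $Z_{\cT|\cB}({}_\phi\cB)$, applying the functoriality of the relative center (Theorem~\ref{thm.fun}) in the reverse direction to the proof of Lemma~\ref{lem.zip}. The key observation is that the hypothesis $\cB\cong \cT\btr[\ov{Z(\cT)}]\cC$ lets us decompose the $\cT$-$\cB$-bimodule ${}_\phi\cB$ as a relative tensor product of two bimodules: the $\cT$-$Z(\cT)$-bimodule $\cT$ (regular left action, right action through the forgetful braided functor $Z(\cT)\to \cT$) and the $Z(\cT)$-$\cB$-bimodule $\cC$ (left action through the right $Z(\cT)$-module structure that $\cC$ carries by hypothesis, right $\cB$-action through the embedding $c\mapsto \one_\cT\btr[\ov{Z(\cT)}] c$).

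With this decomposition, I would apply Theorem~\ref{thm.fun} to split the relative center as
\[
Z_{\cT|\cB}({}_\cT\cT_{Z(\cT)}\btr[\ov{Z(\cT)}]{}_{Z(\cT)}\cC_\cB)\cong Z_{\cT|Z(\cT)}(\cT)\btr[\ov{Z(Z(\cT))}] Z_{Z(\cT)|\cB}(\cC).
\]
The first factor $Z_{\cT|Z(\cT)}(\cT)=\Fun_{\cT|Z(\cT)}(\cT,\cT)$ collapses to the tensor unit of right $Z(\cT)$-modules, because $\cT$ is the regular $\cT$-bimodule and $Z(\cT)=Z_{\cT|\cT}(\cT)$ already exhausts its bimodule endomorphisms. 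The remaining factor $Z_{Z(\cT)|\cB}(\cC)$ is identified with $\cC$ itself (by the analogue of the trivial identity $\Fun_{\cB^\rev}(\cB,\cB)\cong \cB$ for the right $Z(\cT)$-module structure on $\cC$), and the relative tensor therefore reduces to $\cC$.

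The main obstacle is making the bimodule bookkeeping precise: $Z(\cT)$ is an $E_2$-fusion $n$-category, so one must view it as an ordinary fusion $n$-category before invoking Theorem~\ref{thm.fun}, and one must carefully specify how $\cC$ becomes a $Z(\cT)$-$\cB$-bimodule (the right $\cB$-action passes through the embedding $\cC\hookrightarrow \cB$, not through a direct action on $\cC$). As a cleaner alternative, I would combine Lemma~\ref{lem.zip} applied to $\phi$, giving $\cT\btr[\ov{Z(\cT)}]Z(\phi)\cong \cB$, with the hypothesis $\cT\btr[\ov{Z(\cT)}]\cC\cong \cB$ as $\cT$-pointed right $Z(\cT)$-modules (both pointings being $\phi=\eta_{\cT,-}$). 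The explicit functor $\cC\to Z(\phi)$ sending $c$ to the $\cT$-$\cB$-bimodule endomorphism ``right multiplication by $\one_\cT\btr[\ov{Z(\cT)}] c$'' of ${}_\phi\cB$ carries an automatic half-braiding with $\phi(\cT)$ thanks to the balancing of the relative tensor product over $Z(\cT)$; applying $\cT\btr[\ov{Z(\cT)}]-$ to this functor reproduces the identity of $\cB$, and so by the Morita/Eilenberg-Watts identifications of Remark~\ref{rmk.ew} it must itself be an equivalence, yielding $\cC\cong Z(\phi)$.
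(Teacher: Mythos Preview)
Your first approach conflates two different relative tensor products. The hypothesis $\cB\cong\cT\btr[\ov{Z(\cT)}]\cC$ is a tensor of \emph{multi-fusion} $Z(\cT)$-modules (Remark~\ref{rmk.hm}), computed one categorical level up as $\Omega(\Sigma\cT\btr[\Sigma\ov{Z(\cT)}]\Sigma\cC)$. Theorem~\ref{thm.fun}, by contrast, takes ordinary bimodule $n$-categories and tensors the relative centers over $\ov{Z_1(\text{middle})}$. If you insert $Z(\cT)$ (forgetting to $E_1$) as the middle category, the right-hand side becomes the plain module tensor $\cT\btr[Z(\cT)]\cC$, which is not a priori the multi-fusion tensor from the hypothesis, and the left-hand side is a tensor over $\ov{Z_1(Z(\cT))}$, not over $\ov{Z(\cT)}$. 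Even granting the decomposition, your identification $Z_{Z(\cT)|\cB}(\cC)\cong\cC$ is not the trivial identity you invoke: $\cC$ has no intrinsic right $\cB$-module structure, and the one you build through the embedding $\cC\hookrightarrow\cB$ already presupposes the answer.

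Your second approach is closer but still has a gap. After Lemma~\ref{lem.zip} you have $\cT\btr[\ov{Z(\cT)}]Z(\phi)\cong\cB\cong\cT\btr[\ov{Z(\cT)}]\cC$ and a comparison functor $\cC\to Z(\phi)$; concluding that this is an equivalence requires $\cT\btr[\ov{Z(\cT)}]-$ to \emph{reflect equivalences} of right $Z(\cT)$-multi-fusion modules. Remark~\ref{rmk.ew} does not provide this (it is about ordinary modules over one fusion category, not cancellation over an $E_2$-center), and the statement is essentially equivalent to the lemma you are trying to prove.

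The paper sidesteps both issues by running the computation from $\cC$ rather than from $Z(\phi)$. Using that $Z(\cT)$ is the unit for $\btr[\ov{Z(\cT)}]$ and the identification $Z(\cT)=\Fun_{\cT|\cT}(\cT,\cT)$, one writes
\[
\cC \;=\; Z(\cT)\btr[\ov{Z(\cT)}]\cC \;=\; \Fun_{\cT|\cT}(\cT,\cT)\btr[\ov{Z(\cT)}]\cC,
\]
then pulls the tensor into the second argument of $\Fun_{\cT|\cT}(\cT,-)$ (via the rewriting through $\cT\bt\cT^\rev$) to obtain $\Fun_{\cT|\cT}(\cT,\cT\btr[\ov{Z(\cT)}]\cC)\cong\Fun_{\cT|\cT}(\cT,{}_\phi\cB_\phi)\cong Z(\phi)$. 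No inversion of $\cT\btr[\ov{Z(\cT)}]-$ is needed, and no appeal to Theorem~\ref{thm.fun} with an $E_2$ middle category is made.
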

\begin{proof}
  \begin{align*}
    \cC&= Z(\cT)\btr[\ov{Z(\cT)}] \cC=\Fun_{\cT|\cT}(\cT,\cT)\btr[\ov{Z(\cT)}] \cC
    \\&=\Fun_{\cT|\cT}(\cT,\cT\bt\cT^\rev)\btr[\cT\bt\cT^\rev] \cT\btr[\ov{Z(\cT)}] \cC\\ &=\Fun_{\cT|\cT}(\cT,\cT\btr[\ov{Z(\cT)}] \cC)\cong \Fun_{\cT|\cT}(\cT,{}_\phi\cB_\phi)\cong Z(\phi).
  \end{align*}
\end{proof}
Graphically, zipping and unzipping can be presented as Figure~\ref{fig.zip}.
\begin{figure}[ht]
  \centering
\begin{tikzpicture}
  \fill[yellow!30!white] (0,0) rectangle (4,2);
  \fill[green!30!white] (0,0)--(2,0) --(2,0.5) arc [start angle=0, end angle=90, radius=0.5]
  --(0,1) -- cycle;
  \draw[thick,blue] (2,0) --(2,0.5)  arc  [start angle=0, end angle=90, radius=0.5] 
  --node[above right]{$Z(\phi)$} (0,1);
  \draw[thick] (4,0)--node[below]{$\cB$} (2,0)node[below]{$\phi$} --node[below]{$\cT$} (0,0);
  \node[blue] at (3,1) {$Z(\cB)$};
  \node[blue] at (1,0.5) {$Z(\cT)$};
  \node[blue,above] at (2,2) {$\cT\btr[\ov{Z(\cT)}]Z(\phi)\cong \cB$};
\end{tikzpicture}\ \ \ \ 
\begin{tikzpicture}
  \fill[yellow!30!white] (0,0) rectangle (4,2);
  \fill[green!30!white] (0,0)--(4,0) -- (4,0.2) -- (2.5,0.2) arc [start angle=-90, end angle=-180, radius=0.5] --(2,2) --(0,2) -- cycle;
  \draw[thick,blue] (4,0.2) --node[above]{$\cC$} (2.5,0.2) arc [start angle=-90, end angle=-180, radius=0.5] --(2,2)node[above]{$\cC\cong Z(\eta_{\cT,\cC}:\cT\to \cT\btr[\ov{Z(\cT)}]\cC)$};
  \draw[thick] (4,0)--node[below]{$\cT$} (0,0);
  \node[blue] at (3,1.3) {$Z(\cT\btr[\ov{Z(\cT)}]\cC)$};
  \node[blue] at (1,1) {$Z(\cT)$};
\end{tikzpicture}
\caption{Zipping and unzipping.}
\label{fig.zip}
\end{figure}
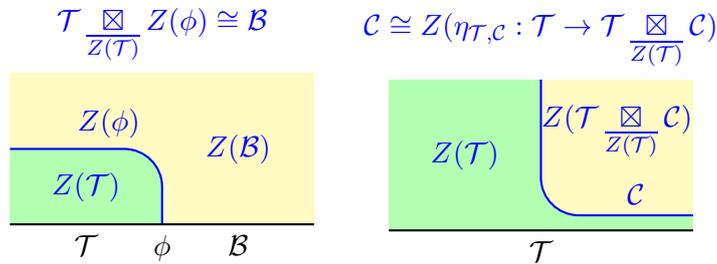

\begin{remark}
  The notion similar to $(\cC,\cT\btr[\ov{Z(\cT)}]\cC\cong \cB)$ first appeared in~\cite{KWZ1502.01690} as a \emph{morphism} between topological orders. Later in~\cite{FMT2209.07471} it is independently proposed again as the \emph{topological symmetry} of $\cB$, often also referred to as ``sandwich''. Here we further clarify the physical meanings of these data. According to the zipping and unzipping lemmas (see also~\cite{KZ1507.00503}), the data of monoidal functor $\phi:\cT\to \cB$ is equivalent to the data of the pair $(\cC,\cT\btr[\ov{Z(\cT)}]\cC\cong \cB)$. Here $\phi$ is the symmetry assignment of $\cB$, and also an object $\Sigma\phi$ in the category of SET orders $\Fun(\Sigma\cT,\Sigma\cB)$. As we have discussed, $\phi$ automatically determines 
  \begin{itemize} \item the charge or representation category $Z(\phi)$ consisting of operators/defects invariant while passing through the image of $\phi$, i.e., invariant under the symmetry action.
    \item the background category~\cite{KZ1705.01087,CJK+1903.12334,KZ1905.04924,KZ1912.01760,LY2208.01572}, categorical symmetry~\cite{JW1905.13279,JW1912.13492,KLW+2005.14178,JW2106.02069}, SymTFT~\cite{ABE+2112.02092,BS2304.02660,BS2305.17159}, symmetry TO~\cite{CW2203.03596,CW2205.06244} or quantum currents~\cite{LZ2305.12917}, which serves as the higher dimensional ``glue'',
    \item the \emph{quiche}~\cite{FMT2209.07471} $(\cT,Z(\cT))$,
    \item and the equivalence $\cT\btr[\ov{Z(\cT)}] Z(\phi)\cong \cB$.
  \end{itemize}
  Conversely $\cT\btr[\ov{Z(\cT)}]\cC\cong \cB$ automatically implies the symmetry assignment, i.e., monoidal functor $\cT\xrightarrow{\eta_{\cT,\cC}} \cT\btr[\ov{Z(\cT)}]\cC \cong \cB$, such that the charge category is exactly $\cC$.
\end{remark}

\begin{definition}
    An abstract gauging of the abstract symmetry $\cT$ is an indecomposable right $\cT$-module $\cK$.  The fusion $n$-category dual to $\cT$ with respect to $\cK$, $\cT_\cK^\vee:=Z_{\ve[n]|\cT}(\cK)=\Fun_{\ve[n]|\cT}(\cK,\cK)$, is physically referred to as the gauge symmetry or dual symmetry (of $\cT$ with respect to $\cK$).
\end{definition}
\begin{remark}
  Equivalently, an abstract gauging can be specified by an (Morita class of) algebra $A$ in $\cT$ such that $\lmd A \cT\cong \cK$. The symmetry defects contained in the algebra $A$ are to be ``summed over'' or ``condensed''. It is not hard to check that $\cT_\cK^\vee=\bmd A A \cT$.
    % If $\ve[n]$ is a $\cT$-module, we know that $\cT$ is a local fusion $n$-category~\cite{KLW+2005.14178} and choosing $\cK=\ve[n]$ corresponds to the complete gauging, i.e., ``summing over'' all possible symmetry defects. In general, $\cK$ can be larger and corresponds to a partial gauging where only a subset of symmetry defects, instead of all, are ``summed over''. 
    $\cT_\cK^\vee$ is Morita equivalent to $\cT$ by definition (see Remark~\ref{rem.mor}).
\end{remark}

\begin{remark} \label{rem.gtype}
\GY{We 
can categorize abstract gauging into the following types
\begin{enumerate}
    \item Complete gauging. If the abstract gauging $\cK$ is equivalent to $\ve[n]$ as an $(n+1)$-vector space (forgetting the $\cT$-module structure), $\cK$ is a complete gauging. In this case, physically, all the symmetry defects are condensed. We also know that there exists at least one local structure for $\cT$, given by the $\cT$-module structure on $\cK$, $\cT\to \Fun(\cK,\cK)\cong \ve[n]$. In practice, we usually fix a local structure $F:\cT\rightarrow \ve[n]$, and complete gauging can be further classified as
    \begin{enumerate}
        \item Ordinary gauging: $\cK=\ve[n]_F$, where the $\cT$-module structure is induced by the local structure $F$. See Examples~\ref{eg.OGT}, \ref{eg.DWGT} and \ref{eg.2set}. 
        \item Twisted gauging.  As we have mentioned, given a fusion $n$-category $\cT$, the monoidal functor $\cT\to \ve[n]$, if exists, is usually not unique. By a twisted gauging, we mean $\cK'=\ve[n]_{F'}$, whose $\cT$-module structure is induced by a monoidal functor $F':\cT\to\ve[n]$ other than the local structure $F$. See Example~\ref{eg.TG}. On the other hand, the ordinary gauging may be called the untwisted gauging.
    \end{enumerate}
    \item Partial gauging: taking the abstract gauging $\cK\neq\ve[n]$, or physically, condensing only part of the symmetry defects.
\end{enumerate}
}
\begin{example}
    For a local $0$-form symmetry $\cT=\ve[n]_G$, upon ordinary gauging, the gauge symmetry is $(\ve[n]_G)_{\ve[n]}^\vee\cong \Rep[n] G\cong\Sigma^{n-1}\Rep G$, an $(n-1)$-symmetry, generated by operators acting on spatial points.
\end{example}

\begin{definition}[Gauging {[Morita-invariant version]}]
\label{def.gauging}
     Given a bare theory $\cB$, we need to specify its symmetry $\phi:\cT\to \cB$, and an abstract gauging $\cK$ of $\cT$, and then the gauged theory with respect to these choices is defined to be $\cB^\phi_\cK:=\cT_\cK^\vee \btr[\ov{Z(\cT)}] Z(\phi) \cong Z_{\ve[n]|\cB}(\cK\btr[\cT] {}_\phi  \cB)$.
\end{definition}

\begin{theorem}[Gauging is reversible]
\label{thm.gauging}
     The gauged theory $\cB^\phi_\cK$ is naturally equipped with the gauge symmetry $\cT_\cK^\vee$ via the monoidal functor
    $\eta_{\cT_\cK^\vee,Z(\phi)}:\cT_\cK^\vee\to \cT_\cK^\vee \btr[\ov{Z(\cT)}] Z(\phi)=\cB^\phi_\cK$,
    and moreover $\cK^\vee$ is naturally a right $\cT_\cK^\vee$-module,
    with the dual $(\cT_\cK^\vee)_{\cK^\vee}^\vee\cong\cT$. 
    Gauging $\cB^\phi_\cK$ with respect to these choices gives the original theory,
     $(\cB^\phi_\cK)^{\eta_{\cT_\cK^\vee,Z(\phi)}}_{\cK^\vee}\cong \cB$.
\end{theorem}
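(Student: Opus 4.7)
The strategy is to combine the Zipping and Unzipping Lemmas~\ref{lem.zip}--\ref{lem.unzip} with the Morita-equivalence data between $\cT$ and its dual $\cT_\cK^\vee$. Since $\cT$ and $\cT_\cK^\vee$ are Morita equivalent by construction (Remark~\ref{rem.mor}), they have the same $E_1$-center $Z(\cT)\cong Z(\cT_\cK^\vee)$. In particular the tensor product $\btr[\ov{Z(\cT)}]$ defining $\cB^\phi_\cK$ is simultaneously a tensor product over $\ov{Z(\cT_\cK^\vee)}$, and $Z(\phi)$, which is a priori a right $Z(\cT)$-module through the center functor, can canonically be regarded as a right $Z(\cT_\cK^\vee)$-module as well.

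With that identification in hand, I would apply Lemma~\ref{lem.unzip} with $(\cT,\cC)\leftarrow(\cT_\cK^\vee, Z(\phi))$: the canonical monoidal functor
$$\eta_{\cT_\cK^\vee,Z(\phi)}:\cT_\cK^\vee\longrightarrow \cT_\cK^\vee\btr[\ov{Z(\cT_\cK^\vee)}]Z(\phi)\cong \cB^\phi_\cK$$
is a symmetry assignment whose charge category recovers $Z(\eta_{\cT_\cK^\vee,Z(\phi)})\cong Z(\phi)$, so the gauged theory is indeed naturally equipped with the gauge symmetry $\cT_\cK^\vee$. Independently, the Morita-equivalence description in Remark~\ref{rem.mor} exhibits $\cK$ and $\cK^\vee$ as mutually inverse bimodules between $\cT$ and $\cT_\cK^\vee$, so $\cK^\vee$ is an indecomposable right $\cT_\cK^\vee$-module and
$$(\cT_\cK^\vee)^\vee_{\cK^\vee}=\Fun_{(\cT_\cK^\vee)^\rev}(\cK^\vee,\cK^\vee)\cong \cT.$$
These two facts supply all the structural data needed to regauge $\cB^\phi_\cK$.

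Plugging them into Definition~\ref{def.gauging} for the triple $(\cB^\phi_\cK,\,\eta_{\cT_\cK^\vee,Z(\phi)},\,\cK^\vee)$ yields
\begin{align*}
(\cB^\phi_\cK)^{\eta_{\cT_\cK^\vee,Z(\phi)}}_{\cK^\vee}
&=(\cT_\cK^\vee)^\vee_{\cK^\vee}\btr[\ov{Z(\cT_\cK^\vee)}]Z(\eta_{\cT_\cK^\vee,Z(\phi)})\\
&\cong \cT\btr[\ov{Z(\cT)}]Z(\phi)\\
&\cong \cB,
\end{align*}
where the last isomorphism is the Zipping Lemma~\ref{lem.zip} applied to the original assignment $\phi:\cT\to\cB$. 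The main point requiring care is the bookkeeping of handedness and reversal inside $\btr[\ov{Z(\cT)}]$: one has to verify that the braided equivalence $Z(\cT)\cong Z(\cT_\cK^\vee)$ supplied by Morita equivalence intertwines the $Z(\cT)$-action on $Z(\phi)$ coming from the center functor with the induced $Z(\cT_\cK^\vee)$-action, so that the two relative tensor products in the computation genuinely coincide; once these orientation checks are carried out, the chain of equivalences above is formal.
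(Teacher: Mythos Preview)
Your proposal is correct and follows essentially the same approach as the paper: apply the Unzipping Lemma to identify $Z(\eta_{\cT_\cK^\vee,Z(\phi)})\cong Z(\phi)$, invoke the Morita equivalence $(\cT_\cK^\vee)^\vee_{\cK^\vee}\cong\cT$, and then apply the Zipping Lemma to recover $\cB$. The paper's proof is terser and leaves implicit the orientation/handedness check you flag at the end, but the logical skeleton is identical.
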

\begin{proof}
    By unzipping (Lemma~\ref{lem.unzip}) we know $Z({\eta_{\cT_\cK^\vee,Z(\phi)}})\cong Z(\phi)$. Therefore, by zipping (Lemma~\ref{lem.zip})
    \begin{align*}
      (\cB^\phi_\cK)^{\eta_{\cT_\cK^\vee,Z(\phi)}}_{\cK^\vee}&\cong(\cT_\cK^\vee)_{\cK^\vee}^\vee\btr[\ov{Z(\cT_\cK^\vee)}]Z({\eta_{\cT_\cK^\vee,Z(\phi)}})
      \\&\cong \cT\btr[\ov{Z(\cT )}]Z(\phi)\cong \cB.
    \end{align*}
\end{proof}
\begin{remark}
  Again, suppose that $\cK=\lmd A\cT$, one can check that $\cB_\cK^\phi=Z_{\ve[n]|\cB}(\lmd A \cT\btr[\cT]{}_\phi\cB)=Z_{\ve[n]|\cB}(\lmd{\phi(A)}\cB)=\bmd{\phi(A)}{\phi(A)}\cB$, $\eta_{\cT_\cK^\vee,Z(\phi)}=\Omega_A\Sigma\phi$, and $\cK^\vee=\rmd A\cT$. Therefore, our two defintions for gauging, Definition~\ref{def.gaugingalg} in terms of algebras in $\cT$ and Definition~\ref{def.gauging} in terms of $\cT$-modules, are equivalent.
\end{remark}
\begin{center}
    \begin{figure}[ht]
   \centering
\begin{tikzpicture}[baseline=(current bounding box.center),scale=1]
  \fill[yellow!30!white] (0,0) rectangle (4,4);
  \fill[green!30!white] (0,0)--(3,0) arc [start angle=-90, end angle=-180, radius=1] arc [start angle=0, end angle=90, radius=1]
  arc [start angle=-90, end angle=-180, radius=1] -- cycle;
  \draw[thick,blue] (3,0) arc [start angle=-90, end angle=-180, radius=1] node[above right]{$Z(\phi)$} arc  [start angle=0, end angle=90, radius=1] 
  arc [start angle=-90, end angle=-180, radius=1];
  \draw[thick,blue] (0,0)--node[left]{$\cT_\cK^\vee$} (0,3) node[below left]{$\eta_{\cT_\cK^\vee,Z(\phi)}$} --node[left]{$\cB_\cK^\phi$} (0,4);
  \draw[thick] (4,0)--node[below]{$\cB$} (3,0)node[below left]{$\phi$} --node[below]{$\cT$} (0,0)node[below left]{$\cK$};
  \node[blue] at (3,3) {$Z(\cB)$};
  \node[blue] at (1,1) {$Z(\cT)$};
\end{tikzpicture}~
\begin{tikzpicture}[baseline=(current bounding box.center),scale=1]
  \fill[yellow!30!white] (0,0) rectangle (4,4);
  \fill[green!30!white] (0,0)--(3,0) arc [start angle=-90, end angle=-180, radius=1] arc [start angle=0, end angle=90, radius=1]
  arc [start angle=-90, end angle=-180, radius=1] -- cycle;
  \draw[thick,blue] (3,0) arc [start angle=-90, end angle=-180, radius=1] node[above right]{$Z(\phi)$} arc  [start angle=0, end angle=90, radius=1] 
  arc [start angle=-90, end angle=-180, radius=1];
  \draw[thick,blue] (0,0)--node[left]{${}_A\cT_A$} (0,3) node[below left]{$\Omega_A \Sigma\phi$} --node[left]{${}_{\phi(A)}\cB_{\phi(A)}$} (0,4);
  \draw[thick] (4,0)--node[below]{$\cB$} (3,0)node[below left]{$\phi$} --node[below]{$\cT$} (0,0)node[below left]{$_A\cT$};
  \node[blue] at (3,3) {$Z(\cB)$};
  \node[blue] at (1,1) {$Z(\cT)$};
\end{tikzpicture}
\caption{Graphical representation for generalized gauging, in terms of $\cT$-module $\cK$ v.s. in terms of algebra $A\in \cT$ where $\cK=\lmd A\cT.$}
\label{fig.gauge}
\end{figure}

\begin{figure}[ht]

   \centering
  \begin{tikzpicture}[baseline=(current bounding box.center)]
  \fill[yellow!30!white] (0,0) rectangle (4,4);
  \fill[green!30!white] (0,0)--(1,0) arc [start angle=-90, end angle=-180, radius=1]  -- cycle;
  \fill[green!30!white] (0,4)--(1,4) arc [start angle=90, end angle=180, radius=1]  -- cycle;
  \draw[thick,blue] (1,0) arc [start angle=-90, end angle=-180, radius=1] ;
  \draw[thick,blue] (1,4) arc [start angle=90, end angle=180, radius=1] ;
  \draw[thick,blue] (0,0)--node[left]{$\cT_\cK^\vee$} (0,1)node[left]{$\eta_{\cT_\cK^\vee,Z(\phi)}$} --node[left]{$\cB_\cK^\phi$} (0,3)node[left]{$\eta_{\cT_\cK^\vee,Z(\phi)}$}
  --node[left]{$\cT_\cK^\vee$} (0,4) ;
  \draw[thick] (4,0)--node[below]{$\cB$} (1,0)node[below right]{$\phi$} --node[below]{$\cT$} (0,0)node[below left]{$\cK$};
  \draw[thick] (4,4)--node[above]{$\cB$} (1,4)node[above right]{$\phi$} --node[above]{$\cT$} (0,4)node[above left]{$\cK^\vee$};
  \node[blue] at (2,2) {$Z(\cB)$};
  \node[blue] at (.6,.6) {$Z(\phi)$};
  \node[blue] at (1.2,3.4) {$Z(\eta_{\cT_\cK^\vee,Z(\phi)})$};
  \end{tikzpicture}
  \ \ =\ \ 
  \begin{tikzpicture}[baseline=(current bounding box.center)]
  \fill[yellow!30!white] (0,0) rectangle (4,4);
  \fill[green!30!white] (0,0)--(3,0) arc [start angle=-90, end angle=-180, radius=1] -- (2,3)
  arc [start angle=180, end angle=90, radius=1] -- (0,4) -- cycle;
  \draw[thick,blue] (3,0) arc [start angle=-90, end angle=-180, radius=1] node[below left]{$Z(\phi)$} --  (2,3) arc  [start angle=180, end angle=90, radius=1];
  \draw[thick,blue] (0,0)--node[left]{$\cT_\cK^\vee$} (0,4);
  \draw[thick] (4,0)--node[below]{$\cB$} (3,0)node[below left]{$\phi$} --node[below]{$\cT$} (0,0)node[below left]{$\cK$};
  \draw[thick] (4,4)--node[above]{$\cB$} (3,4)node[above left]{$\phi$} --node[above]{$\cT$} (0,4)node[above left]{$\cK^\vee$};
  \node[blue] at (3,2) {$Z(\cB)$};
  \node[blue] at (1,2) {$Z(\cT)$};
  \end{tikzpicture}
  \\
  \bigskip

  \begin{tikzpicture}[baseline=(current bounding box.center)]
  \fill[yellow!30!white] (0,0) rectangle (4,4);
  \fill[green!30!white] (0,0)--(1,0) arc [start angle=-90, end angle=-180, radius=1]  -- cycle;
  \fill[green!30!white] (0,4)--(1,4) arc [start angle=90, end angle=180, radius=1]  -- cycle;
  \draw[thick,blue] (1,0) arc [start angle=-90, end angle=-180, radius=1] ;
  \draw[thick,blue] (1,4) arc [start angle=90, end angle=180, radius=1] ;
  \draw[thick,blue] (0,0)--node[left]{$_A\cT_A$} (0,1)node[left]{$\Omega_A \Sigma\phi$} --node[left]{${}_{\phi(A)}\cB_{\phi(A)}$} (0,3)node[left]{$\Omega_A \Sigma\phi$}
  --node[left]{${}_A\cT_A$} (0,4) ;
  \draw[thick] (4,0)--node[below]{$\cB$} (1,0)node[below right]{$\phi$} --node[below]{$\cT$} (0,0)node[below left]{$_A\cT$};
  \draw[thick] (4,4)--node[above]{$\cB$} (1,4)node[above right]{$\phi$} --node[above]{$\cT$} (0,4)node[above left]{$\cT_A$};
  \node[blue] at (2,2) {$Z(\cB)$};
  \node[blue] at (1.8,.6) {$Z(\phi)=Z(\Omega_\bullet\Sigma\phi)$};
  \node[blue] at (1.2,3.4) {$Z(\Omega_A \Sigma\phi)$};
  \end{tikzpicture}
  \ \ =\ \ 
  \begin{tikzpicture}[baseline=(current bounding box.center)]
  \fill[yellow!30!white] (0,0) rectangle (4,4);
  \fill[green!30!white] (0,0)--(3,0) arc [start angle=-90, end angle=-180, radius=1] -- (2,3)
  arc [start angle=180, end angle=90, radius=1] -- (0,4) -- cycle;
  \draw[thick,blue] (3,0) arc [start angle=-90, end angle=-180, radius=1] node[below left]{$Z(\phi)$} --  (2,3) arc  [start angle=180, end angle=90, radius=1];
  \draw[thick,blue] (0,0)--node[left]{${}_A\cT_A$} (0,4);
  \draw[thick] (4,0)--node[below]{$\cB$} (3,0)node[below left]{$\phi$} --node[below]{$\cT$} (0,0)node[below left]{$_A\cT$};
  \draw[thick] (4,4)--node[above]{$\cB$} (3,4)node[above left]{$\phi$} --node[above]{$\cT$} (0,4)node[above left]{$\cT_A$};
  \node[blue] at (3,2) {$Z(\cB)$};
  \node[blue] at (1,2) {$Z(\cT)$};
  \end{tikzpicture}
  \caption{Gauging is reversible.}
  \label{fig.gr}
\end{figure}
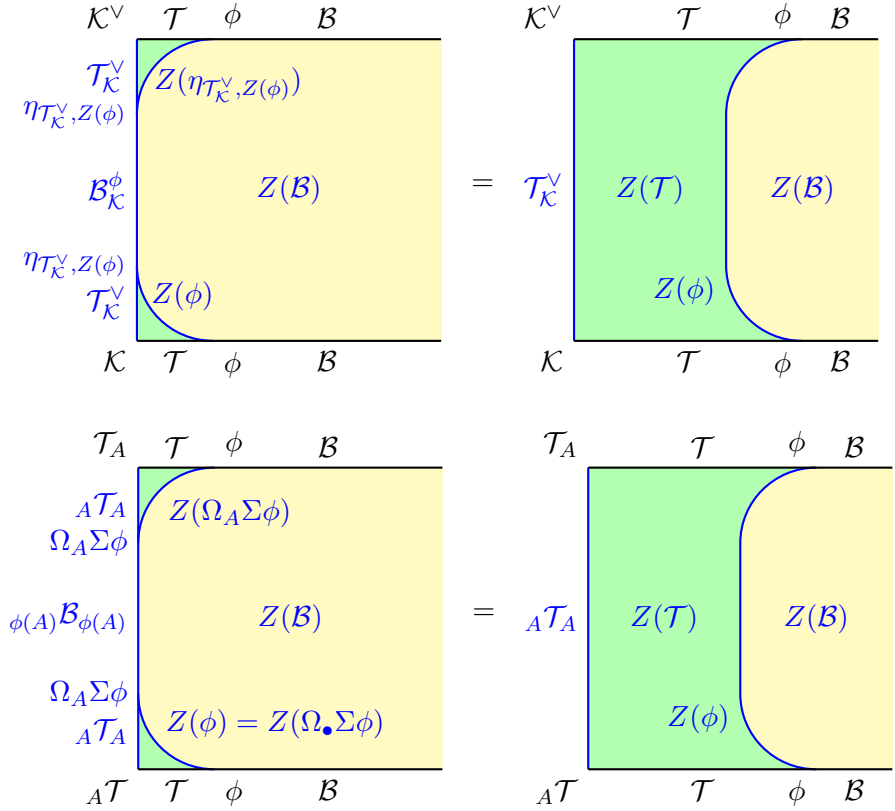
\end{center}    
The gauging procedure can be depicted as Figure~\ref{fig.gauge} where the zipping and unzipping is also drawn in a more intuitive way. Note that $T_\cK^\vee$ is by definition the center of the right $\cT$-module $\cK$, in Figure~\ref{fig.gauge} the bottom line is the macroscopic/categorical data needed for gauging, and the gauging procedure is simply computing the (relative) center, or graphically the bulk, of all these data, resulting in a ``corner'' graph. The reversibility of gauging is concluded in Figure~\ref{fig.gr}.

  It can be seen directly from Definition~\ref{def.gauging} and Theorem~\ref{thm.gauging} that $\cK\btr[\cT]{}_\phi\cB$ is an invertible $\cB_\cK^\phi$-$\cB$-bimodule, and gauging is a special type of Morita equivalence, which preserves the center (physically the anomaly, the bulk, or the quantum currents) $Z(\cB)$ (see Remark~\ref{rem.mor}). 
\begin{remark}
    If we assign $\cB$ itself as the symmetry of $\cB$, i.e., choosing $\phi=\id_\cB:\cB\to \cB$, $Z(\id_\cB) = \Fun_{\cB|\cB}(\cB,\cB)= Z(\cB)$, then gauging exhausts all the Morita equivalences. Each gauging
    $\cB_\cK^\id = \cB_{\cK}^\vee\btr[\ov{Z(\cB)}]Z(\cB)\cong \cB_{\cK}^\vee$ , corresponds to an object $\cK$ in $\rmd{\cB}{ \ve[n+1] }\cong \Sigma\cB$.
\end{remark}

\end{remark}
\begin{remark}
    The ordinary gauging $\cK=\ve[n]$ in such categorical/holographic picture has been studied in previous works, under the name \emph{categorical gauging}~\cite{KLW+2005.14178} or \emph{quotient}~\cite{FMT2209.07471}. 
\end{remark}
\GY{
\begin{remark}\label{rem.gsub}
     A subsymmetry $\cS$ of a fusion $n$-category symmetry $\cT$  is a fusion $n$-category with a monoidal embedding $\eta:\cS\rightarrow \cT$. Note that partially gauging a symmetry does not always mean a complete gauging of a subsymmetry. Let's take $\cK=\lmd A{\cT}\neq \ve[n]$ as an abstract gauging of $\cT$. It  can be realized as a complete gauging of the subsymmetry $\cS$ if there is an algebra $B\in\cS$, such that $\ve[n]=\lmd B{\cS}$, and  $ \cK = \ve[n]\btr[\cS]{}_\eta \cT$ (see Figure~\ref{fig.pgsubg}), or 
        $$\lmd A{\cT} = \lmd B{\cS}\btr[S]{}_\eta \cT=\lmd {\eta(B)} {\cT}.$$ 
In other words, $A$ and $\eta(B)$ are Morita equivalent. If such algebra $B$ does not exist, the partial gauging $\cK=\lmd A{\cT}$ can not be realized by completely gauging the subsymmetry $\cS$.  See Examples~\ref{eg.subsym} and \ref{eg.pgnonsub}.
\end{remark}}\begin{figure}[ht]
    \centering
    \begin{tikzpicture}[baseline=(current bounding box.center),scale=0.8]
    % \fill[yellow!30!white] (0,0) rectangle (4,4);
    \fill[green!30!white] (0,0)--(6,0) arc [start angle=-90, end angle=-180, radius=2] arc [start angle=0, end angle=90, radius=2]
    arc [start angle=-90, end angle=-180, radius=2] -- cycle;
    \draw[thick,blue] (6,0) arc [start angle=-90, end angle=-180, radius=2] node[above right]{$Z(\phi)$} arc  [start angle=0, end angle=90, radius=2] 
    arc [start angle=-90, end angle=-180, radius=2];
    \fill[cyan!30!white] (0,0)--(3,0) arc [start angle=-90, end angle=-180, radius=1] arc [start angle=0, end angle=90, radius=1]
    arc [start angle=-90, end angle=-180, radius=1] -- cycle;
    \draw[thick,blue] (3,0) arc [start angle=-90, end angle=-180, radius=1] node[above right]{$Z(\eta)$} arc  [start angle=0, end angle=90, radius=1] 
    arc [start angle=-90, end angle=-180, radius=1];
    \draw[thick,blue] (0,0) -- node[left]{$\cS_{\ve[n]}^\vee$} (0,3) -- node[left]{$\cT_\cK^\vee$} (0,4.5) node[below left]{$ $} --node[left]{$\cB_\cK^\phi$} (0,8);
    \draw[thick] (8,0)--node[below]{$\cB$} (6,0)
    node[below left]{$\phi$}
    --node[below]{$\cT$} (3,0) node[below left]{$\eta$}
    --node[below]{$\cS$} (0,0)node[below left]{$\ve[n]$};
    \node[blue] at (6,6) {$ $};
    \node[blue] at (2.5,2.5) {$Z(\cT)$};
    \node[blue] at (1,1) {$Z(\cS)$};
    \draw [ thick, decoration={
        brace,
        mirror,
    },
    decorate
] (-1,-0.7) --node [below] {$\cK$} (3,-0.7)  ; 
  \end{tikzpicture}
    \caption{Gauging a subsymmetry as partial gauging}
    \label{fig.pgsubg}
   \end{figure}
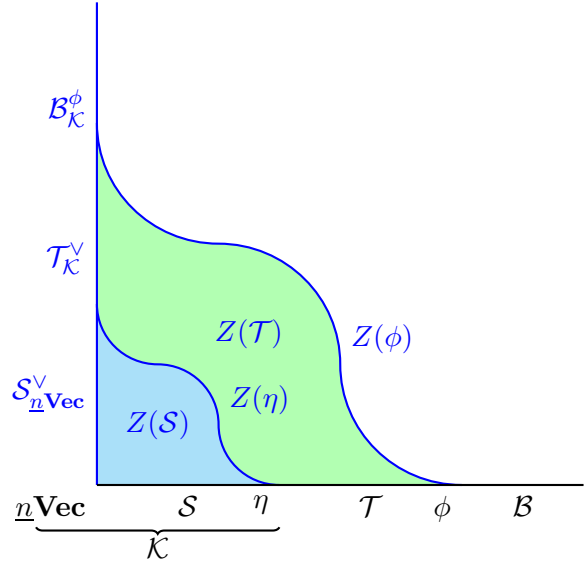

  When $Z(\cB)=\ve[n]$, the original theory $\cB$ and the gauged theory $\cB_\cK^\phi$ are both anomaly-free theories.
  When $Z(\cB)$ is nontrivial, we performed a gauging on the $n$+1D boundary while preserving the bulk.
  As we allow a nontrivial bulk, the higher gauging~\cite{RSS2204.02407} is automatically covered by our framework.
  It is also interesting to consider the ``lower'' gauging, i.e., for a boundary theory $\cB$ we try to gauge its bulk $Z(\cB)$. Based on previous discussions, we should first deloop $Z(\cB)$ to the fusion $(n+1)$-category $\Sigma Z(\cB)=\Fun_{\ve[n+1]|\ve[n+1]}(\Sigma\cB,\Sigma\cB)$, and then specify the symmetry of $\Sigma Z(\cB)$ to be gauged. We are in particular interested in the case that after gauging the bulk becomes trivial, allowing us to obtain an anomaly-free theory still in $n$+1D.
  Let $\cT$ be a fusion $(n+1)$-category with a symmetry assignment $\phi:\cT\to \Sigma Z(\cB)$ and an abstract gauging (right indecomposible $\cT$-module) $\cK$. The condition that the bulk $\Sigma Z(\cB)$ becomes trivial after gauging:  $\left(\Sigma Z(\cB)\right)_{\cK}^{\phi}=Z_{\ve[n+1]|\Sigma Z(\cB)}(\cK\btr[\cT] {}_{\phi}\Sigma Z(\cB))=\ve[n+1]$, is equivalent to $Z_{\ve[n+1]|\ve[n+1]}(\cK\btr[\cT] {}_{\phi}\Sigma Z(\cB))=(\Sigma Z(\cB))^\rev$, i.e., $\cK\btr[\cT] {}_{\phi}\Sigma Z(\cB)$ is a collection of boundary conditions of $Z(\cB)$. If we care only about the $n$+1D theory, the data needed for bulk-gauging can be simplified:

\begin{definition}
  [Bulk-gauging]  Take a (multi-)fusion $n$-category $\cF$ and a braided equivalence $\beta: \ov{Z(\cF)}\cong Z(\cB)$, i.e. $\cF$ is another boundary of $Z(\cB)$ which is not necessarily Morita equivalent to $\cB$. We define the bulk-gauged theory to be the ``sandwich'' $$\cF_\beta\btr[{Z(\cB)}]\cB.$$
\end{definition}

\begin{example}
  [Gauging fermion parity in 2+1D (16-fold way)] The fusion 2-category describing a 2+1D fermionic system is $\sve[2]$ which has a nontrivial center $Z(\sve[2])$. Gauging fermion parity is in fact a bulk-gauging, where we choose the bulk symmetry to be still $\cT=\Sigma Z(\sve[2])$ and the fermion parity fluxes in $Z(\sve[2])$ are to be condensed, i.e., $\cK=\Sigma \sve[2]$ and $\cT_{\cK}^\vee=\ve[3]$. There are 16 choices for braided equivalence $\beta: Z(\sve[2])\to Z(\sve[2])$, and the bulk-gauged theories are $\Sigma(\cM_\beta)$ where $\cM_\beta$ are the 16 minimal modular extensions of $\sve$. The ungauged fermionic invertible phases should really be described by the enriched fusion 2-category $^{Z(\sve[2])}{}_\beta \sve[2]$ where the action of $Z(\sve[2])$ on $\sve[2]$ is the forgetful functor composed with $\beta$.
\end{example}

\section{Examples of gauging}
\begin{example}[Ordinary gauge theory]
\label{eg.OGT}
  Pick a group $G$, and choose $\cA=\ve[n]$ to be the trivial phase, 
  $\cT=\ve[n]_G$ to be the ordinary global symmetry, $\phi:\ve[n]_G\to \ve[n]$ to be functor that forgets the $G$-grading, and $\cK=\ve[n]$ the ordinary gauging. We have $Z(\phi)=\Rep[n] G$ and $(\ve[n]_G)_{\ve[n]}^\vee=\Rep[n] G$.
   The gauged theory is $\ve[n]_{\ve[n]}^\phi=\Rep[n] G\btr[\ov{Z(\ve[n]_G )}]\Rep[n] G=\Fun_{\ve[n]|\ve[n]}(\Rep[n] G,\Rep[n] G)$. For $n\geq 2$, $\ve[n-1]_G$ is an algebra in $\ve[n]_G$, and $\lmd{\ve[n-1]_G}{ \ve[n]_G }= \ve[n]$, $\rmd{\ve[n-1]_G}{ \ve[n]_G }= \ve[n]$, so  
   \begin{align*}
      \ve[n]\btr[{\ve[n]_G}]  \ve[n] &\cong \bmd{\ve[n-1]_G}{\ve[n-1]_G}{ \ve[n]_G }
      \\&\cong \rmd{\ve[n-1]_G}{ \ve[n] }\cong \Sigma\ve[n-1]_G,
   \end{align*}
   then the gauged theory is
  \begin{equation*}
  \begin{split}
      &\Rep[n] G\btr[\ov{Z(\ve[n]_G)}] \Rep[n] G\\
      & = Z_{\ve[n]|\ve[n]_G}(\ve[n])\btr[\ov{Z(\ve[n]_G)}] 
      Z_{\ve[n]_G|\ve[n]}(\ve[n])\\
      &\cong Z_{\ve[n]|\ve[n]}( \ve[n]\btr[{\ve[n]_G}] \ve[n]) = Z_0( \Sigma (\ve[n-1]_G))\\
      & = \Sigma Z(\ve[n-1]_G),
  \end{split}
  \end{equation*}
   which is the $G$-gauge theory in $n$+1D. For the second equality, we have used the functoriality of the center functor. (For $n=1$ the gauged theory is a multi-fusion category, which by negative thinking can also be thought as the $G$-gauge theory in 1+1D, though it is not stable.)
   \begin{figure}[ht]
    \centering
\begin{tikzpicture}[baseline=(current bounding box.center)]
    % \fill[yellow!30!white] (0,0) rectangle (4,4);
    \fill[green!30!white] (0,0)--(3,0) arc [start angle=-90, end angle=-180, radius=1] arc [start angle=0, end angle=90, radius=1]
    arc [start angle=-90, end angle=-180, radius=1] -- cycle;
    \draw[thick,blue] (3,0) arc [start angle=-90, end angle=-180, radius=1] node[above right]{$\Rep[n] G$} arc  [start angle=0, end angle=90, radius=1] 
    arc [start angle=-90, end angle=-180, radius=1];
    \draw[thick,blue] (0,0)--node[left]{$\Rep[n] G$} (0,3) node[below left]{$ $} --node[right]{$Z_{\ve[n]|\ve[n]}(\Rep[n] G)$} (0,4);
    \draw[thick] (4,0)--node[below]{$\ve[n]$} (3,0)node[below left]{$\phi$} --node[below]{$\ve[n]_G$} (0,0)node[below left]{$\ve[n]$};
    \node[blue] at (3,3) {$ $};
    \node[blue] at (1,1) {$Z(\ve[n]_G)$};
  \end{tikzpicture}
    
    \caption{Ordinary gauge theory}
    \label{fig.ogt}
   \end{figure}
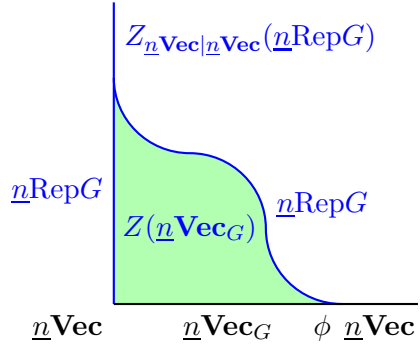
   
\end{example}
\begin{example}
  [Dijkgraaf-Witten gauge theory]
  \label{eg.DWGT}
  Pick the same choice as in the above example except that $\phi^{\omega_{n+1}}:\ve[n]_G\to \ve[n]$ is chosen to be twisted by an $(n+1)$-cocycle $\omega_{n+1}\in H^n(G,U(1))$, and we have identification $_{\phi^{\omega_{n+1}}} \ve[n]\cong \rmd{\ve[n-1]_G^{\omega_{n+1}}}{ \ve[n]_G }$. This choice describes a SPT phase. 
  We have $Z(\phi^{\omega_{n+1}})=\Rep[n] G$ with an $\omega_{n+1}$ twisted action of $Z(\ve[n]_G)$. 
  Similar as the ordinary gauge theory, since
  \begin{align*}
      &\ve[n]\btr[{\ve[n]_G}] {}_{\phi^{\omega{n+1}}} \ve[n] \cong \bmd{\ve[n-1]_G}{\ve[n-1]_G^{\omega_{n+1}}}{ \ve[n]_G }
      \\&\cong \rmd{\ve[n-1]_G^{\omega_{n+1}}}{ \ve[n] }\cong \Sigma\ve[n-1]_G^{\omega_{n+1}},
  \end{align*} then the gauged theory is
  \begin{equation*}
      \begin{split}
          &Z_{\ve[n]|\ve[n]}( \ve[n]\btr[{\ve[n]_G}] {}_{\phi^{\omega{n+1}}} \ve[n])\\& = Z_0( \Sigma (\ve[n-1]_G^{\omega_{n+1}})) = \Sigma Z(\ve[n-1]_G^{\omega_{n+1}}),
      \end{split}
  \end{equation*}
  which is the Dijkgraaf-Witten gauge theory.
  \begin{figure}[ht]
    \centering
\begin{tikzpicture}[baseline=(current bounding box.center)]
    % \fill[yellow!30!white] (0,0) rectangle (4,4);
    \fill[green!30!white] (0,0)--(3,0) arc [start angle=-90, end angle=-180, radius=1] arc [start angle=0, end angle=90, radius=1]
    arc [start angle=-90, end angle=-180, radius=1] -- cycle;
    \draw[thick,blue] (3,0) arc [start angle=-90, end angle=-180, radius=1] node[above right]{$\Rep[n] G$} arc  [start angle=0, end angle=90, radius=1] 
    arc [start angle=-90, end angle=-180, radius=1];
    \draw[thick,blue] (0,0)--node[left]{$\Rep[n] G$} (0,3) node[below left]{$ $} --node[right]{$Z_{\ve[n]|\ve[n]}(\Sigma\ve[n-1]_G^{\omega_{n+1}})$} (0,4);
    \draw[thick] (4,0)--node[below]{$\ve[n]$} (3,0)node[above]{$\phi^{\omega_{n+1}}$} --node[below]{$\ve[n]_G$} (0,0)node[below left]{$\ve[n]$};
    \node[blue] at (3,3) {$ $};
    \node[blue] at (1,1) {$Z(\ve[n]_G)$};
  \end{tikzpicture}

    \caption{Dijkgraaf-Witten gauge theory}
    \label{fig.DWgt}
  \end{figure}
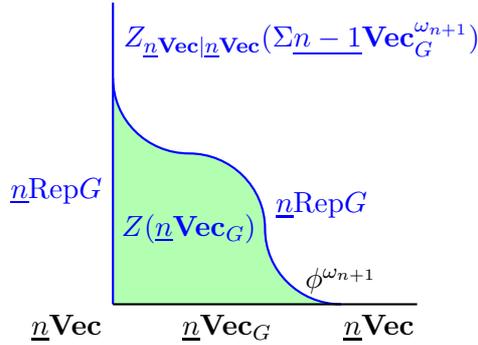
\end{example}
\begin{example}
    [Twisted gauging]
    \label{eg.TG}
    \TL{The twisted fiber functor $\phi^{\omega_{n+1}} :\ve[n]_G\to\ve[n]$ can also be used to equip $\ve[n]$ with a different right $\ve[n]_G$-module structure, denoted by $\ve[n]_{\phi^{\omega_{n+1}}}$. Performing twisted gauging (with $\ve[n]_{\phi^{\omega_{n+1}}}$) to a SPT phase $\phi^{\nu_{n+1}} :\ve[n]_G\to\ve[n]$, the gauged theory is $\Sigma Z(\ve[n-1]_G^{\nu_{n+1}-\omega_{n+1}})$.  
     In 1+1D, such twisted gauging can be realized by the Kennedy-Tasaki transformation~\cite{LOZ2301.07899,SS2404.01369,LSY2405.14939}. 
    % As we have mentioned, given a fusion $n$-category $\cT$, the monoidal functor $\cT\to \ve[n]$, if exists, is usually not unique. Fix a choice a local structure $F:\cT\to \ve[n]$; more generally, we can call $\ve[n]_F$ the ordinary (untwisted) gauging, while $\ve[n]_{F'}$ for other monoidal functors $F':\cT\to\ve[n]$ are called twisted gauging.
    }
\end{example}
\begin{example}
  [2+1D Bosonic SET]\label{eg.2set}
  Take a MTC $\cM$ equipped with a $G$-action. By Ref.~\cite{BBCW1410.4540}, the $G$-action can be gauged if and only if there exists a $G$-crossed braided extension $\cM_G^\xt$, and in the 2-categorical language, if and only if there exists a monoidal functor $\phi:\ve[2]_G\to \Sigma\cM$ (see~\cite{ENO0909.3140} Theorem 7.12, noting that the categorical Picard 2-group is a subcategory of $\rmd \cM {\ve[2]}\cong \Sigma\cM$). Here the abstract gauging is ordinary $\cK=\ve[2]$ and gauge symmetry is $\Rep[2] G$. We have $Z(\phi)=\Sigma \cM^G$ and the gauged theory $(\Sigma\cM)_{\ve[2]}^\phi=\Sigma (\cM_G^\xt)^G$. Here $()^G$ denotes the equivariantization. In the language of Refs.~\cite{LKW1602.05936,LKW1602.05946}, $\cM^G$ is a MTC over $\Rep G$; $(\cM_G^\xt)^G$ is a minimal modular extension of $\cM^G$, which is obtained via gauging the $G$ symmetry.
  \begin{figure}[ht]
    \centering
  \begin{tikzpicture}[baseline=(current bounding box.center)]
      % \fill[yellow!30!white] (0,0) rectangle (4,4);
      \fill[green!30!white] (0,0)--(3,0) arc [start angle=-90, end angle=-180, radius=1] arc [start angle=0, end angle=90, radius=1]
      arc [start angle=-90, end angle=-180, radius=1] -- cycle;
      \draw[thick,blue] (3,0) arc [start angle=-90, end angle=-180, radius=1] node[above right]{$\Sigma\cM^G$} arc  [start angle=0, end angle=90, radius=1] 
      arc [start angle=-90, end angle=-180, radius=1];
      \draw[thick,blue] (0,0)--node[left]{$\Rep[2] G$} (0,3) node[below left]{$ $} --node[right]{$\Sigma (\cM_G^\xt)^G$} (0,4);
      \draw[thick] (4,0)--node[below]{$\Sigma\cM$} (3,0)node[above]{$\phi$} --node[below]{$\ve[2]_G$} (0,0)node[below left]{$\ve[2]$};
      \node[blue] at (3,3) {$ $};
      \node[blue] at (1,1) {$Z(\ve[2]_G)$};
    \end{tikzpicture}
    \caption{2+1D Bosonic SET}
    \label{fig.3dbset}
  \end{figure}
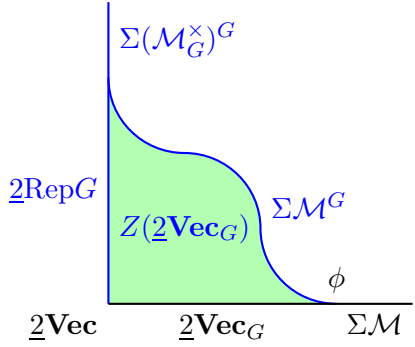
\end{example}
\begin{remark}
     We give a reconstruction from $\phi:\ve[2]_G\to \Sigma\cM$, to the braided $G$-crossed extension $\cM_G^\xt$ via the internal hom
    $$\Hom_{\ve[2]_G}(g,[X,Y])\cong \Hom_{\Sigma\cM}(\phi(g)\ot X,Y).$$
    Let $\one$ be the tensor unit of $\Sigma\cM$. We see $$[\one,\one]\cong \bigoplus\limits_{g\in G}\Hom_{\Sigma\cM}(\phi(g),\one) \in \ve[2]_G$$ is a $G$-graded fusion category with neutral component $\Hom_{\Sigma\cM}(\one,\one)=\cM$. The $G$-crossed braiding is induced from the interchanger together with the invertibility of $\phi(g)$. For $a\in \Hom_{\Sigma\cM}(\phi(g),\one), b\in \Hom_{\Sigma\cM}(\phi(h),\one)$,  we denote $a':=\id_{\phi(g^{-1})} \ot a\in \Hom_{\Sigma\cM}(\one, \phi(g^{-1}))$, 
    % \GY{which means $\id_{\phi(g^{-1})}\ot a$, and   $\id_{\phi(g^{-1})}\in \Hom_{\Sigma \cM}(\phi(g^{-1}),\phi(g^{-1}))$ is the identity 1-morphism.}
    thus
    \begin{align*}
    b\ot a' &\cong (\id_\one\circ b)\ot (a'\circ \id_\one)\cong (\id_\one\ot a')\circ (b\ot \id_\one)
    \\
    &\cong (a'\ot\id_\one)\circ(\id_\one\ot b)\cong (a'\circ \id_\one)\ot (\id_\one \circ b)
    \\&\cong a'\ot b.
    \end{align*}
    Further tensoring $\id_{\phi(g)}$ to the left of both sides, we get
    $$\left(\id_{\phi(g)}\ot b\ot \id_{\phi(g^{-1})}\right) \ot a\cong a\ot b.$$
    We can also see that $[\one,\one]$ admits a natural $G$-action: $\id_{\phi(g)}\ot -\ot \id_{\phi(g^{-1})}$ and thus also a natural half-braiding. $[\one,\one]$ therefore lifts to an algebra in $Z(\ve[2]_G)$.    
\end{remark}
\begin{example}
    [Ordinary gauging of 1-symmetry in 2+1D] If the 1-symmetry $\Sigma\cB$ is local, it admits a monoidal functor $\Sigma\cB\to \ve[2]$, and thus a braided functor $\cB\to \ve$. Therefore, $\cB$ must be a Tannakian category, equivalent to $\Rep G$ for some group $G$. It is then clear that ordinary gauging of $\Sigma\cB$ is just the ungauging of Example~\ref{eg.2set}. 
\label{2+1D1-sym}
\end{example}

\begin{example}
    [Partial gauging\TL{: gauging subsysmmetry}]
    \label{eg.subsym}
    Pick a group $G$ and subgroup $L\subset G$, and choose $\cA=\ve[n]$, $\cT=\ve[n]_G$, $\phi:\ve[n]_G\to \ve[n]$ the forgetful functor, $\eta:\ve[n]_L\hookrightarrow \ve[n]_G$, and $\cK=\ve[n]\btr[{\ve[n]_L}]{}_\eta \ve[n]_G\cong \lmd{\ve[n-1]_L}{\ve[n]_G}$. Under the forgetful functor $\phi$, we have identification $_\phi \ve[n]\cong\rmd{\ve[n-1]_G}{ \ve[n]_G }$, thus $\cK\btr[{\ve[n]_G}] {}_\phi \ve[n]\cong \bmd{\ve[n-1]_L}{\ve[n-1]_G}{ \ve[n]_G }\cong \Rep[n] L$.
    The gauged theory is then $(\ve[n])_\cK^\phi\cong Z_{\ve[n]|\ve[n]}(\cK\btr[{\ve[n]_G}] {}_\phi \ve[n])\cong Z_{\ve[n]|\ve[n]}(\Rep[n] L)$, the $L$-gauge theory, equipped with the gauge symmetry $(\ve[n]_G)_\cK^\vee=Z_{\ve[n]|\ve[n]_G}(\ve[n]\btr[{\ve[n]_L}]{}_\eta \ve[n]_G)\cong \Rep[n] L\btr[\ov{Z(\ve[n]_L)}] Z(\eta)=(\ve[n]_G)_{\ve[n]}^\eta$. 
    Alternatively, one can compute the gauged theory using the fact that $\phi\circ\eta:\ve[n]_L\to\ve[n]_G\to\ve[n]$ is the forgetful functor on $\ve[n]_L$.
    
    Moreover, $\Rep[n] L$ embeds into the gauge symmetry $(\ve[n]_G)_\cK^\vee$, which can be seen from two perspectives: 
    \begin{align*}
    \Rep[n] L &\cong \bmd{\ve[n-1]_L}{\ve[n-1]_L}{ \ve[n]_L }
    \\&
    \hookrightarrow\bmd{\ve[n-1]_L}{\ve[n-1]_L}{ \ve[n]_G }\cong (\ve[n]_G)_\cK^\vee,
    \end{align*}
    or 
    \begin{align*}
    \Rep[n] L&\cong \Rep[n] L\btr[\ov{Z(\ve[n]_L)}] Z(\ve[n]_L)
    \\&
    \hookrightarrow \Rep[n] L\btr[\ov{Z(\ve[n]_L)}] Z(\eta: \ve[n]_L\hookrightarrow \ve[n]_G)
    \\&\cong(\ve[n]_G)_\cK^\vee.
    \end{align*}
     See Figure~\ref{fig.pg} for a graphical presentation. In the special case that $G=L\times G/L$, we further have $(\ve[n]_G)_\cK^\vee\cong \Rep[n] L\bt \ve[n]_{G/L}$.

    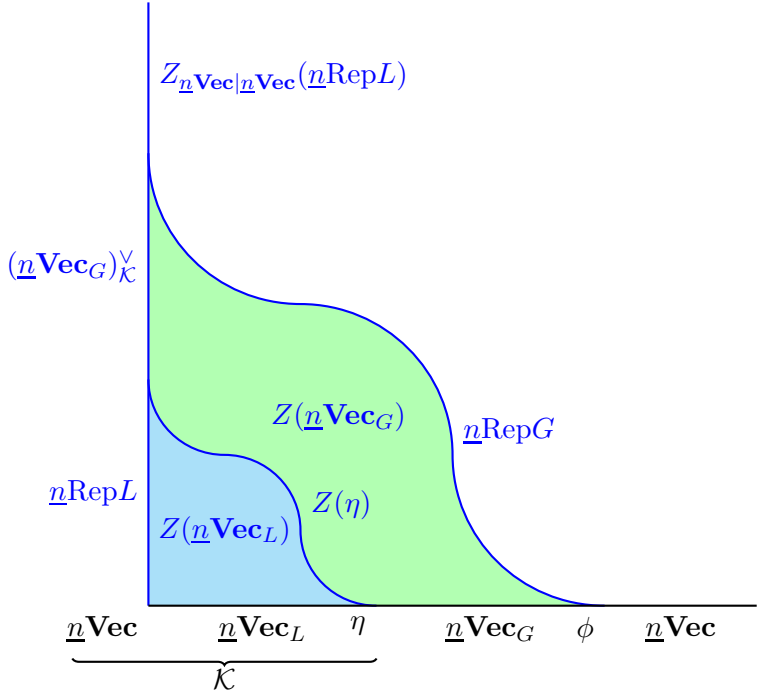
\begin{figure}[ht]
    \centering
    \begin{tikzpicture}[baseline=(current bounding box.center)]
    % \fill[yellow!30!white] (0,0) rectangle (4,4);
    \fill[green!30!white] (0,0)--(6,0) arc [start angle=-90, end angle=-180, radius=2] arc [start angle=0, end angle=90, radius=2]
    arc [start angle=-90, end angle=-180, radius=2] -- cycle;
    \draw[thick,blue] (6,0) arc [start angle=-90, end angle=-180, radius=2] node[above right]{$\Rep[n] G$} arc  [start angle=0, end angle=90, radius=2] 
    arc [start angle=-90, end angle=-180, radius=2];
    \fill[cyan!30!white] (0,0)--(3,0) arc [start angle=-90, end angle=-180, radius=1] arc [start angle=0, end angle=90, radius=1]
    arc [start angle=-90, end angle=-180, radius=1] -- cycle;
    \draw[thick,blue] (3,0) arc [start angle=-90, end angle=-180, radius=1] node[above right]{$Z(\eta)$} arc  [start angle=0, end angle=90, radius=1] 
    arc [start angle=-90, end angle=-180, radius=1];
    \draw[thick,blue] (0,0) -- node[left]{$\Rep[n] L$} (0,3) -- node[left]{$(\ve[n]_G)_\cK^\vee$} (0,6) node[below left]{$ $} --node[right]{$Z_{\ve[n]|\ve[n]}(\Rep[n] L)$} (0,8);
    \draw[thick] (8,0)--node[below]{$\ve[n]$} (6,0)
    node[below left]{$\phi$}
    --node[below]{$\ve[n]_G$} (3,0) node[below left]{$\eta$}
    --node[below]{$\ve[n]_L$} (0,0)node[below left]{$\ve[n]$};
    \node[blue] at (6,6) {$ $};
    \node[blue] at (2.5,2.5) {$Z(\ve[n]_G)$};
    \node[blue] at (1,1) {$Z(\ve[n]_L)$};
    \draw [ thick, decoration={
        brace,
        mirror,
    },
    decorate
] (-1,-0.7) --node [below] {$\cK$} (3,-0.7)  ; 
  \end{tikzpicture}
    \caption{Gauging a subgroup symmetry as partial gauging}
    \label{fig.pg}
   \end{figure}

   % On the other hand, we may consider the fusion $n$-category symmetry $\Rep[n] G$. Pick a normal subgroup $L\subset G$. The quotient map $G\to G/L$ induces an embedding $\Rep[n](G/L)\hookrightarrow \Rep[n]G$.
\end{example}

\begin{example}
    [\TL{Partial gauging which are not gauging subsymmetries}]\label{eg.pgnonsub}
    \TL{Consider fusion 1-category symmetry $\Rep G$ for some finite group $G$.
    For any subgroup $H\hookrightarrow G$, we have a pullback monoidal functor $\Rep G\to \Rep H$, and $\Rep H$ can be viewed as a right $\Rep G$ module, or an abstract gauging. It corresponds to the algebra $\Fun(G/H)$ of linear functions on the cosets $G/H$. 

    Now consider the fusion category symmetry $\Rep S_3$. The only proper fusion subcategory of $\Rep S_3$ is $\Rep \Z_2$. There is a unique way to completely gauge $\Rep \Z_2$, corresponding to the algebra $\Fun(\Z_2)$. Denote the embedding by $\eta:\Rep \Z_2\hookrightarrow \Rep S_3$, we know $\dim \eta(\Fun(\Z_2))=\dim \Fun(\Z_2)=2$. On the other hand, consider the abstract gauging $\Rep \Z_2$ of $\Rep S_3$ ($\Rep \Z_2$ as a right $\Rep S_3$-module instead of as a subcategory). The right $\Rep S_3$-module $\Rep \Z_2$ corresponds to the algebra $\Fun(S_3/\Z_2)$ in $\Rep S_3$, and $\dim \Fun(S_3/\Z_2)=3$. Since the algebras $\eta(\Fun(\Z_2))$ and $\Fun(S_3/\Z_2)$ are both commutative and $\dim \eta(\Fun(\Z_2))\neq \dim \Fun(S_3/\Z_2)$, they cannot be Morita equivalent. Therefore, the abstract gauging $\Rep \Z_2$ of $\Rep S_3$ is not the complete (neither ordinary nor twisted) gauging of any subsymmetry of $\Rep S_3$.}
\end{example}

\begin{example} [Higher gauging of 1-symmetry in 2+1D topological order]
\label{eg.HigherGauging}
 The full symmetry of a 2+1D topological order can be thought as $\Sigma\cM$ for a MTC $\cM$. We now consider gauging $\cM$ in 1+1D, also known as 1-gauging of 1-symmetry in 2+1D~\cite{RSS2204.02407}. Now forget the braiding, think $\cM$ as a fusion 1-category and take the symmetry assignment as $\phi=\id_\cM:\cM\to \cM$. We know that the gauged theories $\cM_\cK^{\id_\cM}$ for all choices of $\cK$, realize all possible gapped boundaries of $Z(\cM)=\cM\bt\ov{\cM}$, or equivalently, the gapped defects in $\cM$ or objects in $\Sigma\cM$.
\end{example}
\section{Conclusion and outlook}
In this paper, we applied the representation principle to study the category of SET orders. Based on a simple formula
\[ \Fun(\Sigma\cT,\cX)\]
we showed in detail how various properties of SET orders can be derived, with the help of the mathematics of higher linear algebra.

We want to emphasize that the spirit of the representation principle is to find a sufficiently complete target category to represent the symmetry. All the relevant physical observables should be included. In this paper, we consider only symmetries which are described by fusion $n$-categories (over complex numbers). Finite onsite unitary symmetries are included in our framework, but still many important physical symmetries are beyond our scope:
\begin{enumerate}
    \item  For fusion $n$-category symmetries, it seems the higher vector spaces, which are Karoubi or condensation complete higher categories including all topological observables, are sufficiently complete, as has been demonstrated in this paper. 
    \item Antiunitary symmetries are essentially $\R$-linear instead of $\C$-linear. In other words, one needs to consider at least the higher vector spaces over real numbers. Unlike $\C$, $\R$ is not algebraically closed, which makes the representation theory technically much more complicated.
    \item For lattice symmetries, probably one has to taken into account, in a more precise manner, how the symmetry acts on a microscopic lattice, as emphasized in~\cite{YZ2309.15118}. In other words, the data describing the microscopic lattice are also physical observables that should be included and for a complete representation category, one need to consider how the symmetry is represented on the lattice.
    \item Even more generally for spacetime symmetries and continuous symmetries, the complete target category should at least include the metric of spacetime or other continuous structures and the representation should also be continuous in appropriate sense. 
\end{enumerate}
\medskip

We believe that in the cases not covered in this paper, the representation principle would still work, and if not, it should be reflected whether all relevant physical observables have been taken into account to form a complete mathematical structure, in which the symmetry is properly represented. 
\medskip

\begin{acknowledgments}
We are grateful to Holiverse Yang for discussions. TL is supported by start-up funding from The Chinese University of Hong Kong, and by funding from Research Grants Council, University Grants Committee of Hong Kong (ECS No.~24304722).
\end{acknowledgments}

\newpage
\appendix
\section{Karoubi Completion}
In this section, we briefly review the Karoubi Completion theory of an $n$-category, developed by Davide Gaiotto and Theo Johnson-Freyd in \cite{GJ1905.09566}.
\begin{definition}[Condensation]
    Let $\cC$ be an $n$-category, $X,Y\in \cC$. For $n=0$, a condensation $X\cto Y$ means $X=Y$. Inductively, for $n>1$, a condensation $X\cto Y$ is a pair of 1-morphism $f: X\rightarrow Y$, $g: Y\rightarrow X$ such that $f\circ g\cto \id_Y$. If $X\cto Y$, we say $Y$ is a condensate of $X$.
    \end{definition}
    \begin{remark}
    Physically, consider two $n$D topological phases $X,Y$. $X\cto Y$ mimics a phase transition from $X$ to $Y$. Some $Y$ regions arise from $X$ with domain walls $f_1$ and $g_1$. When $X$ transits to $Y$, these $Y$ regions begin to connect to each other and become a larger $Y$-region, which means the domain wall $g_1$ and $f_1$ fuse together and leaves  1-codimensional higher defects $f_2,g_2$ between $f_1\circ g_1$ and the trivial codimeion-1 defect in $Y$, i.e. $f_1\circ g_1\cto \id_Y$. 
    After similar processes for higher codimensions, finally $X$ is completely replaced by $Y$. See Figure~\ref{fig.condensation}.
    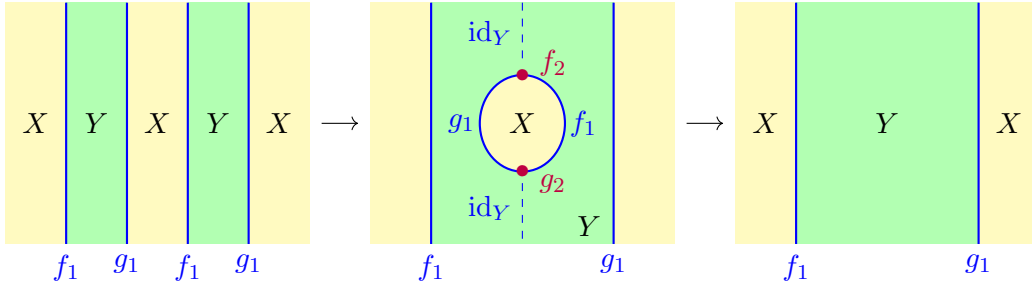
\begin{figure}[ht]
  \centering
\begin{tikzpicture}[scale=0.8]
  \fill[yellow!30!white] (0,0) rectangle (1,4);
   \node[black] at (0.5,2) {$X$};
  \fill[green!30!white] (1,0) rectangle (2,4);
  \node[black] at (1.5,2){$Y$};
  \fill[yellow!30!white] (2,0) rectangle (3,4);
  \fill[green!30!](3,0) rectangle (4,4);
  \fill[yellow!30!white](4,0) rectangle (5,4);
    \node[black] at (2.5,2){$X$};
  \node[black] at (3.5,2){$Y$};
  \node[black] at (4.5,2){$X$};
   \draw[thick,blue](1,0)node[below]{$f_1$}--(1,4);
  \draw[thick,blue](2,0)node[below]{$g_1$}--(2,4);
   \draw[thick,blue](3,0)node[below]{$f_1$}--(3,4);
    \draw[thick,blue](4,0)node[below]{$g_1$}--(4,4);
  \draw[->] (5.2,2)--(5.8,2);
  \fill[yellow!30!white] (6,0) rectangle (7,4);
  \fill[green!30!white](7,0) rectangle (10,4);
  \draw[thick,blue] (7,0)node[below]{$f_1$}--(7,4);
  \node[blue] at (7.5,2){$g_1$};
  \node[blue] at (9.5,2){$f_1$};
  \draw[dashed,blue] (8.5,4)--(8.5,3.5)node[left]{$\id_Y$}--(8.5,3);
  \draw[dashed,blue] (8.5,1.8)--(8.5,0.6)node[left]{$\id_Y$}--(8.5,0);
  \fill[yellow!30!white] (8.5,2) ellipse (0.7 and 0.8);
   \draw[thick,blue](8.5,2) ellipse (0.7 and 0.8);
  \node[purple] at (8.5,2.8){$\bullet$};
  \node[purple] at (9,3){$f_2$};
  \node[purple] at (8.5,1.2){$\bullet$};
  \node[purple] at (9,1){$g_2$};
  \node[black] at  (8.5,2){$X$};
  \node[black] at (9.6,0.3){$Y$};
  \fill[yellow!30!white](10,0) rectangle (11,4);
  \draw[thick,blue] (10,0)node[below]{$g_1$}--(10,4);
  \draw[->](11.2,2)--(11.8,2);
  \fill[yellow!30!white](12,0) rectangle (17,4);
  \fill[green!30!white](13,0) rectangle (16,4);
  \draw[thick,blue] (13,0)node[below]{$f_1$}--(13,4);
  \draw[thick,blue] (16,0)node[below]{$g_1$}--(16,4);
  \node[black] at (12.5,2){$X$};
  \node[black] at (14.5,2){$Y$};
  \node[black] at (16.5,2){$X$};
\end{tikzpicture}
\caption{A phase transition induced by condensation.}
\label{fig.condensation}
\end{figure}
 \end{remark}
\begin{definition}(Walking $n$-condensation)
  The walking $n$-condensation is an $n$-category $\spadesuit_n$ freely generated by an $n$-condesation, i.e. there are two objects $X,Y \in \spadesuit_n$, 1-morphisms are generated by $Y\xrightarrow{g_1}X\xrightarrow{f_1} Y$, 2-morphisms are generated by $\id_Y\xrightarrow{g_2}f_1\circ g_1\xrightarrow{f_2}\id_Y$, so on and so forth, and $n$-morphisms are generated by 
  $\id_{\cdots_{\id_Y}}\xrightarrow{g_n} f_{n-1}\circ g_{n-1}\xrightarrow{f_n}\id_{\cdots_{\id_Y}} $ satisfying 
  \begin{equation}
      f_n\circ g_n = \id_{\cdots_{\id_Y}}. 
  \end{equation}
    Let $\cC$ be an $n$-category. An $n$-condensation in $\cC$ is a functor  $F:\spadesuit_n\rightarrow \cC$.
\end{definition}

\begin{definition}(Walking $n$-condensation monad) The walking $n$-condensation monad $\clubsuit_n$ is the full subcategory of $\spadesuit_n$ restricted on the object $X$. Let $\cC$ be an $n$-category. An $n$-condensation monad in $\cC$ is a functor  $G:\clubsuit_n\rightarrow \cC$.
\end{definition}

\begin{definition}(Karoubi complete/condensation complete)
A $0$-category is always assumed to be Karoubi complete. Inductively, for $n\geq 1$, an $n$-category $\cC$ is Karoubi complete or condensation complete if $\forall X,Y\in \cC$, $\Hom_\cC(X,Y)$ is Karoubi complete, and if any $n$-condensation monad in $\cC$ extends to an $n$-condensation in $\cC$.
\end{definition}

Given an $n$-category $\cC$ whose hom categories are Karoubi complete, we can perform the so-called Karoubi completion of $\cC$, resulting in a Karoubi complete $n$-category $\mathrm{Kar}(\cC)$, whose objects are all the condensates of objects in $\cC$. To be more precise, objects in $\mathrm{Kar}(\cC)$ are condensation monads in $\cC$, 1-morphisms are condensation bimodules, and higher morphisms are higher bimodule maps. We refer readers to \cite{GJ1905.09566} for detailed constructions.

\begin{example}
For a 1-category $\cC$, $\mathrm{Kar}(\cC)$ is the idempotent completion of $\cC$.     
\end{example}

\begin{example}
    For a finite semisimple monoidal 1-category $\cC$, $\Sigma\cC=\mathrm{Kar}(B\cC)$ has a following explicit description, objects are separable algebras in $\cC$, $1$-morphisms are bimodules over algebras, and $2$-morphisms are bimodule maps. One can see~\cite{xi2024,YWL2024} for   examples on $\cC=\sve, \ve_{\mathbb Z_2 \times \mathbb Z_2 }, \ve_{\mathbb Z_4}$.
\end{example}

\section{(Relative) tensor product}\label{sec.tensor}
Let $\mathrm{KarCat_n}$ denote the $(n+1)$-category of Karoubi-complete $\C$-linear $n$-categories. 
In Section~\ref{sec.hl}, we define the  tensor product of $\cC,\cD\in \mathrm{KarCat_n}$ by the object that representing the functor $\Hom_{\mathrm{KarCat_n}}(\cC,\Hom_{\mathrm{KarCat_n}}(\cD,-))$. We sketch a more practical inductive definition, which was given in \cite{Joh2003.06663}.

\begin{definition}[Tensor product in $\mathrm{KarCat_n}$]
Let $\boxtimes$ in $\mathrm{KarCat_0}$ be the tensor product of vector spaces. Suppose $\boxtimes$ in $\mathrm{KarCat_{n-1}}$ is well defined, for $\cC,\cD\in \mathrm{KarCat_n}$ we can then define the naive tensor product $\cC\otimes \cD$ be the $n$-category as follows
\begin{enumerate}
    \item object set:
    \begin{equation*}
        \mathrm{obj}(\cC\otimes \cD) = \mathrm{obj}(\cC) \times \mathrm{obj}(\cD),
    \end{equation*}
 \item home spaces: $\forall x_1,x_2\in \cC$, $y_1,y_2 \in \cD$
 \begin{equation*}
\hom_{\cC\otimes \cD}((x_1,y_1),(x_2,y_2)) = \hom_{\cC}(x_1,x_2)\boxtimes \hom_\cD(y_1,y_2).
 \end{equation*}
\end{enumerate}
Note that $\hom_{\cC}(x_1,x_2), \hom_\cD(y_1,y_2) \in \mathrm{KarCat_{n-1}}$, their Deligne tensor product is supposed to be well defined.
Then 
\begin{equation*}
    \cC\boxtimes \cD=\mathrm{Kar}(\cC\otimes \cD).
\end{equation*}
\end{definition}
\begin{remark}
    Objects in $\cC\boxtimes \cD$  are arbitrary direct sums and condensates generated from $x\boxtimes y, \forall x\in \cC, y\in \cD$. These direct sums and condensates are in general not in $\cC\otimes \cD$.
\end{remark}

\begin{definition}[balanced functor]
Let $\cC$ be a fusion 1-category, and $\cM, \cN$ be a right $\cC$-module category and a left $\cC$-module category with module action and module associator $(\triangleleft: \cM\times \cC \rightarrow \cM, \beta^\cM), (\triangleright: \cC\times \cN\rightarrow \cN, \beta^\cN)$, respectively. Let $\cA$ be a linear finite semisimple category. A balanced functor $F: \cM\times \cN\rightarrow \cA$ is a bilinear functor together with a natural isomorphism 
\begin{equation*}
    \alpha: F(\triangleleft\times \id_\cN)\cong F(\id_\cM\times \triangleright)
\end{equation*}
    satisfying the pentagon equation
    \begin{equation*}
        \begin{tikzcd}
	{F(m\triangleleft(x\otimes y),n)} && {F(m,(x\otimes y)\triangleright n)} \\
	{F((m\triangleleft x)\triangleleft y,n)} && {F(m,x\triangleright(y\triangleright n))} \\
	& {F(m\triangleleft x, y\triangleright n)}
	\arrow["{\alpha_{m,(x\otimes y),n}}", from=1-1, to=1-3]
	\arrow["{F(\beta^{\cM}_{m,x,y}, \id_n)}"', from=1-1, to=2-1]
	\arrow["{F(\id_m,\beta^\cN_{x,y,n})}", from=1-3, to=2-3]
	\arrow["{\alpha_{m\triangleleft x,y,n}}"', from=2-1, to=3-2]
	\arrow["{\alpha_{m,x,y\triangleright n}^{-1}}", from=2-3, to=3-2]
\end{tikzcd}
    \end{equation*}
and triangle equation 
\begin{equation*}
    \begin{tikzcd}
	{F(m\triangleleft \one_\cC,n)} && {F(m,\one_\cC\triangleright n)} \\
	& {F(m,n)}
	\arrow["{\alpha_{m,\one_\cC,n}}", from=1-1, to=1-3]
	\arrow["{F(\mathrm{unit}_m,\id_n)}"', from=1-1, to=2-2]
	\arrow["{F(\id_m, \mathrm{unit}_n)}", from=1-3, to=2-2].
\end{tikzcd}
\end{equation*}
$\forall x,y\in \cC, m\in \cM, n\in \cN$.
\end{definition}

\begin{definition}[Relative tensor product of module 1-categories]\label{RTPn=1}
Let $\cC$ be a multi-fusion 1-category and $\cM,\cN$ be a right $\cC$ module category and left $\cC$ module category, respectively. We define the relative tensor product
$\cM\btr[\cC]\cN$ as the linear finite semisimple category together with a balanced functor $\btr[\cC]$, such that for any balanced functor $F: \cM\times \cN\rightarrow \cA$, there exists a unique linear functor $\Tilde{F}$ such that
\begin{equation*}
\Tilde{F}\circ\btr[\cC] \cong F
\end{equation*}
as balanced functors, i.e. the following diagram 
\begin{equation*}
\begin{tikzcd}
	{\cM\times \cN} \\
	{\cM\btr[\cC]\cN} & \cA
	\arrow["{\btr[\cC]}"', from=1-1, to=2-1]
	\arrow["{\forall F}", from=1-1, to=2-2]
	\arrow["\exists  !{\Tilde{F}}"', from=2-1, to=2-2]
\end{tikzcd}
\end{equation*}
commutes up to a balanced functor natural isomorphism.
\end{definition}

\begin{remark}
Let $A,B$ be algebras in $\cC$ such that $\cM\cong \lmd A{\cC}$ and $\cN\cong \rmd B{\cC}$ as right and left $\cC$-module categories, respectively.
    In \cite{Douglas_2019}, the authors proved that the category of 
$A$-$B$ bimodules $\bmd A B{\cC}$ realizes the relative tensor product $\cM\btr[\cC]\cN$, i.e, 
\begin{equation*}
    \cM\btr[\cC]\cN \cong \lmd A{\cC}\btr[\cC]\rmd B{\cC} \cong \bmd A B{\cC}.
\end{equation*}
For higher category cases in this paper, we also use higher category of bimodules to characterize the relative tensor product.
\end{remark}

\begin{definition}[Relative tensor product of module $n$-categories]\label{RTPn}
 Let $\cC$ be multi-fusion $n$-category, $\cM$ and $\cN$ be a right $\cC$-module category and a left $\cC$-module category, respectively. The relative tensor product $\cM\btr[\cC]\cN$ is defined to be the colimit of the following diagram
 \begin{equation*}
     \begin{tikzcd}
	{...} & {\cM\times\cC\times\cC\times\cN} & {\cM\times\cC\times\cN} & {\cM\times\cN}
	\arrow[shift left=3, from=1-1, to=1-2]
	\arrow[shift right=3, from=1-1, to=1-2]
	\arrow[shift left, from=1-1, to=1-2]
	\arrow[shift right, from=1-1, to=1-2]
	\arrow[shift left=2, from=1-2, to=1-3]
	\arrow[shift right=2, from=1-2, to=1-3]
	\arrow[from=1-2, to=1-3]
	\arrow[shift left, from=1-3, to=1-4]
	\arrow[shift right, from=1-3, to=1-4]
\end{tikzcd}.
 \end{equation*}
\end{definition}

\begin{remark}
 Each arrow on the above diagram means a functor from left to right. For example, the three arrows from $\cM \times \cC\times \cC\times \cN$ to $\cM\times \cC \times \cN$ means 
$(\triangleleft,\id_{\cC},\id_{\cN})$,$(\id_{\cM},\ot,\id_{\cN})$, and $(\id_{\cM},\id_{\cC},\triangleright)$. There are also (higher) cells between these arrows given by (higher) coherence morphisms such as the associator. If we add the arrow
\begin{equation*}
 \cM\times \cN\xrightarrow{\btr[\cC]} \cM\btr[\cC]\cN 
\end{equation*}
in the end, then the composition of $\btr[\cC]$ with different paths of arrows will result in equivalent functors. Also, although this is in general an infinite long diagram, the first $n+2$ terms are sufficient  to define the relative tensor product of two module $n$-categories. For $n=1$, the colimit of the diagram
\begin{equation*}
     \begin{tikzcd}
{\cM\times\cC\times\cC\times\cN} & {\cM\times\cC\times\cN} & {\cM\times\cN}
% & {\cM\btr[\cC]\cN}
	\arrow[shift left=2, from=1-1, to=1-2]
	\arrow[shift right=2, from=1-1, to=1-2]
	\arrow[from=1-1, to=1-2]
	\arrow[shift left, from=1-2, to=1-3]
	\arrow[shift right, from=1-2, to=1-3]
 % \arrow[from =1-3, to=1-4]
\end{tikzcd}
 \end{equation*}
 exactly matches Definition~\ref{RTPn=1} for $\cM\btr[\cC]\cN$.
\end{remark}
Relative tensor product might be understood as the dimension reduction of topological phases. For example, consider $\cC$ as the braided fusion $(n-1)$-category of codimension-2 and higher defects in an $n$+1D topological order. Let $\cM$, $\cN$ be the fusion $n$-categories of codimension-1 and higher defects in two gapped boundaries of $\cC$. The defects in the bulk can fuse with the boundary defects into some new boundary defects, which equips $\cM$ and $\cN$ with right and left $\cC$-module structures, respectively (see Figure~\ref{fig.dr}). Then we do a dimension reduction i.e. squeeze the $\cM,\cC,\cN$ sandwich and get an $n$D topological order, whose codimension-1 and higher defects form the category $\cM\btr[\cC]\cN$.
\begin{figure}[ht]
  \centering
\begin{tikzpicture}
  \fill[yellow!30!white] (0,0) rectangle (4,2);
  \draw[thick,blue] (0,0)--(0,1)node[left]{$\cM$} --(0,2) ;
  \draw[thick,blue] (4,0)-- (4,1)node[right]{$\cN$} --(4,2);
  \node[blue] at (2,1) {$\cC$};
  \draw[->] (4.6, 1)--(5.7,1)node[above]{reduction}--(6.8, 1);
  \draw[thick,blue] (7,0)--(7,1)node[right]{$\cM\btr[\cC]\cN$}--(7,2);
\end{tikzpicture}
\caption{Relative tensor product and dimension reduction.}
\label{fig.dr}
\end{figure}
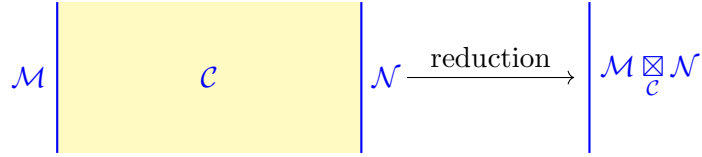

\section{Relative center}\label{sec.rc}
\begin{definition}
    [Relative Center] Let $\cT$, $\cA$ be fusion categories, and $\phi:\cT\rightarrow \cA$ is a tensor functor, then the relative center of $\phi$ is a fusion category $Z_{\cT}(\cA)$. The objects in $Z_{\cT}(\cA)$ are pairs $(X,\gamma)$, where $X\in \cA$, and $\gamma:  \phi(-)\otimes X\cong X\otimes \phi(-)$  is a collection of half braiding natural  isomorphisms, such that the diagram
    \begin{equation*}
         \begin{tikzcd}
	{\phi(s\otimes t )\otimes X} && {X\otimes \phi(s\otimes t )} \\
	{\phi(s)\otimes \phi(t)\otimes X} && {X\otimes\phi(s)\otimes\phi(t)} \\
	& {\phi(s)\otimes X\otimes \phi(t)}
	\arrow["{\gamma_{s\otimes t}}", from=1-1, to=1-3]
	\arrow["{\phi_{s,t}}"', from=1-1, to=2-1]
	\arrow["{\gamma_t}"', from=2-1, to=3-2]
	\arrow["{\phi_{s,t}}", from=1-3, to=2-3]
	\arrow["{\gamma_s}"', from=3-2, to=2-3]
\end{tikzcd}
    \end{equation*}
      
    commutes $\forall s,t\in \cT$. Here $\phi_{s,t}:\phi(s\otimes t)\cong \phi(s)\otimes \phi(t)$ is the tensor functor structure of $\phi$ and associators are omitted. 

    The monoidal structure of $Z_\cT(\cA)$ is given by 
    \begin{equation*}
         (X,\gamma)\otimes (X',\gamma') = (X\otimes X', \gamma\otimes\gamma')
    \end{equation*} 
    where $(\gamma\otimes\gamma')_s$ is given by the composition \begin{equation*}
        \phi(s)\otimes X\otimes X'  \xrightarrow[ ]{\gamma_s} X\otimes \phi(s)\otimes X' \xrightarrow{\gamma'_s}X\otimes X'\otimes\phi(s) 
    \end{equation*}
\end{definition}
\begin{definition}
    (Relative Center, an alternative definition) Let $\cT$ be a fusion category and $\cA$ is a $\cT$-$\cT$ bimodule, then the relative center of $\cA$ is defined as $\Fun_{\cT|\cT}(\cT,\cA)$.
\end{definition}
\begin{proposition} Let $\cT$, $\cA$ be fusion categories, and $\phi:\cT\rightarrow \cA$ is a tensor functor, then 
    $$Z(\phi)\cong \Fun_{\cT|\cT}(\cT,{}_\phi\cA_\phi)\cong Z_{\cT}(\cA).$$
\end{proposition}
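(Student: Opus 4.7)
The plan is to construct explicit equivalences between all three descriptions using the universal property of the regular bimodule, i.e., that a bimodule functor out of the regular bimodule is determined by its value on the tensor unit. I would proceed in three steps: establish $\Fun_{\cT|\cT}(\cT,{}_\phi\cA_\phi)\cong Z_\cT(\cA)$, then $\Fun_{\cT|\cA}({}_\phi\cA,{}_\phi\cA)\cong Z_\cT(\cA)$, and finally verify that the monoidal structures are preserved.

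For the first equivalence, given a $\cT$-$\cT$-bimodule functor $G:\cT\to {}_\phi\cA_\phi$, set $X:=G(\one_\cT)\in \cA$. Left and right $\cT$-linearity respectively give natural isomorphisms $G(s\otimes t)\cong \phi(s)\otimes G(t)$ and $G(t\otimes s')\cong G(t)\otimes \phi(s')$. Evaluating these in the two different orders on $s\otimes \one_\cT\otimes s'$ produces two expressions, $\phi(s)\otimes X\otimes \phi(s')$ from either slot, and the bimodule coherence forces a natural isomorphism $\gamma_s:\phi(s)\otimes X\xrightarrow{\cong} X\otimes \phi(s)$. The pentagon and triangle axioms for the bimodule functor structure with respect to the tensor product in $\cT$ and its compatibility with $\phi_{s,t}$ translate exactly into the pentagon and triangle equations in the definition of $Z_\cT(\cA)$. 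Conversely, any $(X,\gamma)\in Z_\cT(\cA)$ produces a bimodule functor $G(t):=X\otimes \phi(t)$ (equivalently $\phi(t)\otimes X$ via $\gamma_t$), with bimodule structure determined by $\gamma$. Natural transformations between such functors reduce to morphisms in $\cA$ intertwining the half-braidings.

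For the second equivalence, a $\cT$-$\cA$-bimodule endofunctor $F$ of ${}_\phi\cA$ is determined by $Y:=F(\one_\cA)$ via right $\cA$-linearity, $F(a)\cong Y\otimes a$. Left $\cT$-linearity gives $F(\phi(t)\otimes a)\cong \phi(t)\otimes F(a)=\phi(t)\otimes Y\otimes a$, while right $\cA$-linearity applied to $\phi(t)\otimes a$ gives $F(\phi(t)\otimes a)\cong Y\otimes \phi(t)\otimes a$. The comparison of these two is exactly a half-braiding $\gamma_t:\phi(t)\otimes Y\xrightarrow{\cong} Y\otimes \phi(t)$, and the bimodule coherence axioms again recover the pentagon and triangle conditions of $Z_\cT(\cA)$. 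Alternatively, this equivalence can be obtained by combining the first step with the Eilenberg--Watts type identification of Remark~\ref{rmk.ew}, since ${}_\phi\cA_\phi\cong {}_\phi\cA\btr[\cA]\cA_\phi$ as $\cT$-$\cT$-bimodules, so $\Fun_{\cT|\cT}(\cT,{}_\phi\cA\btr[\cA]\cA_\phi)\cong \Fun_{\cT|\cA}({}_\phi\cA,{}_\phi\cA)$ via the hom-tensor adjunction.

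Finally, I would check that composition of endofunctors in $\Fun_{\cT|\cA}({}_\phi\cA,{}_\phi\cA)$ corresponds to the tensor product $(X,\gamma)\otimes (X',\gamma')=(X\otimes X',\gamma\otimes\gamma')$ under the identification: if $F$ and $F'$ correspond to $(Y,\gamma)$ and $(Y',\gamma')$, then $(F\circ F')(\one_\cA)=F(Y')=Y\otimes Y'$, and tracing through the half-braidings on $F(\phi(t)\otimes Y')$ recovers exactly the composite half-braiding $\gamma\otimes \gamma'$ as defined. The main obstacle is the careful bookkeeping of associators and the coherence cells--making sure that the bimodule functor pentagon translates \emph{precisely} into the hexagon of the half-braiding with respect to $\phi_{s,t}$--but this is a straightforward though tedious diagram chase once the correspondence on the underlying object has been pinned down.
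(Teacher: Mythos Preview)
Your proposal is correct and follows essentially the same approach as the paper: both arguments evaluate a bimodule functor at the tensor unit and extract the half-braiding $\gamma$ from the comparison of the left and right module functor structures, with the inverse given by $X\mapsto \phi(-)\otimes X$. The only cosmetic difference is that the paper first identifies $\Fun_{\cT|\cA}({}_\phi\cA,{}_\phi\cA)\cong\Fun_{\cT|\cT}(\cT,{}_\phi\cA_\phi)$ directly (via $F\mapsto F(\phi(-))$ and $G\mapsto G(\one_\cT)\otimes -$) and then passes to $Z_\cT(\cA)$, whereas you compare each functor category to $Z_\cT(\cA)$ separately; your additional verification of the monoidal structure is a welcome detail that the paper's sketch omits.
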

\begin{proof}
Here we sketch the proof. By definition, $Z(\phi) =Z_{\cT|\cA}({}_\phi\cA)= \Fun_{\cT|\cA}({}_\phi\cA,{}_\phi\cA)$, then the first equivalence and its quasi-inverse are   
\begin{equation*}
\begin{split}
\text{Fun}_{\cT|\cA}(_\phi\cA, {}_\phi\cA)&\cong\text{Fun}_{\cT|\cT}(\cT,{}_\phi\cA_\phi)  \\
F &\mapsto \phi(-)\otimes  F(\one_\cA)\cong F(\phi(-))\\
G(\one_{\cT})\otimes - &\mapsfrom G
\end{split}
\end{equation*}
the module functor structure of $F(\phi(-))$ and $G(\one_\cT)\otimes -$ are inherited from the module functor structure of $F$ and $G$. The invertibility is then
\begin{equation*}
\begin{split}
     &F(\phi(\one_\cT))\otimes - \cong F(\one_\cA\otimes - ) \cong F;\\
     &G(\one_\cT)\otimes \phi(-) \cong G(\one_\cT\otimes -) \cong G
\end{split}
\end{equation*}

The second equivalence and its quasi-inverse are 
\begin{equation*}
\begin{split}
\text{Fun}_{\cT|\cT}(\cT, {}_\phi\cA_\phi)& \cong Z_{\cT}(\cA)\\
    G&\mapsto G(\one_{\cT})\\
    \phi(-)\otimes X &\mapsfrom X
\end{split}
\end{equation*}
The half braiding of $G(\one_{\cT})$ is given by
\begin{equation*}
    \phi(s)\otimes G(\one_{\cT})\simeq G(s\otimes \one_{\cT})\simeq G(\one_{\cT}\otimes s)\simeq G(\one_{\cT})\otimes \phi(s).
\end{equation*}
The left $\cT$ module functor structure of $\phi(-)\otimes X$ is given by 
\begin{align*}
 \phi(s_1\otimes s_2)\otimes X \simeq \phi(s_1)\otimes (\phi(s_2)\otimes X)
\end{align*}
since $\phi$ is a monoidal functor. It's right $\cT$ module functor structure is given by 
\begin{align*}
    &\phi(s_1\otimes s_2)\otimes X \simeq \phi(s_1)\otimes(\phi(s_2)\otimes X)
    \\&\simeq \phi(s_1)\otimes (X\otimes \phi(s_2))\simeq (\phi(s_1)\otimes X)\otimes \phi(s_2),
\end{align*}
the second isomorphism is the half braiding of $X$ with $\phi(s)$. The invertibility is 
\begin{equation*}
    \begin{split}
&\phi(-)\otimes G(\one_\cT)\simeq G(-\otimes \one_\cT)\simeq G; \\
&\phi(\one_\cT)\otimes X\simeq \one_\cA \otimes X\simeq X        
    \end{split}
\end{equation*}

\end{proof}
% \begin{definition}[$E_m$ monoidal $n$-category]
%     An $E_0$ monoidal $n$-category is an $n$-category $\cC$ with a distinguished object $\bullet\in\cC$. An $E_1$ monoidal $n$-category is a pair $(\cC, B\cC)$ where $B\cC$ is an $E_0$ monoidal $(n+1)$-category with a single object $\bullet\in B\cC$ and $\cC=\Hom_{B\cC}(\bullet,\bullet)$. By induction, an $E_m$ monoidal $n$-category is a pair $(\cC, B\cC)$ where $B\cC$ is an $E_{m-1}$ monoidal $(n+1)$-category with a single object $\bullet\in B\cC$ and $\cC=\Hom_{B\cC}(\bullet,\bullet)$.
% \end{definition}

% \begin{remark}
%     An $E_1$ monoidal $1$-category is the monoidal category in the usual sense.
% \end{remark}

% \begin{definition}[Looping]
%     Let $\cD$ be a $E_{m-1}$ monoidal $(n+1)$-category with a distinguished object $\one_\cD$, then the looping of $(\cD,\one_\cD)$ is a $E_m$ monoidal $n$-category $\Omega\cD:= \text{Hom}_{\cD}(\one_\cD,\one_\cD)$.   
% \end{definition}

% \begin{definition}[Delooping]
% Let $\cA$ be a $E_m$ monoidal $n$-category, the delooping of $\cA$, denoted as $B\cA$, is a $E_{m-1}$ monoidal $(n+1)$-category with a single object $\bullet$, and $\text{Hom}_{B\cA}(\bullet,\bullet)=\cA$.  
% \end{definition}

% \begin{definition}[Condensation Completion(?)]
%     Let $\cA$ be a fusion $n$ category, the condensation completion of $\cA$ is $\Sigma \cA:= \text{Kar}(B\cA)$ 
% \end{definition}
% Please always give a title also for appendices.

\bibliographystyle{JHEP}
\bibliography{biblio}

\providecommand{\href}[2]{#2}\begingroup\raggedright\begin{thebibliography}{10}

\bibitem{KW1405.5858}
L.~Kong and X.-G.~Wen, \emph{Braided fusion categories, gravitational anomalies, and the mathematical framework for topological orders in any dimensions}, \href{https://doi.org/10.48550/ARXIV.1405.5858}{\emph{arXiv e-prints} (2014) 69} [\href{https://arxiv.org/abs/1405.5858}{{\ttfamily 1405.5858}}].

\bibitem{Wen89}
X.G.~Wen, \emph{Vacuum degeneracy of chiral spin states in compactified space}, \href{https://doi.org/10.1103/PhysRevB.40.7387}{\emph{Physical Review B} {\bfseries 40} (1989) 7387}.

\bibitem{Wen90}
X.G.~Wen, \emph{Topological orders in rigid states}, \href{https://doi.org/10.1142/S0217979290000139}{\emph{International Journal of Modern Physics B} {\bfseries 04} (1990) 239}.

\bibitem{Kit0506438}
A.~Kitaev, \emph{Anyons in an exactly solved model and beyond}, \href{https://doi.org/10.1016/j.aop.2005.10.005}{\emph{Annals of Physics} {\bfseries 321} (2006) 2} [\href{https://arxiv.org/abs/cond-mat/0506438}{{\ttfamily cond-mat/0506438}}].

\bibitem{Wen1506.05768}
X.-G.~Wen, \emph{A theory of 2+1{D} bosonic topological orders}, \href{https://doi.org/10.1093/nsr/nwv077}{\emph{National Science Review} {\bfseries 3} (2016) 68} [\href{https://arxiv.org/abs/1506.05768}{{\ttfamily 1506.05768}}].

\bibitem{LKW1704.04221}
T.~Lan, L.~Kong and X.-G.~Wen, \emph{Classification of 3+1{D} bosonic topological orders: the case when pointlike excitations are all bosons}, \href{https://doi.org/10.1103/PhysRevX.8.021074}{\emph{Physical Review X} {\bfseries 8} (2018) 021074} [\href{https://arxiv.org/abs/1704.04221}{{\ttfamily 1704.04221}}].

\bibitem{LW1801.08530}
T.~Lan and X.-G.~Wen, \emph{Classification of 3+1{D} bosonic topological orders ({II}): the case when some pointlike excitations are fermions}, \href{https://doi.org/10.1103/PhysRevX.9.021005}{\emph{Physical Review X} {\bfseries 9} (2019) 021005} [\href{https://arxiv.org/abs/1801.08530}{{\ttfamily 1801.08530}}].

\bibitem{Joh2003.06663}
T.~Johnson-Freyd, \emph{On the classification of topological orders}, \href{https://doi.org/10.1007/s00220-022-04380-3}{\emph{Communications in Mathematical Physics} {\bfseries 393} (2022) 989} [\href{https://arxiv.org/abs/2003.06663}{{\ttfamily 2003.06663}}].

\bibitem{GKSW1412.5148}
D.~Gaiotto, A.~Kapustin, N.~Seiberg and B.~Willett, \emph{Generalized global symmetries}, \href{https://doi.org/10.1007/jhep02(2015)172}{\emph{Journal of High Energy Physics} {\bfseries 2015} (2015) } [\href{https://arxiv.org/abs/1412.5148}{{\ttfamily 1412.5148}}].

\bibitem{JW1905.13279}
W.~Ji and X.-G.~Wen, \emph{Non-invertible anomalies and mapping-class-group transformation of anomalous partition functions}, \href{https://doi.org/10.1103/physrevresearch.1.033054}{\emph{Physical Review Research} {\bfseries 1} (2019) 033054} [\href{https://arxiv.org/abs/1905.13279}{{\ttfamily 1905.13279}}].

\bibitem{JW1912.13492}
W.~Ji and X.-G.~Wen, \emph{Categorical symmetry and noninvertible anomaly in symmetry-breaking and topological phase transitions}, \href{https://doi.org/10.1103/physrevresearch.2.033417}{\emph{Physical Review Research} {\bfseries 2} (2020) 033417} [\href{https://arxiv.org/abs/1912.13492}{{\ttfamily 1912.13492}}].

\bibitem{KLW+2005.14178}
L.~Kong, T.~Lan, X.-G.~Wen, Z.-H.~Zhang and H.~Zheng, \emph{Algebraic higher symmetry and categorical symmetry: a holographic and entanglement view of symmetry}, \href{https://doi.org/10.1103/physrevresearch.2.043086}{\emph{Physical Review Research} {\bfseries 2} (2020) 043086} [\href{https://arxiv.org/abs/2005.14178}{{\ttfamily 2005.14178}}].

\bibitem{JW2106.02069}
W.~Ji and X.-G.~Wen, \emph{A unified view on symmetry, anomalous symmetry and non-invertible gravitational anomaly}, \href{https://doi.org/10.48550/ARXIV.2106.02069}{\emph{arXiv e-prints} (2021) } [\href{https://arxiv.org/abs/2106.02069}{{\ttfamily 2106.02069}}].

\bibitem{TW1912.02817}
R.~Thorngren and Y.~Wang, \emph{Fusion category symmetry i: Anomaly in-flow and gapped phases}, \href{https://doi.org/10.48550/ARXIV.1912.02817}{\emph{arXiv e-prints} (2019) } [\href{https://arxiv.org/abs/1912.02817}{{\ttfamily 1912.02817}}].

\bibitem{TW2106.12577}
R.~Thorngren and Y.~Wang, \emph{Fusion category symmetry ii: Categoriosities at c = 1 and beyond}, \href{https://doi.org/10.48550/ARXIV.2106.12577}{\emph{arXiv e-prints} (2021) } [\href{https://arxiv.org/abs/2106.12577}{{\ttfamily 2106.12577}}].

\bibitem{CW2203.03596}
A.~Chatterjee and X.-G.~Wen, \emph{Symmetry as a shadow of topological order and a derivation of topological holographic principle}, \href{https://doi.org/10.1103/physrevb.107.155136}{\emph{Physical Review B} {\bfseries 107} (2023) 155136} [\href{https://arxiv.org/abs/2203.03596}{{\ttfamily 2203.03596}}].

\bibitem{CW2205.06244}
A.~Chatterjee and X.-G.~Wen, \emph{Holographic theory for continuous phase transitions -- the emergence and symmetry protection of gaplessness}, \href{https://doi.org/10.1103/physrevb.108.075105}{\emph{Physical Review B} {\bfseries 108} (2023) 075105} [\href{https://arxiv.org/abs/2205.06244}{{\ttfamily 2205.06244}}].

\bibitem{FMT2209.07471}
D.S.~Freed, G.W.~Moore and C.~Teleman, \emph{Topological symmetry in quantum field theory}, \href{https://doi.org/10.48550/ARXIV.2209.07471}{\emph{arXiv e-prints} (2022) } [\href{https://arxiv.org/abs/2209.07471}{{\ttfamily 2209.07471}}].

\bibitem{ABE+2112.02092}
F.~Apruzzi, F.~Bonetti, I.G.~Etxebarria, S.S.~Hosseini and S.~Schafer-Nameki, \emph{Symmetry tfts from string theory}, \href{https://doi.org/10.1007/s00220-023-04737-2}{\emph{Communications in Mathematical Physics} {\bfseries 402} (2021) 895} [\href{https://arxiv.org/abs/2112.02092}{{\ttfamily 2112.02092}}].

\bibitem{BS2304.02660}
L.~Bhardwaj and S.~Schafer-Nameki, \emph{Generalized charges, part i: Invertible symmetries and higher representations}, \href{https://doi.org/10.21468/scipostphys.16.4.093}{\emph{SciPost Physics} {\bfseries 16} (2023) 093} [\href{https://arxiv.org/abs/2304.02660}{{\ttfamily 2304.02660}}].

\bibitem{BS2305.17159}
L.~Bhardwaj and S.~Schafer-Nameki, \emph{Generalized charges, part ii: Non-invertible symmetries and the symmetry tft}, \href{https://doi.org/10.48550/ARXIV.2305.17159}{\emph{arXiv e-prints} (2023) } [\href{https://arxiv.org/abs/2305.17159}{{\ttfamily 2305.17159}}].

\bibitem{Sha2308.00747}
S.-H.~Shao, \emph{What's done cannot be undone: Tasi lectures on non-invertible symmetry}, \href{https://doi.org/10.48550/ARXIV.2308.00747}{\emph{arXiv e-prints} (2023) } [\href{https://arxiv.org/abs/2308.00747}{{\ttfamily 2308.00747}}].

\bibitem{CGLW1106.4772}
X.~Chen, Z.-C.~Gu, Z.-X.~Liu and X.-G.~Wen, \emph{Symmetry protected topological orders and the group cohomology of their symmetry group}, \href{https://doi.org/10.1103/PhysRevB.87.155114}{\emph{Physical Review B} {\bfseries 87} (2013) 155114} [\href{https://arxiv.org/abs/1106.4772}{{\ttfamily 1106.4772}}].

\bibitem{BBCW1410.4540}
M.~Barkeshli, P.~Bonderson, M.~Cheng and Z.~Wang, \emph{Symmetry fractionalization, defects, and gauging of topological phases}, \href{https://doi.org/10.1103/PhysRevB.100.115147}{\emph{Physical Review B} {\bfseries 100} (2019) 115147} [\href{https://arxiv.org/abs/1410.4540}{{\ttfamily 1410.4540}}].

\bibitem{MR1212.0835}
A.~Mesaros and Y.~Ran, \emph{Classification of symmetry enriched topological phases with exactly solvable models}, \href{https://doi.org/10.1103/PhysRevB.87.155115}{\emph{Physical Review B} {\bfseries 87} (2013) 155115} [\href{https://arxiv.org/abs/1212.0835}{{\ttfamily 1212.0835}}].

\bibitem{CBVF1403.6491}
X.~Chen, F.J.~Burnell, A.~Vishwanath and L.~Fidkowski, \emph{Anomalous symmetry fractionalization and surface topological order}, \href{https://doi.org/10.1103/PhysRevX.5.041013}{\emph{Physical Review X} {\bfseries 5} (2015) 041013} [\href{https://arxiv.org/abs/1403.6491}{{\ttfamily 1403.6491}}].

\bibitem{HBFL1606.07816}
C.~Heinrich, F.~Burnell, L.~Fidkowski and M.~Levin, \emph{Symmetry-enriched string nets: exactly solvable models for {SET} phases}, \href{https://doi.org/10.1103/PhysRevB.94.235136}{\emph{Physical Review B} {\bfseries 94} (2016) 235136} [\href{https://arxiv.org/abs/1606.07816}{{\ttfamily 1606.07816}}].

\bibitem{CGJQ1606.08482}
M.~Cheng, Z.-C.~Gu, S.~Jiang and Y.~Qi, \emph{Exactly solvable models for symmetry-enriched topological phases}, \href{https://doi.org/10.1103/PhysRevB.96.115107}{\emph{Physical Review B} {\bfseries 96} (2017) 115107} [\href{https://arxiv.org/abs/1606.08482}{{\ttfamily 1606.08482}}].

\bibitem{LKW1507.04673}
T.~Lan, L.~Kong and X.-G.~Wen, \emph{Theory of (2+1)-dimensional fermionic topological orders and fermionic/bosonic topological orders with symmetries}, \href{https://doi.org/10.1103/PhysRevB.94.155113}{\emph{Physical Review B} {\bfseries 94} (2016) 155113} [\href{https://arxiv.org/abs/1507.04673}{{\ttfamily 1507.04673}}].

\bibitem{LKW1602.05946}
T.~Lan, L.~Kong and X.-G.~Wen, \emph{Classification of (2+1)-dimensional topological order and symmetry-protected topological order for bosonic and fermionic systems with on-site symmetries}, \href{https://doi.org/10.1103/PhysRevB.95.235140}{\emph{Physical Review B} {\bfseries 95} (2017) 235140} [\href{https://arxiv.org/abs/1602.05946}{{\ttfamily 1602.05946}}].

\bibitem{LKW1602.05936}
T.~Lan, L.~Kong and X.-G.~Wen, \emph{Modular extensions of unitary braided fusion categories and 2+1{D} topological/{SPT} orders with symmetries}, \href{https://doi.org/10.1007/s00220-016-2748-y}{\emph{Communications in Mathematical Physics} {\bfseries 351} (2017) 709} [\href{https://arxiv.org/abs/1602.05936}{{\ttfamily 1602.05936}}].

\bibitem{BJLP1811.00434}
M.~Bischoff, C.~Jones, Y.-M.~Lu and D.~Penneys, \emph{Spontaneous symmetry breaking from anyon condensation}, \href{https://doi.org/10.1007/JHEP02(2019)062}{\emph{Journal of High Energy Physics} {\bfseries 2019} (2019) } [\href{https://arxiv.org/abs/1811.00434}{{\ttfamily 1811.00434}}].

\bibitem{KLW+2003.08898}
L.~Kong, T.~Lan, X.-G.~Wen, Z.-H.~Zhang and H.~Zheng, \emph{Classification of topological phases with finite internal symmetries in all dimensions}, \href{https://doi.org/10.1007/JHEP09(2020)093}{\emph{Journal of High Energy Physics} {\bfseries 2020} (2020) 93} [\href{https://arxiv.org/abs/2003.08898}{{\ttfamily 2003.08898}}].

\bibitem{LZ2305.12917}
T.~Lan and J.-R.~Zhou, \emph{Quantum current and holographic categorical symmetry}, \href{https://doi.org/10.21468/scipostphys.16.2.053}{\emph{SciPost Physics} {\bfseries 16} (2024) 053} [\href{https://arxiv.org/abs/2305.12917}{{\ttfamily 2305.12917}}].

\bibitem{KZ1705.01087}
L.~Kong and H.~Zheng, \emph{Gapless edges of 2d topological orders and enriched monoidal categories}, \href{https://doi.org/10.1016/j.nuclphysb.2017.12.007}{\emph{Nuclear Physics B} {\bfseries 927} (2018) 140} [\href{https://arxiv.org/abs/1705.01087}{{\ttfamily 1705.01087}}].

\bibitem{CJK+1903.12334}
W.-Q.~Chen, C.-M.~Jian, L.~Kong, Y.-Z.~You and H.~Zheng, \emph{A topological phase transition on the edge of the 2d $\mathbb{Z}_2$ topological order}, \href{https://doi.org/10.1103/PhysRevB.102.045139}{\emph{Physical Review B} {\bfseries 102} (2020) 045139} [\href{https://arxiv.org/abs/1903.12334}{{\ttfamily 1903.12334}}].

\bibitem{KZ1905.04924}
L.~Kong and H.~Zheng, \emph{A mathematical theory of gapless edges of 2d topological orders. part i}, \href{https://doi.org/10.1007/JHEP02(2020)150}{\emph{Journal of High Energy Physics} {\bfseries 2020} (2020) 150} [\href{https://arxiv.org/abs/1905.04924}{{\ttfamily 1905.04924}}].

\bibitem{KZ1912.01760}
L.~Kong and H.~Zheng, \emph{A mathematical theory of gapless edges of 2d topological orders. part ii}, \href{https://doi.org/10.1016/j.nuclphysb.2021.115384}{\emph{Nuclear Physics B} {\bfseries 966} (2021) 115384} [\href{https://arxiv.org/abs/1912.01760}{{\ttfamily 1912.01760}}].

\bibitem{LY2208.01572}
Y.~Lu and H.~Yang, \emph{The boundary phase transitions of the 2+1d $\mathbb{Z}_n$ topological order via topological wick rotation}, \href{https://doi.org/10.1007/jhep06(2023)010}{\emph{Journal of High Energy Physics} {\bfseries 2023} (2023) } [\href{https://arxiv.org/abs/2208.01572}{{\ttfamily 2208.01572}}].

\bibitem{GJ1905.09566}
D.~Gaiotto and T.~Johnson-Freyd, \emph{Condensations in higher categories}, {\emph{arXiv e-prints} (2019) } [\href{https://arxiv.org/abs/1905.09566}{{\ttfamily 1905.09566}}].

\bibitem{KZ2011.02859}
L.~Kong and H.~Zheng, \emph{Categories of quantum liquids i}, \href{https://doi.org/10.1007/jhep08(2022)070}{\emph{Journal of High Energy Physics} {\bfseries 2022} (2022) 70} [\href{https://arxiv.org/abs/2011.02859}{{\ttfamily 2011.02859}}].

\bibitem{KZ2107.03858}
L.~Kong and H.~Zheng, \emph{Categories of quantum liquids ii}, \href{https://doi.org/10.48550/ARXIV.2107.03858}{\emph{arXiv e-prints} (2021) } [\href{https://arxiv.org/abs/2107.03858v3}{{\ttfamily 2107.03858v3}}].

\bibitem{ENO0909.3140}
P.~Etingof, D.~Nikshych and V.~Ostrik, \emph{Fusion categories and homotopy theory}, \href{https://doi.org/10.4171/QT/6}{\emph{Quantum Topology} {\bfseries 1} (2010) 209} [\href{https://arxiv.org/abs/0909.3140}{{\ttfamily 0909.3140}}].

\bibitem{Vaf89}
C.~Vafa, \emph{Quantum symmetries of string vacua}, \href{https://doi.org/10.1142/S0217732389001842}{\emph{Modern Physics Letters A} {\bfseries 04} (1989) 1615}.

\bibitem{BT1704.02330}
L.~Bhardwaj and Y.~Tachikawa, \emph{On finite symmetries and their gauging in two dimensions}, \href{https://doi.org/10.1007/jhep03(2018)189}{\emph{Journal of High Energy Physics} {\bfseries 2018} (2017) } [\href{https://arxiv.org/abs/1704.02330}{{\ttfamily 1704.02330}}].

\bibitem{RSS2204.02407}
K.~Roumpedakis, S.~Seifnashri and S.-H.~Shao, \emph{Higher gauging and non-invertible condensation defects}, \href{https://doi.org/10.1007/s00220-023-04706-9}{\emph{Communications in Mathematical Physics} {\bfseries 401} (2023) 3043} [\href{https://arxiv.org/abs/2204.02407}{{\ttfamily 2204.02407}}].

\bibitem{Kon1307.8244}
L.~Kong, \emph{Anyon condensation and tensor categories}, \href{https://doi.org/10.1016/j.nuclphysb.2014.07.003}{\emph{Nuclear Physics B} {\bfseries 886} (2014) 436} [\href{https://arxiv.org/abs/1307.8244}{{\ttfamily 1307.8244}}].

\bibitem{KZZZ2403.07813}
L.~Kong, Z.-H.~Zhang, J.~Zhao and H.~Zheng, \emph{Higher condensation theory}, \href{https://doi.org/10.48550/ARXIV.2403.07813}{\emph{arXiv e-prints} (2024) } [\href{https://arxiv.org/abs/2403.07813v1}{{\ttfamily 2403.07813v1}}].

\bibitem{Li2023PartialGauging}
Y.~Li, H.~Sukeno, A.P.~Mana, H.P.~Nautrup and T.-C.~Wei, \emph{Symmetry-enriched topological order from partially gauging symmetry-protected topologically ordered states assisted by measurements}, \href{https://doi.org/10.1103/physrevb.108.115144}{\emph{Physical Review B} {\bfseries 108} (2023) } [\href{https://arxiv.org/abs/2305.09747}{{\ttfamily 2305.09747}}].

\bibitem{delcamp2024}
C.~Delcamp and N.~Ishtiaque, \emph{Symmetry topological field theory and non-abelian kramers-wannier dualities of generalised ising models},  2024.

\bibitem{LOZ2301.07899}
L.~Li, M.~Oshikawa and Y.~Zheng, \emph{Non-invertible duality transformation between spt and ssb phases}, \href{https://doi.org/10.1103/physrevb.108.214429}{\emph{Phys. Rev. B 108, 214429 (2023)} {\bfseries 108} (2023) 214429} [\href{https://arxiv.org/abs/2301.07899}{{\ttfamily 2301.07899}}].

\bibitem{SS2404.01369}
S.~Seifnashri and S.-H.~Shao, \emph{Cluster state as a non-invertible symmetry protected topological phase},  \href{https://arxiv.org/abs/2404.01369}{{\ttfamily 2404.01369}}.

\bibitem{LSY2405.14939}
D.-C.~Lu, Z.~Sun and Y.-Z.~You, \emph{Realizing triality and $p$-ality by lattice twisted gauging in (1+1)d quantum spin systems},  \href{https://arxiv.org/abs/2405.14939}{{\ttfamily 2405.14939}}.

\bibitem{KWZ1502.01690}
L.~Kong, X.-G.~Wen and H.~Zheng, \emph{Boundary-bulk relation for topological orders as the functor mapping higher categories to their centers}, \href{https://doi.org/10.48550/ARXIV.1502.01690}{\emph{arXiv e-prints} (2015) } [\href{https://arxiv.org/abs/1502.01690}{{\ttfamily 1502.01690}}].

\bibitem{KZ1507.00503}
L.~Kong and H.~Zheng, \emph{The center functor is fully faithful}, \href{https://doi.org/10.1016/j.aim.2018.09.031}{\emph{Advances in Mathematics} {\bfseries 339} (2018) 749} [\href{https://arxiv.org/abs/1507.00503}{{\ttfamily 1507.00503}}].

\bibitem{hooft1980naturalness}
G.~Hooft, \emph{Naturalness, chiral symmetry, and spontaneous chiral symmetry breaking},  in \emph{Recent developments in gauge theories}, pp.~135--157, Springer (1980).

\bibitem{EGNO0301027}
P.~Etingof, S.~Gelaki, D.~Nikshych and V.~Ostrik, \emph{Tensor categories}, vol.~205 of \emph{Mathematical Surveys and Monographs}, American Mathematical Society, Providence, Rhode Island (July, 2015), \href{https://doi.org/10.1090/surv/205}{10.1090/surv/205}.

\bibitem{EO0301027}
P.~Etingof and V.~Ostrik, \emph{Finite tensor categories}, {\emph{Moscow Mathematical Journal} {\bfseries 4} (2004) 627} [\href{https://arxiv.org/abs/math/0301027}{{\ttfamily math/0301027}}].

\bibitem{Ost0111139}
V.~Ostrik, \emph{Module categories, weak {H}opf algebras and modular invariants}, \href{https://doi.org/10.1007/s00031-003-0515-6}{\emph{Transformation Groups} {\bfseries 8} (2003) 177} [\href{https://arxiv.org/abs/math/0111139}{{\ttfamily math/0111139}}].

\bibitem{CGW1008.3745}
X.~Chen, Z.-C.~Gu and X.-G.~Wen, \emph{Classification of gapped symmetric phases in 1d spin systems}, \href{https://doi.org/10.1103/physrevb.83.035107}{\emph{Physical Review B} {\bfseries 83} (2011) 035107} [\href{https://arxiv.org/abs/1008.3745}{{\ttfamily 1008.3745}}].

\bibitem{CGW1103.3323}
X.~Chen, Z.-C.~Gu and X.-G.~Wen, \emph{Complete classification of one-dimensional gapped quantum phases in interacting spin systems}, \href{https://doi.org/10.1103/PhysRevB.84.235128}{\emph{Physical Review B} {\bfseries 84} (2011) 235128} [\href{https://arxiv.org/abs/1103.3323}{{\ttfamily 1103.3323}}].

\bibitem{SPC1010.3732}
N.~Schuch, D.~Perez-Garcia and I.~Cirac, \emph{Classifying quantum phases using matrix product states and peps}, \href{https://doi.org/10.1103/physrevb.84.165139}{\emph{Physical Review B} {\bfseries 84} (2011) 165139} [\href{https://arxiv.org/abs/1010.3732}{{\ttfamily 1010.3732}}].

\bibitem{Dec2208.08722}
T.D.~Décoppet, \emph{The morita theory of fusion 2-categories}, \href{https://doi.org/10.21136/hs.2023.07}{\emph{Higher Structures} {\bfseries 7} (2023) 234–292} [\href{https://arxiv.org/abs/2208.08722}{{\ttfamily 2208.08722}}].

\bibitem{Dec2211.04917}
T.D.~Décoppet, \emph{Drinfeld centers and morita equivalence classes of fusion 2-categories}, \href{https://doi.org/10.48550/ARXIV.2211.04917}{\emph{arXiv e-prints} (2022) } [\href{https://arxiv.org/abs/2211.04917}{{\ttfamily 2211.04917}}].

\bibitem{YZ2309.15118}
W.~Ye and L.~Zou, \emph{Classification of symmetry-enriched topological quantum spin liquids}, \href{https://doi.org/10.1103/physrevx.14.021053}{\emph{Physical Review X} {\bfseries 14} (2024) 021053} [\href{https://arxiv.org/abs/2309.15118}{{\ttfamily 2309.15118}}].

\bibitem{xi2024}
W.~Xi, T.~Lan, L.~Wang, C.~Wang and W.-Q.~Chen, \emph{On a class of fusion 2-category symmetry: condensation completion of braided fusion category}, {\emph{arXiv e-prints} (2024) } [\href{https://arxiv.org/abs/2312.15947}{{\ttfamily 2312.15947}}].

\bibitem{YWL2024}
G.~Yue, L.~Wang and T.~Lan, \emph{Condensation completion and defects in 2+1d topological orders}, {\emph{arXiv e-prints} (2024) } [\href{https://arxiv.org/abs/2402.19253}{{\ttfamily 2402.19253}}].

\bibitem{Douglas_2019}
C.L.~Douglas, C.~Schommer-Pries and N.~Snyder, \emph{The balanced tensor product of module categories}, \href{https://doi.org/10.1215/21562261-2018-0006}{\emph{Kyoto Journal of Mathematics} {\bfseries 59} (2019) } [\href{https://arxiv.org/abs/1406.4204}{{\ttfamily 1406.4204}}].

\end{thebibliography}\endgroup
% \bibliography{../../ref/all.bib}

\end{document}